\newcommand{\todo}[1]{\textbf{\textcolor{red}{[#1]}}}
\newcolumntype{L}[1]{>{\raggedright\let\newline\\\arraybackslash\hspace{0pt}}m{#1}}
\newcolumntype{C}[1]{>{\centering\let\newline\\\arraybackslash\hspace{0pt}}m{#1}}
\newcolumntype{R}[1]{>{\raggedleft\let\newline\\\arraybackslash\hspace{0pt}}m{#1}}
\definecolor{tuftsblue}{RGB}{31,55,108}
\titleformat*{\section}{\normalfont\Large\bfseries\color{tuftsblue}}
\titleformat*{\subsection}{\normalfont\large\bfseries\color{tuftsblue}}
\titleformat*{\subsubsection}{\normalfont\bfseries\color{tuftsblue}}
\titleformat*{\paragraph}{\normalfont\bfseries\color{tuftsblue}}
\titleformat{\paragraph}[runin]{\normalsize\bfseries\color{tuftsblue}}{\theparagraph}{1em}{}
\newcommand{\X}{\bm{X}}
\newcommand{\x}{\bm{x}}
\newcommand{\Z}{\bm{Z}} 
\newcommand{\Y}{\bm{Y}} 
\newcommand{\z}{\bm{z}} 
\newcommand{\U}{\bm{U}}
\newcommand{\D}{\bm{D}}
\renewcommand{\a}{\boldsymbol{a}} 
\renewcommand{\b}{\boldsymbol{b}} 
\renewcommand{\v}{\boldsymbol{v}}
\renewcommand{\u}{\boldsymbol{u}}
\newcommand{\y}{\boldsymbol{y}}
\renewcommand{\r}{\boldsymbol{r}}
\newcommand{\rb}{\bm{r}_{\x}} 
\newcommand{\ry}{\bm{r}_{\y}} 
\newcommand{\rone}{\bm{r}_{\x}^*}
\newcommand{\Xone}{ \bm{X}^* }
\newcommand{\xone}{ \bm{x}^* }
\newcommand{\boldone}{\bm{1}}
\newcommand{\bias}[1]{ \textrm{Bias} \left( #1 \right) } 
\newcommand{\betaxp}[1]{\beta_x^{#1}}
\newcommand{\Rtwo}[1]{\bm{C} \left( \theta_{#1} \right)}
\newcommand{\Rcovst}[1]{C \left( d, \theta_{#1} \right)}
\newtheorem{thm}{Theorem}
\newtheorem{definition}{Definition}
\newtheorem{notation}{Notational Convention}
\newtheorem{rmk}{Remark}
\newtheorem{lemma}{Lemma}
\newtheorem{cor}{Corollary}
\newtheorem{analysis}{Adjusted Spatial Analysis Method}
\newcommand{\Sigmah}{\hat{\bm{\Sigma}}}
\newcommand{\Sigmahinv}{\hat{\bm{\Sigma}}^{-1}}
\newcommand{\sigmainvnorm}[1]{ || #1 ||_{\bm{\Sigmah}^{-1}}}
\newcommand{\sigmainvnormx}[1]{ || #1 ||_{\bm{\Sigmah}_x^{-1}}}
\newcommand{\euclnorm}[1]{ || #1 ||}
\newcommand{\sigmainvdot}[2]{ \langle #1, #2 \rangle_{\bm{\Sigmah}^{-1}}}
\newcommand{\sigmainvdotx}[2]{ \langle #1, #2 \rangle_{\bm{\Sigmah}_x^{-1}}}
\newcommand{\euclmetric}[2]{ \langle #1, #2 \rangle}
\newcommand{\sigmahatgen}[1]{  \bm{\Sigmah}_{#1}^{-1}}
\newcommand{\euclangle}[2]{\theta_{#1,#2}}
\newcommand{\sigmainvangle}[2]{\phi_{#1,#2}}
\newcommand{\ISdot}[2]{ \langle #1, #2 \rangle_{(\bm{I} - \hat{\bm{S}}_y)}}
\newcommand{\cb}[1]{{\color{blue}CB note: #1}}
\title{Re-thinking Spatial Confounding in Spatial Linear Mixed Models}
\date{}
\author[1]{Kori Khan}
\author[2]{Candace Berrett}
\affil[1]{Department of Higher Education, State of Ohio}
\affil[2]{Department of Statistics, Brigham Young University}
\begin{document}

\maketitle

\begin{abstract}
\sloppypar{In the last two decades, considerable research has been devoted to a phenomenon known as spatial confounding. Spatial confounding is thought to occur when there is multicollinearity between a covariate and the random effect in a spatial regression model. This multicollinearity is considered highly problematic when the inferential goal is estimating regression coefficients and various methodologies have been proposed to attempt to alleviate it. Recently, it has become apparent that many of these methodologies are flawed, yet the field continues to expand. In this paper, we offer a novel perspective of synthesizing the work in the field of spatial confounding. We propose that at least two distinct phenomena are currently conflated with the term spatial confounding. We refer to these as the ``analysis model'' and the ``data generation'' types of spatial confounding. We show that these two issues can lead to contradicting conclusions about whether spatial confounding exists and whether methods to alleviate it will improve inference. Our results also illustrate that in most cases, traditional spatial linear mixed models \textit{do} help to improve inference on regression coefficients. Drawing on the insights gained, we offer a path forward for research in spatial confounding.}
\end{abstract}

\section{Introduction}
In myriad applications, standard regression models for spatially referenced data can result in spatial dependence in the residuals. For the better part of a century, the solution to this problem was to use a spatial regression model. In these models, a spatial random effect was introduced to account for the residual spatial dependence and thereby (theoretically) improve inference, whether the inferential goal was associational or predictive. 

\sloppypar{This practice continued unchallenged until about two decades ago. At that time, a phenomenon now known as spatial confounding was introduced by \citet{Reich} and \citet{Hodges_fixedeffects} \citep[see also,][]{Pac_spatialconf}. If, historically, spatial statisticians believed that incorporating spatial dependence with spatial regression models would improve inference, now those interested in spatial confounding suggest that incorporating spatial dependence with traditional models will distort inference. Originally focused on a setting where the estimation of individual covariate effects was important, interest in spatial confounding has since expanded to other inferential focuses \citep[e.g.,][]{page2017estimation,papadogeorgou2019adjusting}. Spatial confounding is typically described as occurring when there is multicollinearity between a spatially referenced covariate and a spatial random effect. It is thought to be quite problematic. For example, \citet{marques2022mitigating} states spatial confounding can lead to ``severely biased'' regression coefficients, \citet{Reich} asserts that it can lead to ``large changes'' in these estimates, and \citet{prates2019alleviating} argues that both the ``sign and relevance of covariates can change drastically'' in the face of spatial confounding.}

Although many of these claims are not empirically supported, research into spatial confounding and methods to alleviate it has exploded \citep{Hughes,Hanks,Hefley,Thaden,chiou2019adjusted,prates2019alleviating,Keller_spatialconf,adin2021alleviating,azevedo2021mspock,hui2021spatial,nobre2021effects,azevedo2022alleviating,dupont2022spatial+,marques2022mitigating}. A closer look at the body of work highlights inconsistencies in definitions of spatial confounding as well as the purported impact it can have on inference \citep{Hanks,khan2020restricted,nobre2021effects,zimmerman2021deconfounding,narc:2024}. Recently, many of the methods designed to alleviate spatial confounding have been shown to lead to counterintuitive results by \citet{khan2020restricted} and have even been classified as ``bad statistical practice'' \citep{zimmerman2021deconfounding}. Yet, efforts to study and alleviate spatial confounding continue without any attempt to address these observations, increasingly influencing new fields of study such as causal inference and even criminology \citep{reich2021review,kelling2021modeling}.

In this paper, we, 
\begin{enumerate}
    \item synthesize the existing body of work in spatial confounding, reviewing it in the context of historical practices from spatial statistics; 
    \item characterize two distinct, albeit related, phenomena currently conflated with the term spatial confounding; and 
    \item show, through theoretical and simulation results, that these two issues can lead to contradicting conclusions about whether spatial confounding exists and whether methods to alleviate it will improve inference. 
\end{enumerate}
Importantly, by examining spatial confounding in this way, these three key understandings show how ignoring the nuances of  ``spatial confounding'' can lead to methodologies that distort inferences in the very settings they are designed to be used.

The rest of this paper is organized as follows: We introduce the analytical setup in \cref{sec:background}. Using this setup, we provide an overview of spatial confounding in the broader context of spatial statistics. \cref{sec:nomenclature} provides a framework for understanding the two types of spatial confounding and illustrates how current (and past) research fits into this scheme. It also explores how efforts to mitigate spatial confounding can be organized into this framework. \cref{sec:mainresults} introduces theoretical results assessing the impact of both sources of spatial confounding on the bias for a regression coefficient.  In \cref{sec:simstudies}, we use simulation studies to explore settings identified in the literature as situations in which spatial confounding will lead to increased bias in regression coefficient settings. We illustrate that traditional spatial analysis models often outperform non-spatial models and models designed to alleviate spatial confounding in these cases. Finally, in \cref{sec:conclusion}, we propose a clear path toward resolving the contradictions explored in this paper.

\section{Background} \label{sec:background}
We begin by introducing the analytical setup that will be used throughout the rest of the paper (\cref{analytical}). We then use it to briefly explain, using the historical background of spatial modeling (\cref{subsec:spatregmodels}), how spatial confounding became a topic of concern in spatial statistics research and explore where it has gone since (\cref{subsec:histconf}).

\subsection{Analytical Set-Up} \label{analytical}
This paper distinguishes between a \textit{data generating} model and an \textit{analysis} model. The former is a model meant to approximate how the data likely arose, while the latter is used to analyze the observed data.

Spatial regression models are traditionally used when there is residual spatial dependence after accounting for measured variables. Residual spatial dependence is thought to be the result of either an unobserved, spatially varying variable or an unobserved spatial process \citep[][]{Waller}. To define a data-generating model, we focus on the former as this most closely matches the intuition motivating efforts to mitigate spatial confounding \citep[see e.g.,][]{Reich,Pac_spatialconf,page2017estimation,dupont2022spatial+}.

Specifically, we assume $y_i$ is observed at location $\bm{s}_i \in \mathbb{R}^2$ for $i= 1, \ldots, n$ and it can be modeled as follows:
\begin{flalign} \label{eq:model_0}
	\textrm{\textbf{Generating Model: }}    y_i (\bm{s}_i) = \beta_0 + \beta_x x_i (\bm{s}_i) + \beta_z z_i (\bm{s}_i) + \epsilon_i,
\end{flalign}
where $\x\left( \bm{s} \right) = \left( x_1(\bm{s}_1), \ldots, x_n (\bm{s}_n) \right)^T$ and $\z \left( \bm{s} \right)= \left( z_1(\bm{s}_1), \ldots, z_n (\bm{s}_n) \right)^T$ are each univariate variables, $\bm{\epsilon} = \left( \epsilon_1, \ldots, \epsilon_n \right)^T$ is the vector of errors with mean $\bm{0}$ and variance-covariance matrix $\sigma^2 \bm{I}$, and $\bm{\phi} = \left( \beta_0, \beta_x, \beta_z, \sigma^2 \right)^T$ are unknown parameters. 

Throughout this paper, we assume that $\x \left( \bm{s} \right)$ and $\y\left( \bm{s} \right)$ are observed and $\z\left( \bm{s} \right)$ is unobserved. We also assume that the primary inferential interest is on $\beta_x$. We consider three possible approaches to modeling the relationship between $\y \left( \bm{s} \right)$ and $\x \left( \bm{s} \right)$: 1) a non-spatial linear approach, 2) a ``traditional'' spatial approach, and 3) an ``adjusted'' spatial approach. Each framework is associated with one or more analysis models that can be fit to the observed $\y\left( \bm{s} \right)$ and $\x\left( \bm{s} \right)$. 
\begin{eqnarray}
	\textrm{ \small \textbf{Non-Spatial Analysis Model:}}&&  y_i (\bm{s}_i)= \beta_0 + \betaxp{NS} x_i (\bm{s_i})  + \epsilon_i  \label{eq:OLSmodel} \\
\textrm{\small \textbf{Spatial Analysis Model:}}&&  y_i (\bm{s}_i)= \beta_0 + \betaxp{S} x_i (\bm{s}_i) + g(
	\bm{s}_i) + \epsilon_i \label{eq:genericSpatial}\\
\textrm{ \small\textbf{Adj. Spatial Analysis Model:}}&&  \tilde{y}_i (\bm{s}_i)= \beta_0 + \betaxp{AS} \tilde{x}_i (\bm{s}_i) + h(
	\bm{s}_i) + \epsilon_i \label{eq:genericadjSpatial}
\end{eqnarray}
For \eqref{eq:OLSmodel}--\eqref{eq:genericadjSpatial}, $\epsilon_i$ are i.i.d with mean $0$ and unknown variance $\sigma^2$. The regression coefficients $\beta_0$, $\betaxp{NS}$, $\betaxp{S}$, and $\betaxp{AS}$ are unknown. We note that $\sigma^2$ and $\beta_0$ will vary based on the analysis model chosen. In other words, to be precise, we would use notation such as $\beta_0^{NS}$,$\beta_0^{S}$, and $\beta_0^{AS}$. As our primary interest is $\beta_x$, we refrain from doing so for the sake of simplicity. 

The spatial random effects $g(\bm{s})$ and $h(\bm{s})$ are assumed to have mean zero and unknown, positive-definite variance-covariance matrices. We note that models relying on Gaussian Markov Random Fields (GMRFs) can be considered as special cases of this if the variance-covariance matrices are defined to be pseudo-inverses of the singular precisions \citep{Pac_ICAR}. The tildes over $\y\left( \bm{s} \right)$ and $\x\left( \bm{s} \right)$ in \eqref{eq:genericadjSpatial} reflect that they may be functions of the originally observed $\y\left( \bm{s} \right)$ and $\x\left( \bm{s} \right)$ respectively. In future sections, we distinguish between a realization of the variables $\x\left( \bm{s} \right)$ and $\z\left( \bm{s} \right)$ and the stochastic processes that could have generated such realizations. We use capital letters (e.g., $\X(\bm{s})$ and $\Z(\bm{s})$ ) to refer to stochastic processes and lowercase letters to indicate a realization of the variables (e.g., $\x(\bm{s})$ and $\z(\bm{s})$ ). After this, we drop the notation indicating the dependence on spatial location unless it is needed for clarity. 

We take a moment here to distinguish between two different types of spatial data: geostatistical or point-referenced data and areal data.  For more details on the two types of data, see for example, \cite{Cressie93, banerjee2004, Waller}.  Geostatistical data are observed at specific points over a continuous field; for example, soil moisture measurements are observed at specific locations in a field and temperature measurements are obtained from weather stations at specific coordinates.  In contrast, areal data are observed (often an aggregation of measurements) over a partition within a region of interest.  For example, average income level or other demographic variables are generally reported for an entire county in the United States.  Geostatistical data traditionally model dependence by using the distance (Euclidean, great circle, or other) between the locations of interest and a correlation function.  Areal data often model dependence through a neighborhood matrix, where neighbors are defined by, for example, whether the partitions share a border or not.  Both types of modeling approaches account for Tobler's First Law of Geography: things closer together are more similar (correlated) than things farther apart \citep{tobler1970}.  However, the strength of that correlation and how fast or slow that correlation decays will inherently be different. For example, for areal data, the correlation decays more slowly when each partition has a larger number of neighbors; whereas for geostatistical data, the correlation decays more slowly due to the parameters of the correlation function.  That said, \cite{rue:2002} showed that the dependence structure of one form can approximate the dependence structure of another quite well for a fixed set of locations.  In future sections we examine these two types of data separately.  However, for now, we’ll simplify matters by focusing solely on the general precision matrix related to the spatial dependence.  

\subsection{Spatial Models} \label{subsec:spatregmodels}
When there is residual spatial dependence, the conventional wisdom in spatial statistics literature is that a model that accounts for this will offer better inference than a model that does not account for it \citep{Cressie93,Waller,bivand2008applied}. Historically, this view first appeared in the context of geostatistics and interpolation efforts.  There, the goal was to improve predictions for the values of a stochastic process at unobserved locations \citep[e.g.,][]{wikle2010low}. In other words, $\betaxp{S}$ in \eqref{eq:genericSpatial} was merely a tool to de-trend the data, and the primary interest was often estimating the variance-covariance matrix of the spatial random effect. The idea that accounting for spatial dependence improves inference later inspired many popular spatial models proposed for areal data. These models were often developed with the goal, either implicit or explicit, of ensuring that $\betaxp{S}$ in \eqref{eq:genericSpatial} was ``close'' to $\beta_x$ in \eqref{eq:model_0} \citep{besag1991bayesian,Hodges_fixedeffects}. In recent decades, the lines delineating methods for geostatistical data and areal data have become blurred with advancements in computing and the popular class of models proposed by \citet{Diggle}. However, across analysis goals and data types, the consensus continued that models accounting for spatial dependence should be preferred over models that did not account for spatial dependence.

Recently, however, this view has shifted. The challenge to the prevailing view arose in a line of research about a phenomenon now known as ``spatial confounding.''

\subsection{Spatial Confounding}\label{subsec:histconf}
\citet{Clayton} is often referenced as the first article to describe spatial confounding. These authors noticed what they referred to as ``confounding by location:''  estimates for regression coefficients changed when a spatial random effect was added to the analysis model. \citet{Clayton} interpreted this as a favorable change -- one in which the estimates of the association between a response and an observed covariate were adjusted to account for an unobserved spatially-varying confounder \citep[see also,][]{Hodges_fixedeffects}. The modern conceptualization of spatial confounding arose in work by \citet{Reich} and \citet{Hodges_fixedeffects}. These articles were the first to suggest that fitting spatial models could induce bias in the estimates of the regression coefficients and an ``over-inflation'' of the uncertainty associated with these estimates. These works have inspired a serious and active line of research into the phenomena of spatial confounding \citep{Pac_spatialconf,Hughes,Hefley,Thaden,prates2019alleviating,azevedo2021mspock,nobre2021effects,yang2021estimation,dupont2022spatial+,marques2022mitigating}.

\sloppypar{Spatial confounding is almost always introduced as an issue of multicollinearity between a spatially varying covariate and a spatial random effect in a spatial analysis model \citep{Reich,Hodges_fixedeffects,Hefley,Thaden,reich2021review,dupont2022spatial+}. This statement is often deemed sufficient to identify the phenomena of spatial confounding. However, there is no consensus on a formal definition for spatial confounding. While there have been two previous efforts to formalize spatial confounding, both were definitions considering special cases of a broader phenomenon \citep{Thaden,khan2020restricted}.}

Despite the ambiguity of spatial confounding as a concept, researchers using the term have developed shared expectations for the phenomenon. These expectations have, in turn, shaped multiple methods aimed at alleviating spatial confounding. Some researchers have noticed inconsistencies and contradictions in some of the conclusions reached by the spatial confounding literature. For example, \citet{Hanks} and \citet{nobre2021effects} have both observed that a distortion in inference for $\beta_x$ can occur in the absence of stochastic dependence between $\x$ and $\z$, contradicting some stated expectations for spatial confounding. These inconsistencies have largely remained unresolved even as research on spatial confounding has increasingly begun influencing other lines of  work, such as causal inference \citep[e.g.,][]{papadogeorgou2019adjusting,reich2021review}.

We propose that some of these contradictions arise because at least two distinct categories of issues are being studied by researchers in spatial confounding. Loosely speaking, we can think of these categories as encompassing a data generation phenomenon and an analysis model phenomenon. Importantly, once teased apart, these two issues can lead to different conclusions about whether spatial confounding is present and whether spatial analysis models should be adjusted.

\section{Types of Spatial Confounding} \label{sec:nomenclature}
As previously noted, spatial confounding is typically described as an issue of multicollinearity between a spatially varying covariate and a spatial random effect in a spatial analysis model. It appears, however, that researchers can disagree about the source of multicollinearity as well as what it means for a covariate to be spatially varying (in a problematic sense). In this section, we tease apart what we refer to as analysis model spatial confounding (\cref{subsec:analysis}) and data generation spatial confounding (\cref{subsec:datagen}).  In \cref{fig:organization}, we summarize how the problematic relationships that are thought to cause spatial confounding differ by these two types of spatial confounding, and we  elaborate on these relationships below. We emphasize that this framework is not currently in use. Instead, it is a novel attempt meant to help organize some of the existing conceptualizations of spatial confounding in the literature. Importantly, many articles can have references to both types of spatial confounding within them. In the following discussion, we sort works based on the article's primary focus. We frame the recommendations for alleviating each type of confounding in the context of these two sources in \cref{subsec:alleviating}.

\begin{figure}
\begin{center}
\begin{tikzpicture}[place/.style={rectangle,draw=black},place2/.style={rectangle,draw=white},
   transition/.style={rectangle,draw=black!50,fill=black!20,thick}]
  \node[place] (1) {\textbf{Spatial Confounding}};
  \node[place2] (3) [below=of 1] {\hspace*{.1in}};
  \node[place] (7) [left=of 3] {\textbf{Analysis Model}};
  \node[place] (2) [right=of 3] {\textbf{Data Generation}};
 \node[place, align=center] (10) [below=of 7, label=above:\emph{Problematic Relationship:}]{$\x$ and $\hat{\bm{\Sigma}}^{-1}$ in\\ Spatial Analysis Model (3)};
  \draw [thick,->, shorten >= .1in] (1) to (7);
  \draw [thick,->, shorten >= .1in] (1) to (2);
  \node[place, align=center] (8) [below=of 2, label=above:\emph{Problematic Relationship:}] {$\x$ and $\z$ (or $\X$ and $\Z$) in\\  Generating Model (1)};
   \draw [thick,->, shorten >= .25in] (2) to (8);
   \draw[thick,->, shorten >= .25in] (7) to (10);
\end{tikzpicture}
\captionof{figure}{A diagram to illustrate the primary sources of spatial confounding by type of spatial confounding (analysis model or data generation).}
\label{fig:organization}
\end{center}
\end{figure}
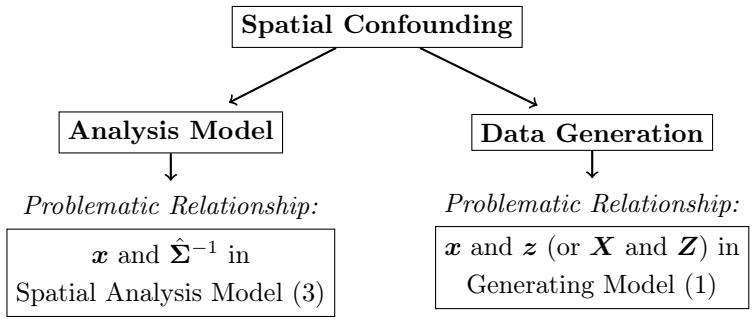

\subsection{Analysis Model Spatial Confounding} \label{subsec:analysis}
\citet{Reich} and \citet{Hodges_fixedeffects} are the works that introduced the modern conceptualization of spatial confounding. These papers, and many of the works they inspired, focused on what we will refer to as analysis model spatial confounding. Research motivated by the analysis model issue often does not consider how $\y$ or $\x$ were generated. In other words, these works do not assume there is a missing $\z$ or a data generation model of the form \eqref{eq:model_0} \citep{Hodges_fixedeffects}. Instead, this conceptualization of spatial confounding focuses on the relationship between an observed $\x$ and the spatial random effect in a spatial analysis model \citep{Reich,Hodges_fixedeffects,Hughes,Hanks,Hefley,prates2019alleviating,azevedo2021mspock,hui2021spatial}. 

In this line of work, identifying the problematic source of multicollinearity and defining what it means for $\x$ to be spatially varying both rely on the analysis model. More specifically, in the context of our analytical set-up, they typically rely on the  eigenvectors of the estimated precision matrix $\hat{\bm{\Sigma}}^{-1}$ of the spatial random effect $g(\bm{s})$ in \eqref{eq:genericadjSpatial} \citep{Reich,Hanks,Hefley,prates2019alleviating,azevedo2021mspock}. For example, statistics  developed to identify spatial confounding involve using both the observed $\x$ and the estimated precision matrix for a particular spatial analysis model \citep{Reich,Hefley,prates2019alleviating}. These statistics all identify, loosely, whether $\x$ is correlated with low-frequency eigenvectors of a decomposition of $\hat{\bm{\Sigma}}^{-1}$. Similarly, $\x$ is considered spatially varying (in a problematic sense) if it is highly correlated with such a low-frequency eigenvector of $\hat{\bm{\Sigma}}^{-1}$. We note that, in spatial confounding literature, no one has precisely defined what it means for an eigenvector to be low-frequency \citep[but, see][]{Reich_var}. However, when displayed graphically, they tend to show spatial patterns where nearby things are more similar than others. Thus, the problematic relationship which causes data generation spatial confounding is thought to be primarily between $\x$ and $\hat{\bm{\Sigma}}^{-1}$, as summarized in \cref{fig:organization}.

There are several common beliefs underlying work focused on this conceptualization of spatial confounding. First, spatial confounding occurs as a result of fitting a spatial analysis model. While a distortion to inference can be expected in any spatial analysis model \citep{Hodges_fixedeffects}, it is plausible that the degree of distortion may vary based on the particular analysis model chosen \citep[see e.g.,][]{Hefley}. Second, efforts should be taken to determine whether spatial confounding needs to be adjusted for in the analysis model. In this line of work, many authors acknowledge that it is not clear when spatial confounding needs to be accounted for \citep{Hanks,prates2019alleviating,hui2021spatial}. In other words, there is at least an implicit understanding that spatial analysis models may still be preferable over adjusted spatial analysis models at times. Finally, determining whether spatial confounding exists will involve studying characteristics of the observed data (in particular $\x$) along with properties of the chosen analysis model.

\subsection{ Data Generation Spatial Confounding } \label{subsec:datagen}
In work that focuses on data generation spatial confounding, researchers often do assume that $\y$ is generated from a model of the form \eqref{eq:model_0}. In the context of our analytical set-up, the interest is typically on how the relationship between $\X$ or $\Z$ (or alternatively $\x$ and $\z$) impacts inference on $\betaxp{S}$ when a spatial analysis model of the form \eqref{eq:genericSpatial} is used to fit the data \citep{Pac_spatialconf,page2017estimation,Thaden,nobre2021effects,dupont2022spatial+,narc:2024}. 

In this line of work, spatial confounding is often defined as an issue of multicollinearity (often defined as strong correlation) between $\x$ and $\z$ \citep[or $\X$ and $\Z$; see, for example,][]{dupont2022spatial+,Thaden}. However, the source of the multicollinearity and the definition of spatially varying (in the problematic sense) are not always clear. \citet{Pac_spatialconf} has shaped much of the current work focused on data generation spatial confounding, as well as many of the most recent methods designed to alleviate spatial confounding \citep[see e.g.,][]{page2017estimation,Thaden,keller2020selecting,dupont2022spatial+,marques2022mitigating,narc:2024}. In that article and the many that followed, researchers make assumptions about the variables ($\x$ and $\z$) or the stochastic processes that generated them ($\X$ and $\Z$).

Researchers who focus on $\X$ and $\Z$ often assume that these processes are generated from spatial random fields parameterized by some set of known parameters \citep{Pac_spatialconf,page2017estimation,nobre2021effects,narc:2024}. $\X$ and $\Z$ are typically assumed to be generated in such a way that $\X$ has two components of spatial structure: 1) one that is shared with $\Z$ (the confounded component), and 2) one that is not shared with $\Z$ (the unconfounded component). Based on characteristics of these assumed processes, theoretical results or observations have been used to identify when fitting a spatial analysis model of the form \eqref{eq:genericSpatial} will distort inference on $\beta_x$ \citep{Pac_spatialconf,nobre2021effects}. In other words, the problematic relationship is between $\X$ and $\Z$, as summarized in \cref{fig:organization}. %For example, \citet{Pac_spatialconf} found that when the confounded spatial range parameter is smaller than the unconfounded spatial range parameter, then a spatial analysis model will give a more biased estimate of $\beta_x$ than a non-spatial model. 

Most of the theoretical results related to the data generation source of spatial confounding focus on the processes, $\X$ and $\Z$. However, when it comes to methods designed to alleviate spatial confounding, there can be assumptions made about the specific realizations $\x$ and $\z$. For example, \citet{Thaden}  and \citet{dupont2022spatial+} assume that $\x$ is a linear combination of $\z$ and Gaussian noise. In these cases, $\z$ is either chosen to have a fixed spatial structure or is generated from a spatial random field or process. The focus on the relationship between $\X$ and $\Z$ suggests that the problematic multicollinearity is between $\x$ and $\z$, as summarized in \cref{fig:organization}. The fact that some methods designed to alleviate spatial confounding focus on situations where $\x$ and $\z$ are collinear or highly correlated supports this idea. However, the theoretical results in this line of work are usually not related to characteristics of the observed realization $\x$ (or $\z$), and the assumptions made in the theoretical results do not always ensure \emph{empirical} correlation between a given set of realizations $\x$ and $\z$. Similarly, the characteristics of a particular realization $\x$ are not assessed in determining whether it is spatially dependent in a problematic sense. The underlying belief may be that if $\x$ and $\z$ are collinear or highly correlated and ``spatial,'' there will be collinearity between $\x$ and a spatial random effect in an analysis model. However, papers in this line of work spend very little time discussing the impact of spatial analysis models. For example, \citet{Thaden}  defines spatial confounding as occurring when: 1) $\X$ and $\Z$ are stochastically dependent, 2) $\textrm{E} \left( \Y | \X, \Z \right) \neq  \textrm{E} \left( \Y | \X \right)$, and 3) $\Z$ has a ``spatial'' structure. Notice that this definition emphasizes the relationship between $\X$, $\Z$, and $\Y$, and it mirrors more general definitions of confounders in causal inference research. It is not entirely clear what it means for $\X$ to have spatial structure or why it is problematic for $\X$ to have such a structure. More importantly, by this definition, spatial confounding exists regardless of the analysis model chosen. 

We note that not every paper completely ignores the analysis model. For example, \citet{dupont2022spatial+} explicitly stated they were viewing spatial confounding from the perspective of fitting spatial models via thin-plate splines. While they stated that the smoothing that comes from fitting a spatial model  contributes to the problem, the relationship between $\x$ and $\z$ remained emphasized. For example, the authors emphasized ``if the correlation between the covariate and the spatial confounder is high, the smoothing applied to the spatial term in the model can disproportionately affect the estimate of the covariate effect.'' In other words, it did not appear that the smoothing alone was problematic. For this reason, we group this work here, rather than the analysis model spatial confounding, although we note this work is one with elements of both types of spatial confounding. %The authors  also stated the approach is suitable for various other spatial formulations, including models that rely on Gaussian Markov Random Fields \citep{dupont2022spatial+}. %In this paper, we will focus on one such model, the intrinsic conditional auto-regressive (ICAR) model. 
%Subsequent work has claimed both approaches are useful for other types of spatial models \citep{schmidt2021discussion,dupont2022spatial+}

We take a moment to highlight several notable beliefs commonly found in work that focuses on the data generation spatial confounding.  These beliefs may or may not be true, but appear to be consistent across this line of work.  First, the primary source of spatial confounding comes from the (potentially unknown) process that generated the data rather than the model fitting process. Second, fitting a spatial analysis model will lead to distortion in inferences when spatial confounding is present. However, here, spatial analysis models -- whether of the form of \eqref{eq:genericSpatial}, a generalized additive model (GAM), or something else -- are often treated as interchangeable. There is often no exploration of the impact of a particular choice of a spatial model on inference, and inferior inferences for one type of spatial model are assumed to hold for other spatial models. Finally, it seems researchers assume the observed data (i.e., $\y$ and $\x$) do not give insight into whether spatial confounding is present or should be accounted for in analyses.

\subsection{Approaches to Alleviating Spatial Confounding} \label{subsec:alleviating}
There have been numerous methods designed to alleviate spatial confounding. In this sub-section, we take a moment to point out that most of them can be categorized as being motivated by either the analysis model or data generation type of spatial confounding, as outlined in \cref{fig:relieve}. 

The first methods to alleviate spatial confounding were motivated by the analysis model source of spatial confounding. For areal analyses, \citet{Reich} and \citet{Hodges_fixedeffects} first proposed a methodology sometimes known as restricted spatial regression. This method suggested to, in a sense,  replace the spatial random effect $g(\bm{s})$ in a spatial analysis model with a new spatial random effect $h(\bm{s})$ in an adjusted spatial analysis model. This new spatial random effect is projected onto the orthogonal complement of the column space of $\x$. By ``smoothing'' orthogonally to the fixed effects, this methodology aimed to alleviate collinearity between the $\x$ and the estimated variance-covariance matrix of $h(\bm{s})$. In doing so, it directly addresses the analysis model source of confounding. This approach motivated and continues to motivate many further methodologies designed to alleviate spatial confounding \citep{Hughes,Hanks,chiou2019adjusted,prates2019alleviating,adin2021alleviating,azevedo2021mspock,hui2021spatial,marques2022mitigating}. Most of these methods continue to involve changing the spatial random effect (or analogous of it for other models) in the spatial analysis model. In other words, the adjustment from a model of the form \eqref{eq:genericSpatial} to \eqref{eq:genericadjSpatial} primarily involves replacing the spatial random effect, and the data remains unaltered. As noted previously, these adjusted analysis models are typically offered with the caveat that there may be some situations when traditional analysis models would be more appropriate (although there is no consensus on when that is). 

In this paper, we do not explore methods influenced by restricted spatial regression in the rest of the paper. Recently, restricted spatial regression models have been shown to perform poorly. \citet{khan2020restricted} demonstrated that inference on $\beta_x$ is often worse with restricted spatial regression analysis models than with non-spatial analysis models. \citet{zimmerman2021deconfounding} subsequently offered a more in-depth, thorough review of restricted spatial regression analysis models. These authors showed that smoothing orthogonally to the fixed effects distorted inference  for various inferential goals and concluded that employing such analysis models was ``bad statistical practice.'' Importantly, both \citet{khan2020restricted} and \citet{zimmerman2021deconfounding} show that the point estimates obtained from restricted spatial regression models are equivalent to those obtained from a non-spatial model.

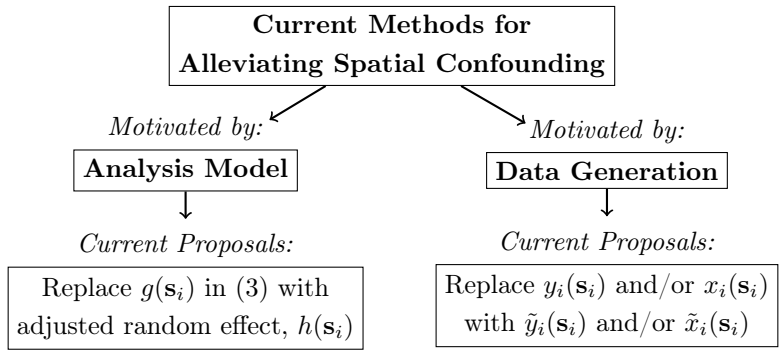
\begin{figure}
\begin{center}
\begin{tikzpicture}[place/.style={rectangle,draw=black},place2/.style={rectangle,draw=white},
   transition/.style={rectangle,draw=black!50,fill=black!20,thick}]
  \node[place, align=center] (11) {\textbf{Current Methods for}\\ \textbf{Alleviating Spatial Confounding}};
    \node[place2] (13) [below=of 11] {\hspace*{.1in}};
  \node[place, align=center] (17) [left=of 13, label=above:\emph{Motivated by:}] {\textbf{Analysis Model}};
  \node[place, align=center] (12) [right=of 13, label=above:\emph{Motivated by:}] {\textbf{Data Generation}};
 \node[place, align=center] (20) [below=of 17, label=above:\emph{Current Proposals:}]{Replace $g(\mathbf{s}_i)$ in \eqref{eq:genericSpatial} with\\ adjusted random effect, $h(\mathbf{s}_i)$};
  \draw [thick,->, shorten >= .3in] (11) to (17);
  \draw [thick,->, shorten >= .3in] (11) to (12);
  \node[place, align=center] (18) [below=of 12, label=above:\emph{Current Proposals:}] {Replace $y_i(\mathbf{s}_i)$ and/or $x_i(\mathbf{s}_i)$\\ with $\tilde{y}_i(\mathbf{s}_i)$ and/or $\tilde{x}_i(\mathbf{s}_i)$};
   \draw [thick,->, shorten >= .25in] (12) to (18);
   \draw[thick,->, shorten >= .25in] (17) to (20);
\end{tikzpicture}
\captionof{figure}{A diagram to illustrate current recommendations to relieve issues related to spatial confounding by the motivating source of spatial confounding.}\label{fig:relieve}
\end{center}
\end{figure}

Researchers motivated by data generation spatial confounding rely heavily on assumptions about how the data arose when developing methodology to alleviate spatial confounding. Thus, there can be various formulations. We focus on two methodologies proposed by \citet{Thaden} and \citet{dupont2022spatial+} as illustrative examples of such approaches (described in more detail in \cref{subsec:biasadj}). 
In both these works, the authors assume that the observed data are truly from a model with a form similar to \eqref{eq:model_0} \citep[in simulation studies ][ introduced another unobserved spatial random effect to this model]{dupont2022spatial+} and that $\x = \beta_{z:x} \z + \bm{\epsilon}_x$, where $\bm{\epsilon}_x$ is Gaussian noise. Based on these assumptions, the authors proposed methodologies to alleviate spatial confounding that replace (or are equivalent to replacing) either $\y$ or $\x$ in the analysis model.  The details of these approaches are given in \cref{subsec:biasadj}.

Recall, \citet{Thaden} offered little discussion of the impact of the spatial analysis model on inference, and \citet{dupont2022spatial+} felt that their proposed methodology would work in settings beyond the thin plate splines setting they explored. Subsequent work has claimed both approaches are useful for other types of spatial models \citep{schmidt2021discussion,dupont2022spatial+}. As discussed in \cref{subsec:datagen}, this is characteristic of work motivated by data generation spatial confounding.  The unspoken belief is that something must be known about how the data were generated to appropriately analyze it. If the data were truly generated in line with the assumptions made, the proposed methodologies should be superior to traditional spatial regression analysis models (and non-spatial analysis models). 

In the rest of this paper, we give theoretical results that show that both the analysis model and data generation types of spatial confounding can impact inference, sometimes in competing ways. Importantly, we also show that methods designed to alleviate spatial confounding that focuses on only one type of spatial confounding can, in some cases, distort inference more than a spatial regression model.

\section{Two Views of Spatial Confounding Bias} \label{sec:mainresults}

While the last section was focused on defining two sources of confounding and putting current literature into this framework, in this section we explore the bias in the estimates of $\beta_x$ for various analysis models.   We first focus on the bias from the non-spatial and spatial analysis models \eqref{eq:OLSmodel} and \eqref{eq:genericSpatial} in \cref{subsec:nssp}.  We introduce theoretical results for the bias of these two models from a data generation confounding perspective (\cref{subsubsec:datagen}) and analysis model confounding perspective (\cref{subsubsec:analysis}).  In \cref{subsec:biasadj} we look at bias for two types of adjusted spatial models of the form \eqref{eq:genericadjSpatial}.

%Throughout all sub-sections, we assume that data are originally generated from a model of the form \eqref{eq:model_0}. We consider a non-spatial analysis model of the form \eqref{eq:OLSmodel}, spatial analysis models of the form \eqref{eq:genericSpatial}, and adjusted spatial analysis models of the form \eqref{eq:genericadjSpatial}. For the last category, we focus on the geoadditive structural equation modeling (GSEM) and Spatial+ approaches developed by \citet{Thaden} and \citet{dupont2022spatial+} respectively (previously referenced in \cref{subsec:alleviating}). %Both GSEM and Spatial+  can be framed as special cases of adjusted spatial regression analysis models of the form \eqref{eq:genericadjSpatial}. For consistency, we assume, unless otherwise stated, that every step of these methods use spatial analysis models of the form \eqref{eq:genericSpatial} (i.e., we use models of the form \eqref{eq:genericSpatial} to find $\ry$ and $\rb$ in \cref{gsem} and \cref{spatialplus}, as outlined in Section \ref{subsec:biasadj}). 

\subsection{Bias: Non-Spatial and Spatial Analysis Models} \label{subsec:nssp}
In this sub-section, we consider how the data generation and analysis model types of spatial confounding may impact bias in the estimation of $\beta_x$. We consider this for the non-spatial analysis and spatial analysis models.  To do so, we follow the set-up explored in \citet{Pac_spatialconf}. This article has shaped much of the current work focused on the data generation issue, as well as many of the most recent methods designed to alleviate spatial confounding \citep[see, e.g.,][]{dupont2022spatial+,Thaden,page2017estimation,keller2020selecting,marques2022mitigating,narc:2024}. 

Mirroring the work in \citet{Pac_spatialconf}, we assume that our response variable was generated from a model of the form \cref{eq:model_0}. However, instead of a particular set of realizations for $\bm{x}$ and $\bm{z}$, we use the processes $\bm{X}$ and $\bm{Z}$: 
\begin{eqnarray} \label{stochastic}
   \Y (\bm{s}_i) = \beta_0 + \beta_x \X (\bm{s}_i) + \beta_z \Z (\bm{s}_i) + \epsilon_i,
\end{eqnarray}
where $\epsilon_i$ is defined as in \eqref{eq:model_0}. We assume that $\bm{X}$ and $\bm{Z}$ are each generated from Gaussian random processes with positive-definite, symmetric covariance structures. In \citet{Pac_spatialconf}, the author considered two settings: one in which the author stated there was no confounding in the data generation process and one in which the author stated there was confounding in the data generation process. We restrict our attention to the situation where there is confounding in the data generation process.

Throughout this section, we assume $\X$ and $\Z$ are generated from Gaussian processes with Mat\'ern spatial correlations: 
\begin{eqnarray} \label{maternclass}
\bm{C}\left( d ; \theta, \nu \right) = \frac{1}{\Gamma(\nu) 2^{\nu-1}} \left( \frac{ 2 \sqrt{\nu} d }{\theta} \right)^{\nu} K_{\nu} \left( \frac{ 2 \sqrt{\nu}  d }{\theta} \right),
\end{eqnarray}
where $d$ is the Euclidean distance between two locations, $K_{\nu}$ is the modified Bessel function of the second order with smoothness parameter $\nu$, and $\theta$ is the spatial range. We allow  $\X = {\X}_c + {\X}_u$, where $\textrm{Cov} \left(  \X \right) = \sigma_c^2 \Rtwo{c}  + \sigma_u^2 \Rtwo{u} $, $\textrm{Cov} \left(  \Z \right) = \sigma_z^2 \Rtwo{c}  $, and $\textrm{Cov} \left( \X,  \Z \right) = \rho \sigma_c \sigma_z \Rtwo{c}$. We assume that $\Rtwo{c}$ and $\Rtwo{u}$ are each a member of  \eqref{maternclass} with the same $\nu$ and potentially different spatial range parameters. We stress that the source of confounding here is $\rho$, and the spatial aspect of the confounding is the shared spatial correlation functions in $\Rtwo{c}$ and $\Rtwo{u}$. Although the processes are correlated and share a common spatial dependence functions, there is no guarantee that a particular set of realizations  $\x$ and $\z$ will be correlated or share specific spatial patterns.  

\subsubsection{Bias from a Data Generation Perspective} \label{subsubsec:datagen}
We first explore bias from the perspective of data generation spatial confounding. Work on data generation confounding tends to treat $\X$ and $\Z$ stochastically when deriving bias terms \citep{Pac_spatialconf,page2017estimation}. Generalized least squares estimators are used when considering bias for a spatial regression analysis model. We adopt this approach here. 

In \cref{stoch_ols_bias} we calculate the bias terms  $\bias{\betaxp{NS} | \Xone }= \beta_x - \textrm{E} \left( \hat{\beta}^{NS} | \Xone \right),$ for a non-spatial regression analysis model, and $\bias{\betaxp{S} | \Xone }= \beta_x - \textrm{E} \left( \hat{\beta}^{S} | \Xone \right),$ for a spatial regression analysis model of the form \eqref{eq:genericSpatial}. Here, $\Xone = [\boldone ~ \X ]$.

\begin{rmk} \label{stoch_ols_bias}
 Let the data generating model be of the form \eqref{stochastic} with  $\X = {\X}_c + {\X}_u$ and $\Z$ having the following characteristics: 
 \begin{enumerate}
     \item $\textrm{Cov} \left(  \X \right) = \sigma_c^2 \Rtwo{c}  + \sigma_u^2 \Rtwo{u} $ 
     \item $\textrm{Cov} \left(  \Z \right) = \sigma_z^2 \Rtwo{c}  $, and 
     \item $\textrm{Cov} \left( \X,  \Z \right) = \rho \sigma_c \sigma_z \Rtwo{c}$
 \end{enumerate}
where $\Rtwo{c}$ and $\Rtwo{u}$ are of the form \eqref{maternclass} with the same $\nu$. If a non-spatial analysis model of the form \eqref{eq:OLSmodel} is employed with variance parameters assumed known, then the $\bias{\beta_X^{NS} | \Xone }= \beta_x - \textrm{E} \left( \hat{\beta}_X^{NS} | \Xone \right)$ can be expressed as: 
 \begin{eqnarray} \label{olsstochbias}
 	\beta_z \rho  \frac{\sigma_z}{\sigma_c} \left[ \left( {\Xone}^T \Xone \right)^{-1} {\Xone}^T \bm{K} \left( \X - \mu_x \boldone \right) \right]_2
\end{eqnarray}
If instead, a spatial analysis model of the form \eqref{eq:genericSpatial} is employed with variance parameters assumed known, then $\bias{\beta_X^S | \Xone }= \beta_x - \textrm{E} \left( \hat{\beta}_X^{S} | \Xone \right)$ can be expressed as: 
\begin{eqnarray} \label{glsstochbias}
 \beta_z \rho \frac{\sigma_z}{\sigma_c} \left[ \left( {\Xone}^T \bm{\Sigma}^{-1} \Xone \right)^{-1} {\Xone}^T \bm{\Sigma}^{-1} \bm{K} \left( \X - \mu_x \boldone \right) \right]_2
\end{eqnarray}
\noindent where $\bm{K}= p_c \left(p_c \bm{I}  + (1-p_c) \Rtwo{u} \Rtwo{c}^{-1} \right)^{-1} $, $p_c=\frac{\sigma_c^2}{\sigma_c^2 + \sigma_u^2}$, $\bm{\Sigma} = \beta_z^2 \sigma_z^2 \Rtwo{c} + \sigma^2 \bm{I}$, and $[ ]_2$ indicates the second element of the vector.
 \end{rmk}
 \begin{proof}
See \cref{app:stoch_ols_bias} and \cref{app:stoch_gls_bias} for derivations. 
\end{proof}
We note that \eqref{glsstochbias} is equivalent to Equation (6) in \citet{Pac_spatialconf} when $\nu=2$. The bias terms \eqref{olsstochbias} and \eqref{glsstochbias} are very complicated. We take a moment to point out several things. First, for the spatial model the ``true'' precision of $\Y$ (conditional on $\X$), $\bm{\Sigma}^{-1}$, is used, effectively ignoring the impact of the particular analysis model chosen. As we have discussed, this is very common in explorations of bias influenced by the data generation spatial confounding. However, we note that \citet{Pac_spatialconf} did include a brief description of the impact of analysis models in Section 2.1 of that paper. Second, it is difficult to derive insights from these forms of bias. They are heavily dependent not only on the spatial range parameters and the various other variance parameters, but also on the distributional assumptions on $\X$.

In \citet{Pac_spatialconf}, the author measured the bias due to spatial confounding with the term $c_S(\X)= \left( {\Xone}^T \bm{\Sigma}^{-1} \Xone \right)^{-1} {\Xone}^T \bm{\Sigma}^{-1} \bm{K} \left( \X - \mu_x \boldone \right)$ from \cref{glsstochbias}. Here, we also introduce the non-spatial equivalent  $c_{NS}(\X)= \left( {\Xone}^T  \Xone \right)^{-1} {\Xone}^T \bm{K} \left( \X - \mu_x \boldone \right)$ from \cref{olsstochbias}. More specifically, \citeauthor{paciorek2007computational} considered $\textrm{E}_{\X} \left( c_S(\X) \right)$. To control for the influence for the marginal variance parameters and $\beta_z$, \citeauthor{paciorek2007computational} calculated (via simulations) $\textrm{E}_{\X} \left( c_S(\X) \right)$ for various values of $p_c$ (defined in \cref{stoch_ols_bias}) and the term $p_z= \frac{\beta_z^2 \sigma_z^2}{\beta_z^2 \sigma_z^2 + \sigma^2}$. He did this for the case where $\Rtwo{c} $ and $\Rtwo{u}$ are members of \eqref{maternclass} with $\nu =2$. We reconsider this same simulation setup, but provide results for both $\nu=2$ and $\nu=0.5$.  By considering this measure of bias for both parameter values, we get a better idea of how bias can change depending on the assumptions of the data generation processes.  

The results, provided in Figure \ref{fig:data_illustrationa} and produced using his code available at \url{https://www.stat.berkeley.edu/~paciorek/research/code/code.html}, suggest that a spatial regression analysis model {could} result in reduced bias relative to a non-spatial analysis when $\theta_u \ll \theta_c$. Recall that $\theta_u$ is the range parameter for the portion of $\X$ process that is independent of $\Z$ (the unconfounded component of $\X$), while $\theta_c$ is the range parameter for the process contributing to both $\X$ and $\Z$ (the confounded component of $\X$).  Thus, the results suggest the spatial regression analysis {may} reduce bias when the spatial range for the confounded component of $\X$ is larger than the range for the unconfounded component.  

To examine how the bias changes for the spatial model for different parameter values,  consider $\textrm{E}_{\X} \left( c_{S}(\X) \right)$ as represented in Figure \ref{fig:data_illustrationa}.  This figure provides images of $\textrm{E}_{\X} \left( c_{S}(\X) \right)$ for 100 locations on a grid of the unit square for different fixed values of $p_c$, $p_z$, $\theta_c$, and $\theta_u$ when $\nu = 2$ (a) and $\nu =0.5$ (b).  Consider (a), the case when $\nu=2$. The upper left subplot of the image matrices provide a colored image of $\textrm{E}_{\X} \left( c_{S}(\X) \right)$ when $p_c = p_z = 0.1$ and $\theta_c$ varies from 0 to 1 (x-axis) and $\theta_u$ varies from 0 to 1 (y-axis).  As $\theta_c$ increases, holding all else constant,  $\textrm{E}_{\X} \left( c_{S}(\X) \right)$ decreases.  In contrast, as $\theta_u$ increases, holding all else constant, $\textrm{E}_{\X} \left( c_{S}(\X) \right)$ increases.  Moving to the other subplots within the image matrix of (a) shows the same colored representation of $\textrm{E}_{\X} \left( c_{S}(\X) \right)$, but for different values of $p_c$ and $p_z$. As either $p_c$ or $p_z$ increase, $\textrm{E}_{\X} \left( c_{S}(\X) \right)$ also increases. Notice also that for any given value of $p_c$ and $p_z$, we see the same behavior for $\textrm{E}_{\X} \left( c_{S}(\X) \right)$ as $\theta_u$ and $\theta_c$ changes that we saw in the first subplot considered.  Namely, that the bias for the spatial model will decrease as either $\theta_u$ decreases or $\theta_c$ increases. 

While it is interesting to see how the bias changes for the spatial model, we are most interested in how it changes relative to the bias for the non-spatial model.  To make this comparison, note that, as \citet{Pac_spatialconf} pointed out, $\textrm{E}_{\X} \left( c_{NS}(\X) \right) \approx p_c $; thus, the bias term for the non spatial analysis model is represented by $p_c$. It can also be shown that  $\textrm{E}_{\X} \left( c_{S}(\X) \right) \approx p_c $ when $\theta_c = \theta_u$.  The diagonal of each subplot in the image matrix represents the case when $\theta_c = \theta_u$, or when both $\textrm{E}_{\X} \left( c_{NS}(\X) \right) \approx \textrm{E}_{\X} \left( c_{S}(\X) \right) \approx p_c$. Below this diagonal, $\theta_u < \theta_c$ and the color shows that $\textrm{E}_{\X} \left( c_{S}(\X) \right) < p_c$.  In other words, as represented by the blue coloring below the diagonal, the spatial model has reduced bias relative to the non-spatial model when $\theta_u \ll \theta_c$.

%Thus, the bias being lower below the diagonal \cb{of each subplot within the image matrix?} \cref{fig:data_illustration} supports this assertion \citep[see also,][]{page2017estimation}. \cb{Both (a) and (b) in Figure 1?}

\begin{figure}
\centering
\begin{minipage}{.7\textwidth}
\includegraphics[width=\textwidth]{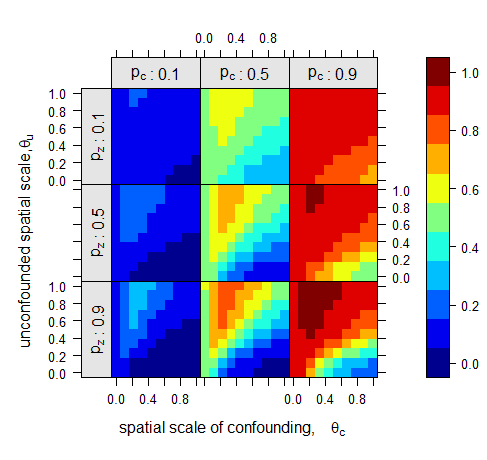}\\
\centering (a)\\
\includegraphics[width=\textwidth]{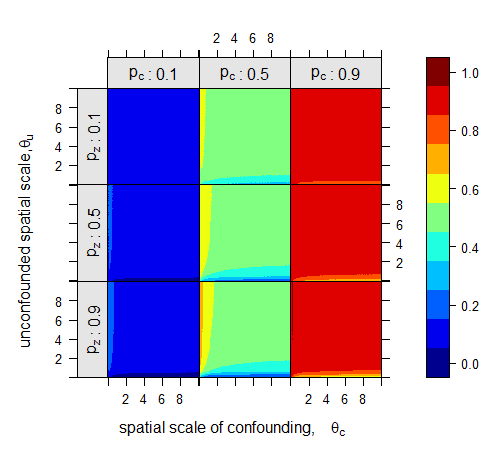}\\
(b)
\end{minipage}
\captionof{figure}{This image depicts $\textrm{E}_{\X} \left( c_{S}(\X) \right)$ for 100 locations on a grid of the unit square when $\Rtwo{u}$ and $\Rtwo{c}$ belong to the Mat\'ern class with $\nu=2$ (a) and $\nu=0.5$ (b). This image was created from Christopher Paciorek's code using the \texttt{fields} and \texttt{lattice} packages \citep{lattice,fields}.}
\label{fig:data_illustrationa}
\end{figure}

%ADD BACK LATER? 
%Recall,  $\textrm{E}_{\X} \left( c_{NS}(\X) \right) \approx p_c $, and $\textrm{E}_{\X} \left( c_{S}(\X) \right) \approx p_c $ when $\theta_c = \theta_u$. Thus, terms lower than the diagonal represent a reduction in bias by modeling the residual spatial dependence. 

However, a spatial regression analysis could also increase bias relative to a non-spatial analysis when $\theta_u \gg \theta_c$.  This is represented in Figure \ref{fig:data_illustrationa}(a) in that, above this diagonal line $\theta_u > \theta_c$ and the color shows that $\textrm{E}_{\X} \left( c_{S}(\X) \right) > p_c$.  \citet{Pac_spatialconf} explicitly acknowledged that the case where $\theta_u \gg \theta_c$ is likely of limited interest in real applications. However, this case has increasingly influenced further research in spatial confounding. Or rather, the fact that bias for a spatial analysis model {can} be increased relative to the bias for a non-spatial model has influenced further research. Other papers often use this observation to support statements suggesting that spatial confounding occurs when the ``spatial range of the observed risk factors is larger than the unobserved counterpart'' \citep{marques2022mitigating}. However, it is rarely acknowledged that these simulations considered only a very specific case \citep[there are exceptions; see e.g.,][]{keller2020selecting}. 

Further adding to the complexity of this bias comparison, it turns out that the behavior of bias from spatial confounding is dependent on the distributional assumptions for $\X$. To illustrate this issue, we now repeat the simulation study for the case when $\Rtwo{c} $ and $\Rtwo{u}$ are members of \eqref{maternclass} with $\nu =0.5$ as shown in Figure \ref{fig:data_illustrationa}(b).  Here, the spatial process is less smooth than when $\nu=2$.  In contrast to the plots provided in (a) where we considered values of  $\theta_c$ and $\theta_u$ up to 1, when $\nu=0.5$ the images can look fairly flat for these same values. Thus, to better illustrate the behavior, we consider values of $\theta_u$ and $\theta_c$ up to 10.  Again, in \cref{fig:data_illustrationa}(b), $\textrm{E}_{\X} \left( c_{NS}(\X) \right) \approx p_c $, and  when $\theta_c = \theta_u$, $\textrm{E}_{\X} \left( c_{S}(\X) \right) \approx p_c $.  In this figure we see the bias modification term is almost always equal to the non-spatial equivalent; namely, the color shown in the images representing $\textrm{E}_{\X} \left( c_{S}(\X) \right)$ match the value of $p_c \approx \textrm{E}_{\X} \left( c_{NS}(\X) \right)$.  This figure also shows that bias reduction can occur when $\theta_u$ is less than two, regardless of the value of $\theta_c$. Similarly, bias can be increased when $\theta_c$ is less than two, across all values of $\theta_u$. In other words, there is no longer strong evidence to support statements that spatial confounding impacts bias when the ``spatial range of the observed risk factors is larger than the unobserved counterpart.''

%\begin{figure}
%\begin{center}
%\begin{minipage}{.6\textwidth}
%\end{minipage}
%\captionof{figure}{This image depicts  $\textrm{E}_{\X} \left( c_{S}(\X) \right)$ for locations on  grid of the unit square when $\Rtwo{u}$ and $\Rtwo{c}$ belong to the Mat\'ern class $\nu=.5$. Again,  $\textrm{E}_{\X} \left( c_{NS}(\X) \right) \approx p_c $, and $\textrm{E}_{\X} \left( c_{S}(\X) \right) \approx p_c $ when $\theta_c = \theta_u$. Thus, terms lower than the diagonal represent a reduction in bias by modeling the residual spatial dependence. This images was created from an adaptation of Christopher Paciorek's code using the \texttt{fields} and \texttt{lattice} packages \citep{lattice,fields}.  }
%\label{fig:data_illustrationb}
%\end{center}
%\end{figure}

We note that these numerical analyses do not explore the behavior of the variance of the estimates, which will also be different for each model.  Understanding the behavior of the biases themselves is quite nuanced, and the variances even more so. We are able to explore a hint of the differences in variances in \cref{sec:simstudies}, but better understanding of such uncertainty is needed future work.

Importantly, however, these examples illustrate how sensitive our conclusions about the impact of spatial confounding are to the distributional assumptions we make about $\X$ and $\Z$.  Bias from a data generation perspective depends on the assumptions about the generating spatial processes, including, but not limited to, the choice of spatial dependence structure (e.g., the selection of $\nu$).

\subsubsection{Bias from an Analysis Model Perspective}\label{subsubsec:analysis}
In this sub-section, we focus on the analysis model type of spatial confounding. To make our results comparable to the setting explored in \cref{subsubsec:datagen}, we assume that for a particular set of realizations $\x$ and $\z$, the response $\bm{y}$ is generated from a model of the form \cref{eq:model_0}.  We can assume that the processes $\X$ and $\Z$ are generated as before. However, the results in this section do not depend on any distributional assumptions about $\X$ and $\Z$.  Unlike in \cref{subsubsec:datagen}, we assume that all variance parameters are unknown. As we will see, this results in conceptualizing spatial confounding by the relationships that the $\x$, $\y$, and $\z$ have with the eigenvectors of an estimated precision matrix $\Sigmah^{-1}$.

We consider both the non-spatial analysis model and the class of spatial analysis models of the form \eqref{eq:genericSpatial}. First, in \cref{lemma:olsfixed}, we derive the bias term that results from fitting a non-spatial analysis model. 
\begin{lemma} \label{lemma:olsfixed} 
Let the data generating model be of the form \eqref{eq:model_0} with $\x$ known. If a non-spatial analysis model of the form \eqref{eq:OLSmodel} is fit, then $\bias{\hat{\beta}_{x}^{NS}} = \beta_x - E\left(\hat{\beta}_{x}^{NS} \right)$ can be expressed as:
	\begin{eqnarray*} \label{eq:obsolsfixed}
	\frac{\beta_z}{	 \euclnorm{\boldone}^2  \euclnorm{\x}^2   -  \left[ \euclmetric{\x}{\boldone}  \right]^2  } \left(  \euclnorm{\boldone}^2  \euclmetric{\x}{\z} - 	\euclmetric{\x}{\boldone} \euclmetric{\z}{\boldone}\right),
 \end{eqnarray*}
where $ \langle \cdot, \cdot \rangle $ is the standard Euclidean inner product  and $||\cdot||$ represents the norm induced by it. \end{lemma}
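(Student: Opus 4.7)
The statement is essentially a restatement of the classical omitted-variable bias formula for ordinary least squares, specialized to a regression on an intercept and a single covariate $\x$ when the true mean of $\y$ also depends linearly on $\z$. My plan is to compute $E(\hat{\bm{\beta}}^{NS})$ directly from the closed-form OLS estimator and then extract the second coordinate.

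First, I would stack the design matrix as $\bm{W} = [\,\boldone \;\; \x\,]$, so that the OLS estimator is $\hat{\bm{\beta}}^{NS} = (\bm{W}^T \bm{W})^{-1} \bm{W}^T \y$. Because $\x$ (and hence $\bm{W}$) is treated as known, and the generating model \eqref{eq:model_0} gives $E(\y) = \beta_0 \boldone + \beta_x \x + \beta_z \z$, linearity of expectation yields
\begin{equation*}
E(\hat{\bm{\beta}}^{NS}) = (\bm{W}^T \bm{W})^{-1} \bm{W}^T \bigl(\beta_0 \boldone + \beta_x \x + \beta_z \z\bigr).
\end{equation*}
The identity $(\bm{W}^T \bm{W})^{-1} \bm{W}^T \bm{W} = \bm{I}_2$ collapses the first two terms to $(\beta_0,\beta_x)^T$, leaving the contamination term $\beta_z (\bm{W}^T \bm{W})^{-1} \bm{W}^T \z$ as the only source of bias.

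Second, I would compute the $2 \times 2$ Gram matrix explicitly,
\begin{equation*}
\bm{W}^T \bm{W} = \begin{pmatrix} \euclnorm{\boldone}^2 & \euclmetric{\x}{\boldone} \\ \euclmetric{\x}{\boldone} & \euclnorm{\x}^2 \end{pmatrix},
\end{equation*}
invert it with the standard $2\times 2$ cofactor formula (whose determinant gives the denominator in the claim), and multiply against $\bm{W}^T \z = (\euclmetric{\z}{\boldone},\,\euclmetric{\x}{\z})^T$. Reading off the second coordinate produces
\begin{equation*}
[E(\hat{\bm{\beta}}^{NS})]_2 \;=\; \beta_x + \beta_z \cdot \frac{\euclnorm{\boldone}^2 \euclmetric{\x}{\z} - \euclmetric{\x}{\boldone}\euclmetric{\z}{\boldone}}{\euclnorm{\boldone}^2 \euclnorm{\x}^2 - [\euclmetric{\x}{\boldone}]^2},
\end{equation*}
and rearranging into $\beta_x - E(\hat{\beta}_x^{NS})$ yields the claimed expression (up to the overall sign convention adopted for $\bias{\cdot}$).

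There is no real obstacle here: the proof is a direct matrix calculation, and the only thing to be careful about is the bookkeeping of signs when inverting the $2 \times 2$ Gram matrix and when subtracting $E(\hat{\beta}_x^{NS})$ from $\beta_x$. The interpretive payoff, which I would flag at the end, is that the bias is driven entirely by the \emph{realized} inner products among $\boldone$, $\x$, and $\z$; in particular, when $\x$ and $\z$ are centered (so $\euclmetric{\x}{\boldone}=\euclmetric{\z}{\boldone}=0$), the bias collapses to $-\beta_z \, \euclmetric{\x}{\z}/\euclnorm{\x}^2$, which is the familiar empirical-regression coefficient of $\z$ on $\x$ times $\beta_z$. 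This will make the contrast with the data-generation perspective of \cref{subsubsec:datagen} transparent, since the relevant quantities here are properties of the observed $\x$ and $\z$ rather than of the underlying stochastic processes $\X$ and $\Z$.
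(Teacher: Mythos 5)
Your proposal is correct and follows essentially the same route as the paper's derivation in the appendix: write the OLS estimator with design matrix $[\boldone~\x]$, take the expectation under the generating model, invert the $2\times 2$ Gram matrix by cofactors, and read off the second coordinate. The only point of difference is cosmetic (the paper carries the same sign ambiguity between $\beta_x - E(\hat{\beta}_x^{NS})$ and the stated positive expression that you flag as a convention issue), so there is nothing further to add.
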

Because this ends up being a special case of the bias for the GLS estimators discussed next, we delay a discussion of these terms. Now, we assume that we fit a spatial analysis model of the form \eqref{eq:genericSpatial}. 
\begin{lemma} \label{lemma:glsfixed}
Let the data generating model be of the form \cref{eq:model_0} with $\x$ known. If a spatial analysis model of the form \eqref{eq:genericSpatial} is fit, and results in the positive definite estimate $\Sigmah$, then the bias $\bias{\hat{\beta}_{x}^{S} }= \beta_x - E\left(\hat{\beta}_{x}^{S}\right)$ can be expressed as:
\begin{eqnarray}  \label{eq:obsglsfixed}
\frac{\beta_z}{	 \sigmainvnorm{\boldone}^2  \sigmainvnorm{\x}^2   -  \left[ \sigmainvdot{\x}{\boldone} \right]^2  } \left(   \sigmainvnorm{\boldone}^2  \sigmainvdot{\x}{\z} - 	\sigmainvdot{\x}{\boldone} \sigmainvdot{\z}{\boldone} \right).
\end{eqnarray}
\end{lemma}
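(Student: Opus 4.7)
The plan is to compute the bias of the GLS estimator for the slope in the spatial analysis model \eqref{eq:genericSpatial}, treating the estimate $\Sigmah$ as a fixed positive-definite matrix (equivalently, conditioning on it, which sidesteps any data-dependence of $\Sigmah$ and reflects the implicit assumption of the lemma statement). Writing $\Xone = [\boldone~\x]$, the standard GLS formula gives
$$\begin{pmatrix}\hat{\beta}_0 \\ \hat{\beta}_{x}^{S}\end{pmatrix} = \left(\Xone^T \Sigmahinv \Xone\right)^{-1} \Xone^T \Sigmahinv \y.$$
Under the data-generating model \eqref{eq:model_0}, $\textrm{E}(\y \mid \x, \z) = \beta_0 \boldone + \beta_x \x + \beta_z \z$. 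Substituting this and using $(\Xone^T \Sigmahinv \Xone)^{-1} \Xone^T \Sigmahinv \Xone = \bm{I}_2$ reduces the computation of $\textrm{E}(\hat{\beta}_{x}^{S}) - \beta_x$ to extracting the second coordinate of the vector $\beta_z (\Xone^T \Sigmahinv \Xone)^{-1} \Xone^T \Sigmahinv \z$.

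The remainder is an explicit $2\times 2$ calculation. In the lemma's notation,
$$\Xone^T \Sigmahinv \Xone = \begin{pmatrix}\sigmainvnorm{\boldone}^2 & \sigmainvdot{\x}{\boldone} \\ \sigmainvdot{\x}{\boldone} & \sigmainvnorm{\x}^2\end{pmatrix}, \qquad \Xone^T \Sigmahinv \z = \begin{pmatrix}\sigmainvdot{\z}{\boldone} \\ \sigmainvdot{\x}{\z}\end{pmatrix}.$$
Because $\Sigmahinv$ is positive-definite and $\boldone, \x$ are linearly independent (the standard identifiability condition), the determinant $\sigmainvnorm{\boldone}^2 \sigmainvnorm{\x}^2 - [\sigmainvdot{\x}{\boldone}]^2$ is strictly positive, so the adjugate inverse formula applies. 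Multiplying out, the second entry of $(\Xone^T \Sigmahinv \Xone)^{-1} \Xone^T \Sigmahinv \z$ equals
$$\frac{\sigmainvnorm{\boldone}^2 \sigmainvdot{\x}{\z} - \sigmainvdot{\x}{\boldone}\, \sigmainvdot{\z}{\boldone}}{\sigmainvnorm{\boldone}^2 \sigmainvnorm{\x}^2 - [\sigmainvdot{\x}{\boldone}]^2}.$$
Multiplying by $\beta_z$ yields the expression in \eqref{eq:obsglsfixed}, handling the overall sign convention for $\bias{\cdot}$ consistently with \cref{lemma:olsfixed}.

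There is no substantive obstacle here, and I would not expect the writeup to be longer than a few lines; the only conceptual subtlety worth flagging is the conditioning on $\Sigmah$, without which the expectation operator cannot be commuted through the GLS formula as above. As a sanity check, setting $\Sigmahinv = \bm{I}$ collapses every $\Sigmahinv$-weighted inner product and norm to its Euclidean counterpart, and \eqref{eq:obsglsfixed} reduces exactly to the expression stated in \cref{lemma:olsfixed}. This confirms that the OLS bias is the Euclidean specialization of the GLS bias derived here, as the excerpt's remark about \cref{lemma:olsfixed} being a special case anticipates.
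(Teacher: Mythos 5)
Your proposal is correct and follows essentially the same route as the paper's own derivation in the appendix: apply the GLS formula, substitute $\textrm{E}(\y) = \beta_0\boldone + \beta_x\x + \beta_z\z$, reduce to $\beta_z(\Xone^T\Sigmahinv\Xone)^{-1}\Xone^T\Sigmahinv\z$, and extract the second entry via the $2\times 2$ adjugate inverse. The only difference is cosmetic: you explicitly flag the conditioning on $\Sigmah$ and the positivity of the determinant, which the paper leaves implicit (and, like the paper, your computation actually yields $E(\hat\beta_x^S) - \beta_x$ equal to the displayed expression, so the sign in the lemma's definition of bias is handled the same loose way in both).
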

\begin{proof}
See \cref{app:obsglsfixed} for the calculations.
\end{proof}
Here, the estimate of precision matrix $\bm{\Sigma}^{-1}$ is $\Sigmah^{-1}$. We define the inner product $\langle m, n \rangle_{\bm{\Sigmah}^{-1}} = m^T \bm{\Sigmah}^{-1} n$ for $m,n \in \mathbb{R}^n$, and we let $|| \cdot ||_{\bm{\Sigmah^{-1}}}$  be the norm induced by it (see \cref{appa_diff_geometry} for more details). We do not make any assumptions of how the term $\Sigmah^{-1}$ is estimated (e.g., Bayesian vs. restricted maximum likelihood), but we acknowledge two different methods of fitting the same analysis model could result in different $\Sigmah^{-1}$. Finally, we note in \cref{ols_special_gls} that the bias term in \cref{eq:obsolsfixed} is a special case of the bias term in \cref{eq:obsglsfixed}. 

\begin{rmk} \label{ols_special_gls}
When $\Sigmahinv = \bm{I}$, $\bias{\hat{\beta}_{x}^{NS}}$ in \cref{eq:obsolsfixed} is a special case of $\bias{\hat{\beta}_{x}^{S} }$ in \cref{eq:obsglsfixed}.
\end{rmk}

The bias term in \eqref{eq:obsglsfixed} is a function of $\beta_z$, which makes intuitive sense. Although this will, of course, not be known, we note that its impact on inference is the same across all analysis models belonging to the spatial analysis models as well as the non-spatial analysis model. For the moment, we focus on the other terms. We begin with the numerator of \eqref{eq:obsglsfixed} (ignoring $\beta_z$):
\begin{eqnarray} \label{eq:numobsglsfixed}
  \sigmainvnorm{\boldone}^2  \sigmainvdot{\x}{\z} - 	\sigmainvdot{\x}{\boldone} \sigmainvdot{\z}{\boldone}. 
\end{eqnarray}
Broadly speaking, this term tends to get smaller when one of two things happen. The first situation occurs when the low frequency eigenvectors of $\Sigmahinv$ are ``flat.'' We say an eigenvector is ``flat'' if  there is a small angle (with respect to the Euclidean norm) between it and the column vector of ones, $\boldone$. In other words, saying an eigenvector is ``flat'' means it is roughly constant.  We say an eigenvector is ``low-frequency'' if its associated eigenvalue is less than 1. Recall that when we plot a low-frequency eigenvector, the values are smooth and slow changing across the space. When this occurs, all terms involving $\boldone$ (i.e., $ \sigmainvnorm{\boldone}^2$, $\sigmainvdot{\x}{\boldone}$, and $\sigmainvdot{\z}{\boldone}$) will become smaller in magnitude. To illustrate this, we randomly generate locations for 140 observations on a $[0,10] \times [0,10]$ window. Using these locations, we represent different potential $\Sigmahinv$'s by calculating the inverses of variance-covariance matrices for members of the Mat\'ern class. In \cref{fig:bias_illustration}, we use colors to denote three possible values of $\nu$: $\nu = 0.5$ (the exponential), $\nu =1$ (Whittle), and $\nu =2$. For fixed $\nu$, we then calculate various variance-covariance matrices by allowing $\theta$ to vary. For the inverse of each unique matrix, we calculate $\sigmainvnorm{\boldone}$. For fixed $\nu$, we can expect the lowest frequency eigenvectors of the eigendecomposition of the associated $\Sigmahinv$ to become flatter as $\theta$ increases. In \cref{fig:bias_illustration}(a), we can see that $\sigmainvnorm{\boldone}$ decreases in magnitude as $\theta$ increases for all values of $\nu$. This trend will also be seen in cross-products involving $\boldone$, as can be seen in \cref{fig:bias_illustration}(b). Note that in these plots, the black line denotes the Euclidean norm. Almost all of the values of $ \sigmainvnorm{\boldone}$ are less than this in magnitude.  In many practical situations where spatial covariance matrices are employed, this will tend to happen.

The second situation that will tend to decrease the magnitude of \eqref{eq:numobsglsfixed} occurs when there are small angles (again with respect to the Euclidean norm) between either $\x$ or $\z$  and low-frequency eigenvectors of $\Sigmahinv$. When this occurs, we say that $\x$ (or $\z$) is spatially smooth with respect to $\Sigmahinv$. Recall, for us, low frequency eigenvectors are those with associated eigenvalues less than 1.  We note that \eqref{eq:numobsglsfixed} is symmetric in $\x$ and $\z$. As just one of these variables becomes more correlated with a low frequency eigenvector, all terms involving it  will tend to decrease in magnitude. Both variables being correlated with low frequency eigenvectors will tend to be associated with a further reduction in the magnitude of the bias. As an illustration, we again use the 140 locations just discussed. We generate a realization $\x$ from an exponential process (\eqref{maternclass} with $\nu=0.5$) at these locations with $\theta=10$. In \cref{fig:bias_illustration}(c), we illustrate how this realization appears spatially smooth. Because $\x$ is spatially smooth, it will often be correlated with low frequency eigenvectors. Unsurprisingly, as shown in \cref{fig:bias_illustration}(d), $\sigmainvnorm{\x}$ is always smaller than the corresponding Euclidean norm. We note that if either $\x$ or $\z$ is linearly dependent with $\boldone$, i.e., constant, then the bias term in \eqref{eq:numobsglsfixed} will be 0. Thus, the flatter $\x$ and $\z$ become, the smaller the bias. 

\begin{center}
\begin{minipage}{0.45\textwidth}
\includegraphics[width=\linewidth]{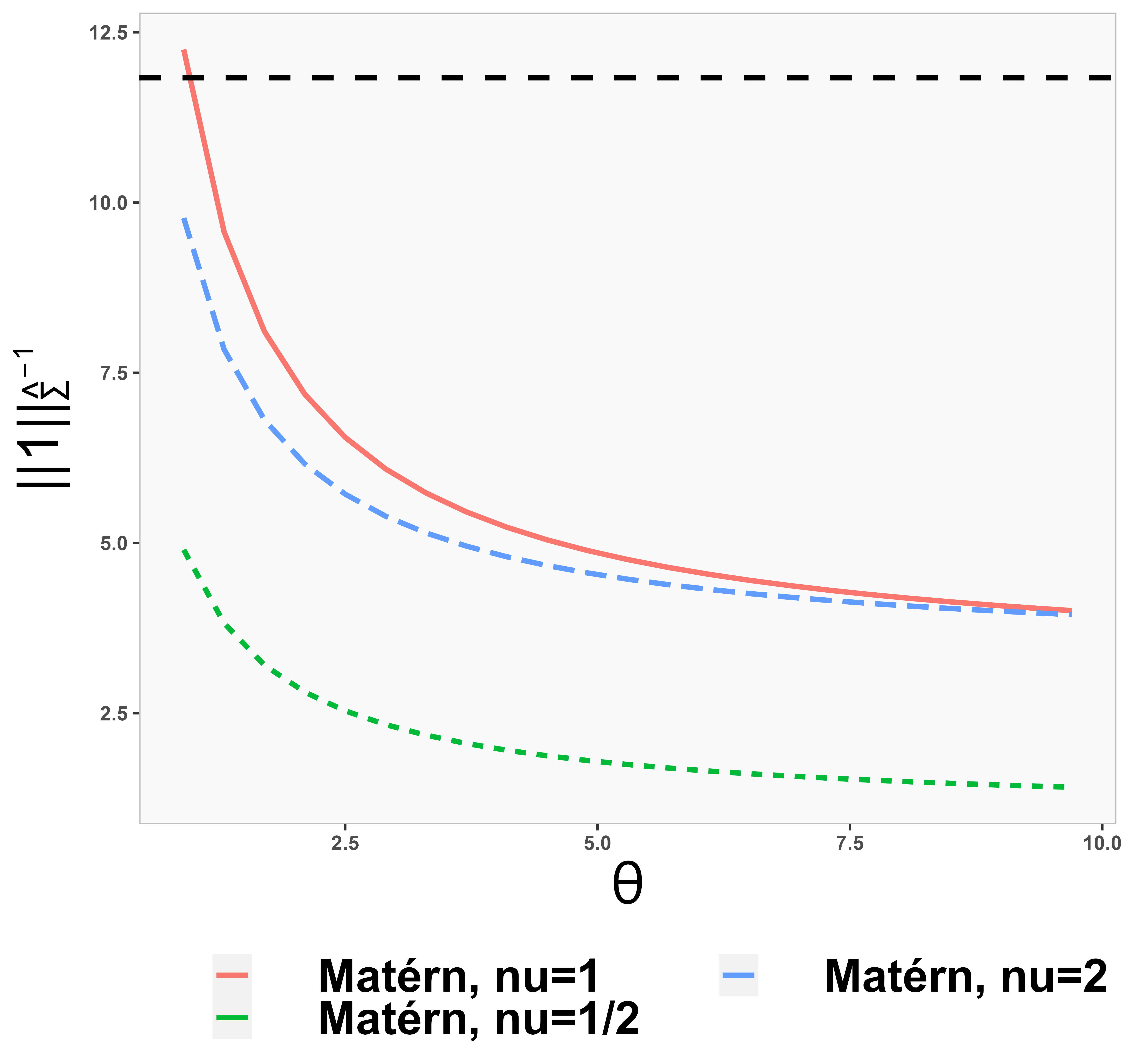}
\captionof*{figure}{a) $\sigmainvnorm{\boldone}$}
\end{minipage}\hfill
\begin{minipage}{0.45\textwidth}
\includegraphics[width=\linewidth]{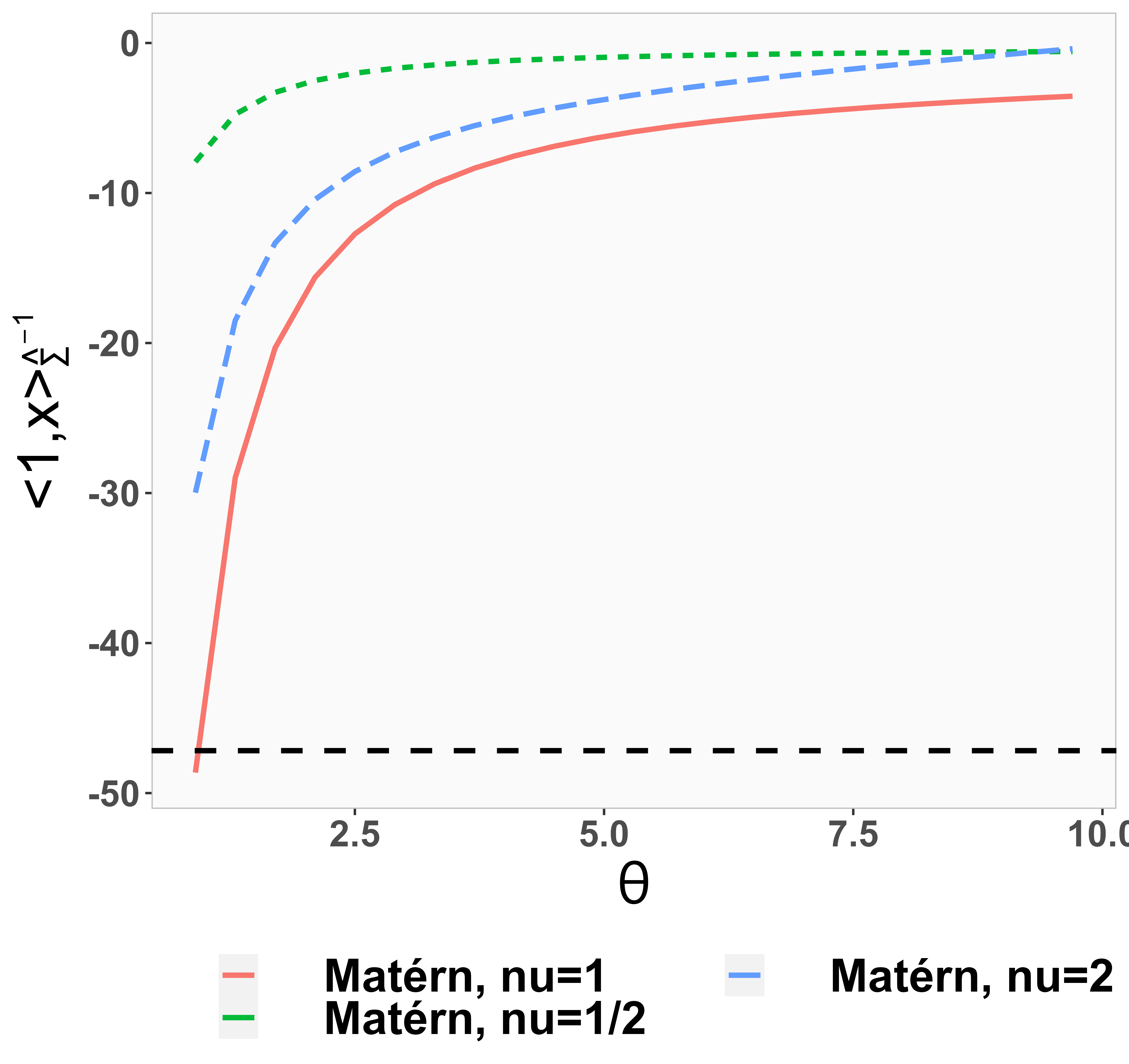} 
\captionof*{figure}{b) $\sigmainvdot{\x}{\boldone}$  } 
\end{minipage}
\begin{minipage}{0.45\textwidth}
\includegraphics[width=\linewidth]{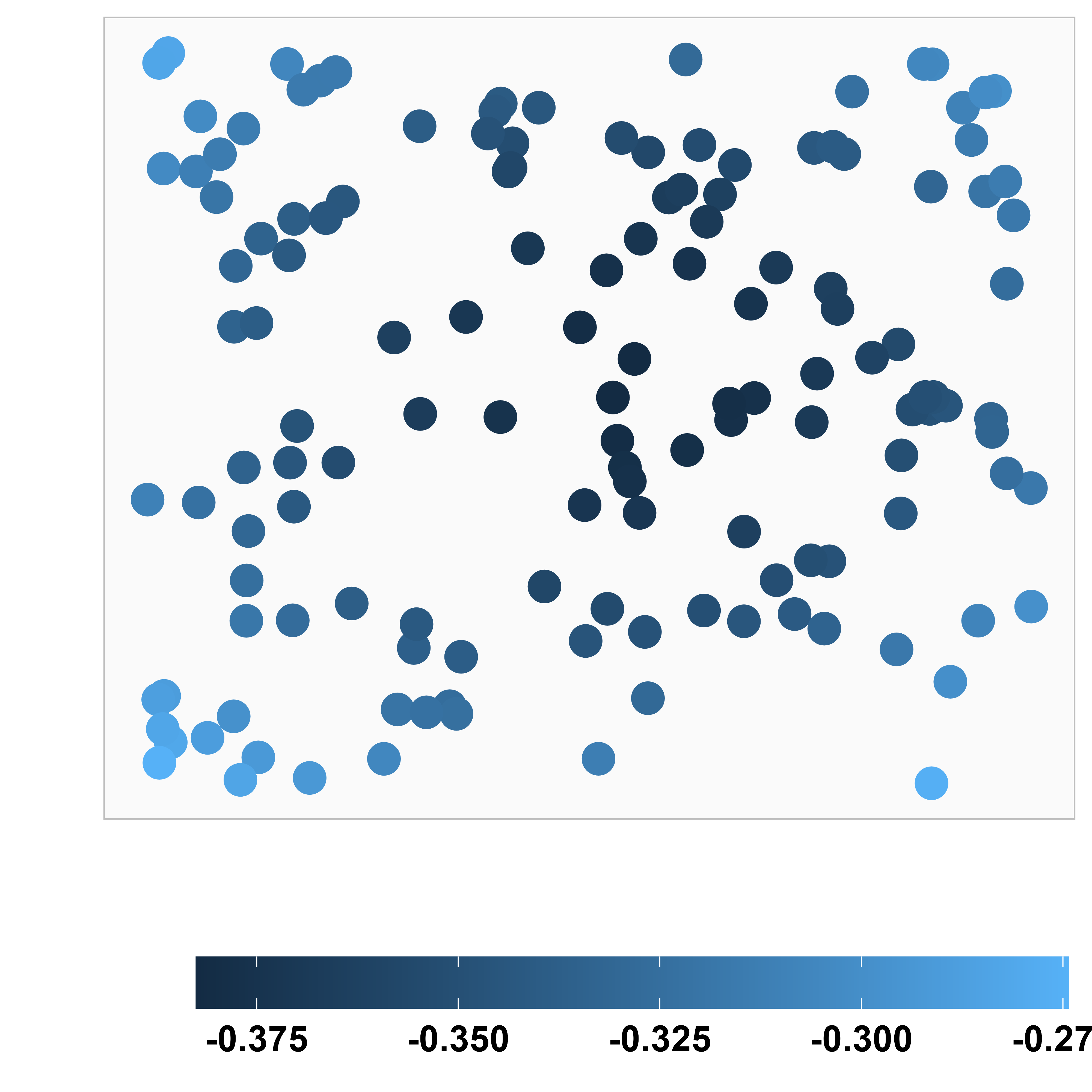}
\captionof*{figure}{c) Illustration of $\x$}
\end{minipage} \hfill
\begin{minipage}{0.45\textwidth}
\includegraphics[width=\linewidth]{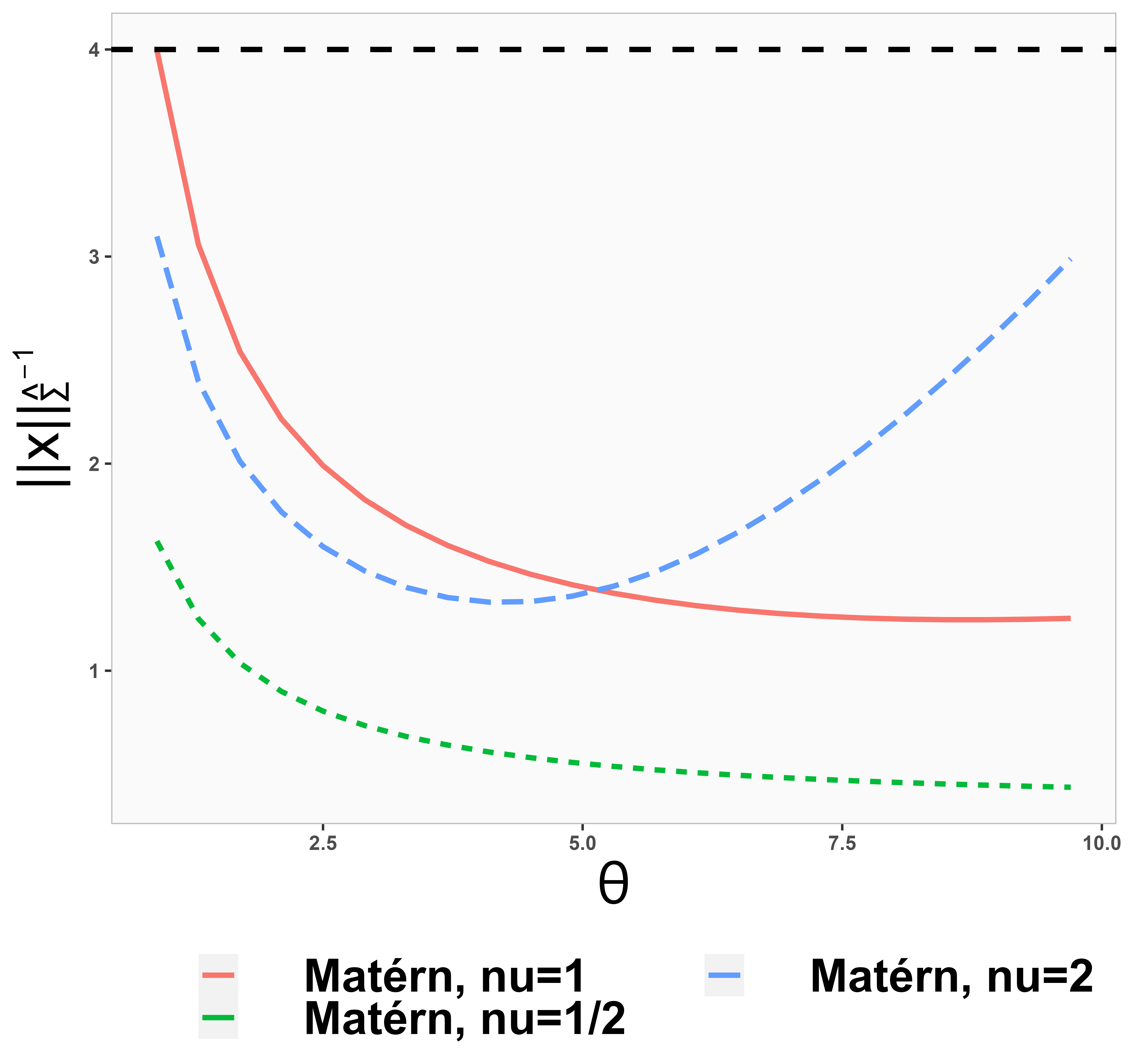}
\captionof*{figure}{ d)$\sigmainvnorm{\x}$}
\end{minipage}
\captionof{figure}{Illustrations of components of \eqref{eq:numobsglsfixed}. All plots were made with the \texttt{ggplot2} package \citet{ggplot2}. }
\label{fig:bias_illustration}
\end{center}

The behavior of \eqref{eq:numobsglsfixed} supports the traditional view that fitting a spatial analysis model helps improve inference on $\beta_x$. When the low-frequency eigenvectors of $\Sigmahinv$ mirror the patterns of either $\x$ or $\z$, fitting a spatial analysis model will tend to result in better estimates of $\beta_x$ than a non-spatial model. It also highlights that what it means to be ``spatially smooth'' for the purposes of bias reduction depends on the analysis model chosen. To see this, note that in \cref{fig:bias_illustration} (a), (b), and (d), the magnitudes can be quite different for different choices of $\Sigmah^{-1}$, particularly when $\theta$ is small.

Recall from our discussion in \cref{subsec:datagen}, researchers are sometimes concerned with collinearity between $\x$ and $\z$ as a possible source of confounding bias. We note that when $\z = \alpha \x$, for $\alpha \neq 0$, then \eqref{eq:numobsglsfixed} is always less than or equal to $\alpha \sigmainvnorm{\boldone}^2 \sigmainvnorm{\x}^2$. If $\x$ is correlated with low-frequency eigenvectors or the low-frequency eigenvectors are flat, this term will typically be smaller for a spatial analysis model than for a non-spatial analysis model. In other words, this suggests spatial analysis models can still reduce bias relative to a non-spatial analysis model when $\x$ or $\z$ are collinear so long as at least one of them is spatially smooth.

We now turn our attention to the denominator of \eqref{glsstochbias}:
\begin{eqnarray*}  \label{eq:demobsglsfixed}
\sigmainvnorm{\boldone}^2  \sigmainvnorm{\x}^2   -  \left[ \sigmainvdot{\x}{\boldone} \right]^2 = \\
 \sigmainvnorm{\boldone}^2  \sigmainvnorm{\x}^2 \sin{\sigmainvangle{\x}{\boldone}}^2,
\end{eqnarray*}
where $\sigmainvangle{\x}{\boldone}$ is the angle between $\x$ and $\boldone$ with respect to the Riemannian metric induced by $\Sigmahinv$ (see \cref{appa_diff_geometry} for more details). The term $\sin{\sigmainvangle{\x}{\boldone}}$ will be minimized when $\x$ is linearly dependent with $\boldone$ (i.e., constant), and it will be maximized when $\x$ is perpendicular (with respect to the Riemannian metric induced by $\Sigmahinv$) to $\boldone$. In other words, because we are considering the denominator of the bias, the flatter $\x$ becomes the larger the bias. This behavior supports the insights from research into the analysis model source of spatial confounding: $\x$ which are ``too'' spatially smooth \textit{can} distort inference on $\beta_x$. 

Pulling these insights together, we see that bias \emph{can} decrease with a spatial analysis model in settings where $\x$, $\z$ are spatially smooth or cases in which low-frequency eigenvectors of $\Sigmahinv$ are flat. We emphasize again that what it means to be spatially smooth depends on the correlation of $\x$ and $\z$ with the low-frequency eigenvectors of $\Sigmahinv$.  Notably, for cases when $\x$ is not only spatially smooth, but flat, the numerator and denominator of \eqref{glsstochbias} can work in opposite directions. At the extreme, when $\x$ is collinear with $\boldone$, the bias will be 0 (where we use the mathematical convention that $\frac{0}{0} =0$). However, as $\x$ becomes flatter, it's possible that the denominator will shrink faster than the numerator in some settings. In this case, the flatness of $\x$ can effectively serve to increase the bias. This reinforces the observations made by researchers influenced by analysis model spatial confounding. Finally, we note that for the case of collinearity between $\x$ and $\z$ (i.e., returning to $\z = \alpha \x$, $\alpha \neq 0$), the overall bias term is $\alpha$ for both spatial analysis models and non-spatial analysis models. This suggests, contrary to some research in data generation spatial confounding, that bias induced by collinearity between $\x$ and $\z$ is not exacerbated by fitting a spatial analysis model.

Again, the selection of the dependence structure will play a key role in the amount of bias.  When fitting a spatial model with a Mat\'{e}rn covariance structure, it is common to choose a fixed value of the smoothing parameter, $\nu$, and estimate other correlation function parameters.  The same principle applies to a GMRF, where a neighborhood structure is chosen and fixed. These choices can have large impacts on the amount of bias for the same data set.

\subsection{Bias: Adjusted Spatial Analysis Models} \label{subsec:biasadj}
In this section, we consider the impact of the analysis model on inference of $\beta_x$ for the gSEM and Spatial+ approaches referenced in \cref{subsec:alleviating}. Recall, these models were developed to improve inference on $\beta_x$ when certain assumptions about the data generation process are assumed to be true. For the gSEM and Spatial+ methods, these assumptions include that $\x = \beta_{z:x} \z + \epsilon_x$.   When this is the case, the data generation source of spatial confounding suggests that fitting gSEM or Spatial+ will reduce bias relative to a spatial analysis model or non-spatial analysis model.  (Specifically, from \cref{stoch_ols_bias}, $\rho = 1$, $\theta_u$ is essentially 0, and $\sigma_z^2$ and $\sigma^2_c$ are functions of $\beta_{z:x}$.)  Thus, we focus here on exploring the bias from an analysis model perspective.

We take a moment to give details on both approaches. The gSEM approach, summarized in \cref{gsem}, is equivalent to replacing $\y$ and $\x$ with $\ry$ and $\rb$ \citep{Thaden}. This latter set of variables are defined to be the residuals, respectively, from  spatial analysis models using $\y$ and $\x$ as the response variable \citep[the examples in ][ use no covariates]{Thaden}. These residuals are then used to fit a non-spatial analysis model of the form \eqref{eq:OLSmodel}, and inference for $\beta_x$ is based on that estimate.  \citet{dupont2022spatial+} used the approach for gSEM described in \cref{gsem} and found that the gSEM approach improved inference only when smoothing was used in Steps 1 and 2, and we adopt this convention from hereon out.  The Spatial+ approach, summarized in \cref{spatialplus}, involves replacing $\x$ with $\rb$. The analysis model used for inference on $\beta_x$ is then a spatial analysis model with response $\y$ and covariate $\rb$.

\begin{analysis}[gSEM] \label{gsem}
The gSEM approach can be summarized as follows:
\begin{enumerate}
    \item Define $\rb$ to be the residuals from the fitted values of a spatial analysis model with $\x$ as the response variable.
    \item Define $\ry$ to be the residuals from the fitted values of a spatial analysis model with $\y$ as the response variable.
    \item Fit an analysis model of the form \eqref{eq:OLSmodel} with response $\ry$ and covariate $\rb$.
\end{enumerate}
\end{analysis}

\begin{analysis}[Spatial+] \label{spatialplus}
The Spatial+ approach can be summarized as follows:
\begin{enumerate}
    \item Define $\rb$ to be the residuals from the fitted values of a spatial analysis model with $\x$ as the response.
    \item Fit an analysis model of the form \eqref{eq:genericadjSpatial} with response $\y$ and covariate $\rb$.
\end{enumerate}
\end{analysis}

The spatial analysis model used in \cite{Thaden} is a repeated measures model with a random effect for each areal region, while the spatial analysis model used in \cite{dupont2022spatial+} is a generalized additive model with penalized thin plate splines across the space.  There are many ways to fit a spatial analysis model to obtain residuals and it makes sense that different choices will result in different estimates of $\beta_x$.  Each unique data set may call for a different approach to spatial modeling.  For example, \cite{Cressie93} provides discussions on how different data sets and analysis goals would result in different modeling choices for the spatial dependence: through the mean or ``large-scale variation'' versus the stocahstic-dependence structure or ``small-scale variation.''    Thus, there are several ways we could adapt the gSEM and Spatial+ approaches to the spatial linear mixed model setting considered here and the biases of the corresponding estimates of $\beta_x$ will naturally be different for different choices.  Here we provide a big-picture exploration of the bias for a more general form of residuals relying on insights from \cref{subsubsec:analysis} to understand bias when the final model is of the form \eqref{eq:genericadjSpatial}.  In this section, we focus on the bias of the Spatial+ approach, which was developed for data scenarios with a single observation per location as considered throughout this work.  However, we do provide a similar equation for the bias for a general gSEM approach in \cref{app:obsgsemfixed}.

\begin{thm} \label{thm:spatialplus_bias}
Let the data generating model be of the form \cref{eq:model_0} with $\x$ known. Let $\rb$ be the residuals from a generic spatial analysis model with response variable $\x$ and fitted values $\hat{\x} = \bm{S}_x\x$.

If the final analysis model is an adjusted spatial analysis model of the form \eqref{eq:genericadjSpatial} with $\tilde{\x} = \rb$, $\tilde{\y} = \y$, and results in the estimate $\Sigmah$, then the bias $\bias{\hat{\beta}_{x}^{S+} }= E\left(\hat{\beta}_{x}^{S+}\right) - \beta_x$ can be expressed as:
\begin{eqnarray}  \label{eq:obsadjglsfixed}
A_2^*(\rb,\x) + B_2(\rb,\x),
\end{eqnarray}
where
\begin{align*}
 A_2^*(\rb,\x) &=\beta_x \left(\frac{ \left( \sigmainvnorm{\boldone}^2 \sigmainvdot{\rb}{\x} - 	\sigmainvdot{\rb}{\boldone}	\sigmainvdot{\x}{\boldone} \right)}{ \sigmainvnorm{\boldone}^2 \sigmainvnorm{\rb}^2 -  \left[ \sigmainvdot{\rb}{\boldone} \right]^2} -1 \right), \\
 B_2(\rb,\x) &= \beta_z \left(\frac{ \left( \sigmainvnorm{\boldone}^2 \sigmainvdot{\rb}{\z} - 	\sigmainvdot{\rb}{\boldone}	\sigmainvdot{\z}{\boldone} \right) } { \sigmainvnorm{\boldone}^2 \sigmainvnorm{\rb}^2 -  \left[ \sigmainvdot{\rb}{\boldone} \right]^2}\right).
\end{align*}
\end{thm}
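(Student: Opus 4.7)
The plan is to mimic the derivation of \cref{lemma:glsfixed}, but to keep the residual vector $\rb$ (which sits in the design matrix) formally distinct from the true generating covariates $\x$ and $\z$ (which appear through $E(\y)$). Because the final Spatial+ model is a standard GLS fit with design matrix $D = [\boldone, \rb]$ and working precision $\Sigmahinv$, the estimator is
\[
\hat{\bm{\beta}}^{S+} \;=\; (D^T \Sigmahinv D)^{-1} D^T \Sigmahinv \y,
\]
so the first step is simply to write $\hat{\beta}_x^{S+}$ as the second coordinate of this 2-vector.

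Next, I would take expectations with respect to the generating model \eqref{eq:model_0}, giving $E(\y) = \beta_0 \boldone + \beta_x \x + \beta_z \z$. Substituting, the right-hand factor expands as
\[
D^T \Sigmahinv E(\y) \;=\; \beta_0 \begin{pmatrix} \sigmainvnorm{\boldone}^2 \\ \sigmainvdot{\rb}{\boldone} \end{pmatrix} + \beta_x \begin{pmatrix} \sigmainvdot{\boldone}{\x} \\ \sigmainvdot{\rb}{\x} \end{pmatrix} + \beta_z \begin{pmatrix} \sigmainvdot{\boldone}{\z} \\ \sigmainvdot{\rb}{\z} \end{pmatrix}.
\]
Using the closed-form $2\times 2$ inverse of $D^T \Sigmahinv D$ (with determinant $\sigmainvnorm{\boldone}^2 \sigmainvnorm{\rb}^2 - [\sigmainvdot{\rb}{\boldone}]^2$, mirroring the denominator in \cref{eq:obsglsfixed}), I would read off the second coordinate of the product. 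A quick check shows the two $\beta_0$ contributions cancel exactly, leaving a clean linear combination of $\beta_x$ and $\beta_z$ terms. Grouping the $\beta_x$ and $\beta_z$ pieces and subtracting $\beta_x$ (noting the sign convention here is $E(\hat{\beta}_x^{S+}) - \beta_x$, opposite to that in \cref{lemma:glsfixed}) produces exactly $A_2^*(\rb,\x) + B_2(\rb,\x)$.

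In more detail, the $\beta_x$ contribution minus $\beta_x$ factors as $\beta_x\bigl(\tfrac{\sigmainvnorm{\boldone}^2 \sigmainvdot{\rb}{\x} - \sigmainvdot{\rb}{\boldone} \sigmainvdot{\x}{\boldone}}{\sigmainvnorm{\boldone}^2 \sigmainvnorm{\rb}^2 - [\sigmainvdot{\rb}{\boldone}]^2} - 1\bigr)$, which is the definition of $A_2^*$; the $\beta_z$ contribution is, similarly, $B_2$.

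The main obstacle is almost entirely bookkeeping: making sure the $\beta_0$ intercept terms really do cancel (which they must, since the column of ones is in both design matrices), keeping the sign convention straight, and being careful that $\rb$ only enters the design while $\x$ only enters $E(\y)$, so the proof does not accidentally replicate Lemma~\ref{lemma:glsfixed}'s clean $\beta_z$-only form. No assumption is needed on how $\rb$ was produced (e.g.\ via a smoother $\bm{S}_x$); $\rb$ is simply treated as a fixed vector by the final GLS fit, which is why the bias formula remains valid for any choice of the first-stage spatial model that generated $\rb$.
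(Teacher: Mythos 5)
Your proposal is correct and follows essentially the same route as the paper's own derivation in the appendix: write the Spatial+ estimator as a GLS fit with design $[\boldone\ \rb]$ and working precision $\Sigmahinv$, substitute $E(\y)=\beta_0\boldone+\beta_x\x+\beta_z\z$, invert the $2\times 2$ Gram matrix, observe that the $\beta_0$ contribution to the second coordinate vanishes, and subtract $\beta_x$ to obtain $A_2^*(\rb,\x)+B_2(\rb,\x)$. Your observations that no assumption on how $\rb$ was produced is needed and that the sign convention is reversed relative to \cref{lemma:glsfixed} are both consistent with the paper's treatment.
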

\begin{proof}
See \cref{app:obsadjglsfixed} for the proof.
\end{proof}

Recall from \cref{subsubsec:analysis} that the norms and inner product terms (e.g., $||\boldone||^2_{\bm \Sigma^{-1}}$ and $\sigmainvdot{\x}{\boldone}$) will tend to 0 as the corresponding vectors are similar to the low frequency eigenvectors of $\hat{\bm{\Sigma}}^{-1}$.  Consider this behavior with respect to $\sigmainvnorm{\rb}$.  The intent of obtaining $\rb$ is to remove the smooth spatial behavior from $\x$.  Thus, it is reasonable to assume that $\rb$ will \emph{not} be spatially smooth with respect to $\bm{\Sigma}^{-1}$ and thus, $\sigmainvnorm{\rb}$ will not tend to 0.  We explore this in Figure \ref{fig:rx_bias_illustration}.  Using the same values of $\x$ from Figure \ref{fig:bias_illustration}, we use penalized thin plate splines to predict $\hat{\x}$ and compute $\rb = \x - \hat{\x}$; these residuals are illustrated in \cref{fig:rx_bias_illustration}(c).  While some clustering is still apparent, the range is visibly smaller than the range of the original $\x$.  It is easy to see in \cref{fig:rx_bias_illustration}(a) that the smoother (i.e., $\nu=2$) and longer the range (i.e., $\theta$ large) of the spatial dependence, the larger the norm induced by $\Sigmahinv$ will be for the residuals.  This holds true for the other values of $\nu$, although it is harder to see in this plot.  This behavior is similar for any inner product that includes $\rb$ (see \cref{fig:rx_bias_illustration}(b) and (d)).  Because all of the norms and inner products in the denominator of both $A_2^*(\rb, \x)$ and $B_2(\rb,\x)$ include $\rb$, we expect that in contrast to the bias of the spatial model which could increase due to a denominator that decreases faster than the numerator, the bias from the Spatial+ approach will generally not increase as a result of a faster decreasing denominator.  In other words, the denominator for the Spatial+ model bias should be more ``well behaved.''

Now consider the numerator of $B_2(\rb, \x)$.  Just as in \cref{subsubsec:analysis}, when the low-frequency eigenvectors of $\bm{\Sigma}^{-1}$ are ``flat,'' any inner product with $\boldone$ will tend to 0.  Additionally, if $\z$ is spatially smooth with respect to $\bm{\Sigma}^{-1}$, any of the inner products that include $\z$ will tend to 0.  In these cases, the numerator will tend to 0 while the denominator will be well behaved so that $B_2(\rb, \x)$ tends to 0.

Moving to $A^*_2(\rb, \x)$ we find a bit more complicated of a relationship.  For $A_2(\rb, \x)$ to tend to 0, as is desired, we want the numerator of the fraction within this term to be as similar as possible to the denominator so that the fraction will tend to 1.  This will happen when $\rb$ and $\x$ are similar with respect to $\bm \Sigma^{-1}$.  To explore this, consider the fitted values of $\x$, $\hat{\x}$, used to obtain $\rb = \x - \hat{x}$.  Generally, $\hat{\x}$ will be a spatially-smoothed version of $\x$.  In one extreme, they are perfectly smooth so that $\hat{\x} = \alpha\boldone$.  In this case, the numerator will be equal to the denominator so that $A_2^*(\rb, \x)$ will be 0 (and in fact, the total bias will be the same as when the spatial model is fitted; see \cref{app:obsadjglsfixed} for proof details).  In the other extreme, the fitted values will perfectly match $\x$ and all terms involving $\rb$ will be 0, resulting in a bias due to $A_2^*(\rb, \x)$ of $\beta_x$.  In the thin plate spline scenario considered by \cite{dupont2022spatial+}, they found that the Spatial+ approach outperformed the spatial model in terms of bias when they used a penalization term.  By optimizing the penalization, the predicted values will be somewhere between the over-fitted terms and the overly-smoothed constant.  Thus, in practice, we expect $A^*_2(\rb, \x)$ to be somewhere between these two extremes of 0 and $\beta_x$.   

How $A^*_2(\rb, \x)$ and $B_2(\rb, \x)$ behave together will require more exploration and will vary depending on the chosen analysis models for every step.  Thus, we more fully examine the estimation error in the simulation studies in \cref{sec:simstudies}.  However, notice that the range of the norms and inner products in \cref{fig:rx_bias_illustration} are all smaller than the norms and products in \cref{fig:bias_illustration}.  Thus, while the bias from the adjusted models include bias from both $\beta_x$ and $\beta_z$, the total bias \emph{may} still be reduced relative to the spatial and non-spatial models. 

\begin{center}
\begin{minipage}{0.45\textwidth}
\includegraphics[width=\linewidth]{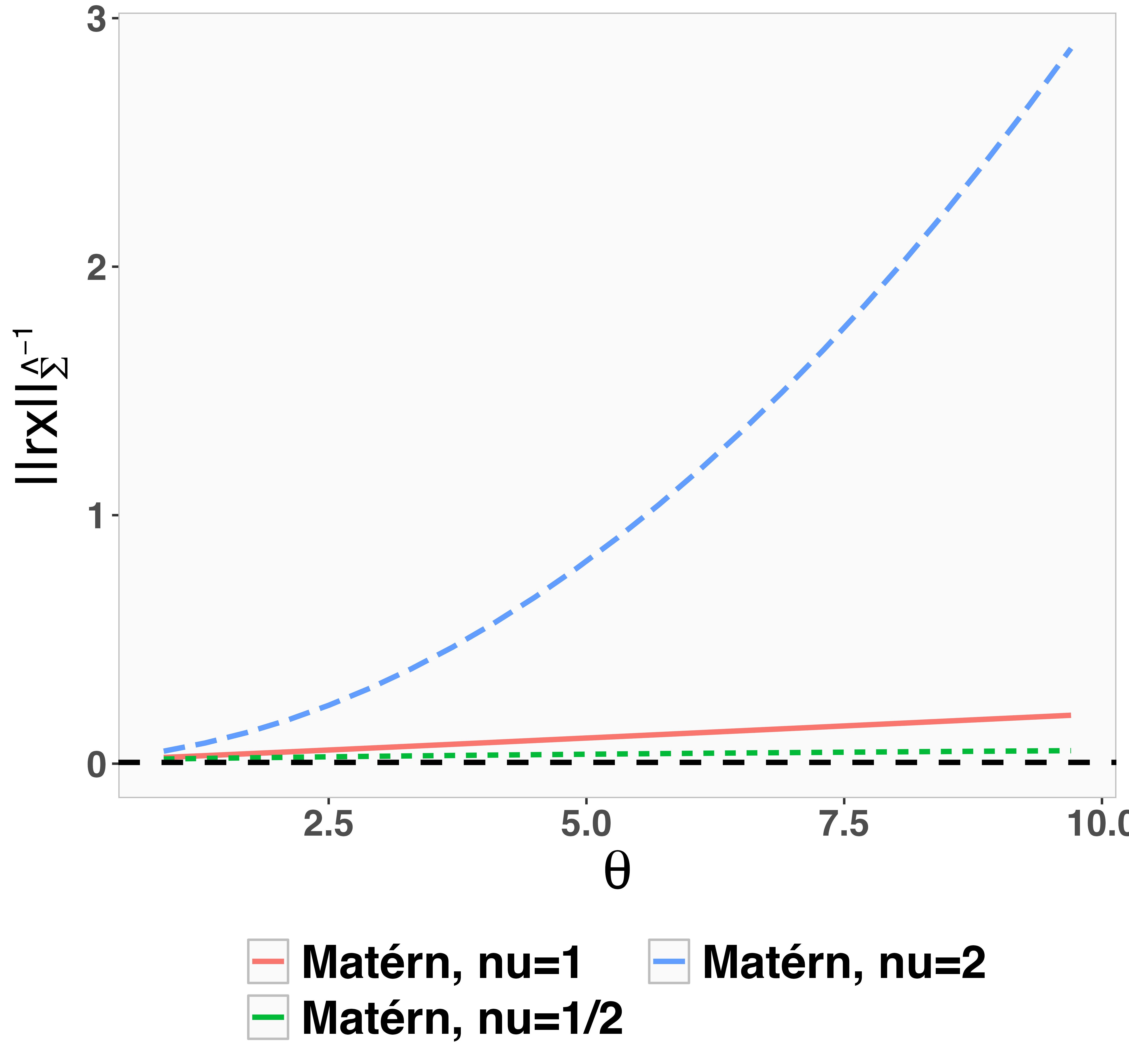}
\captionof*{figure}{a) $\sigmainvnorm{\rb}$}
\end{minipage}\hfill
\begin{minipage}{0.45\textwidth}
\includegraphics[width=\linewidth]{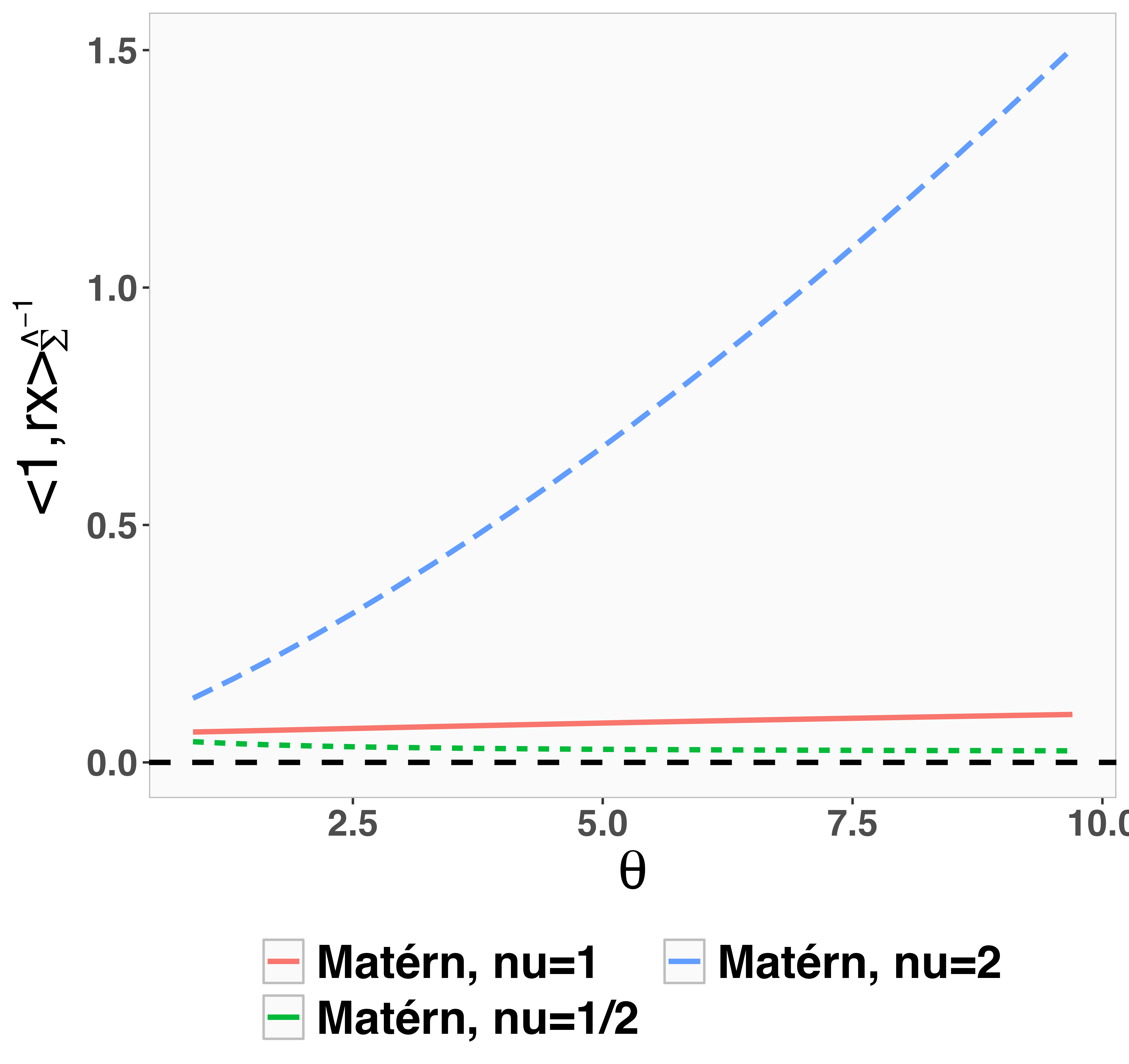} 
\captionof*{figure}{b) $\sigmainvdot{\rb}{\boldone}$  } 
\end{minipage}
\begin{minipage}{0.45\textwidth}
\includegraphics[width=\linewidth]{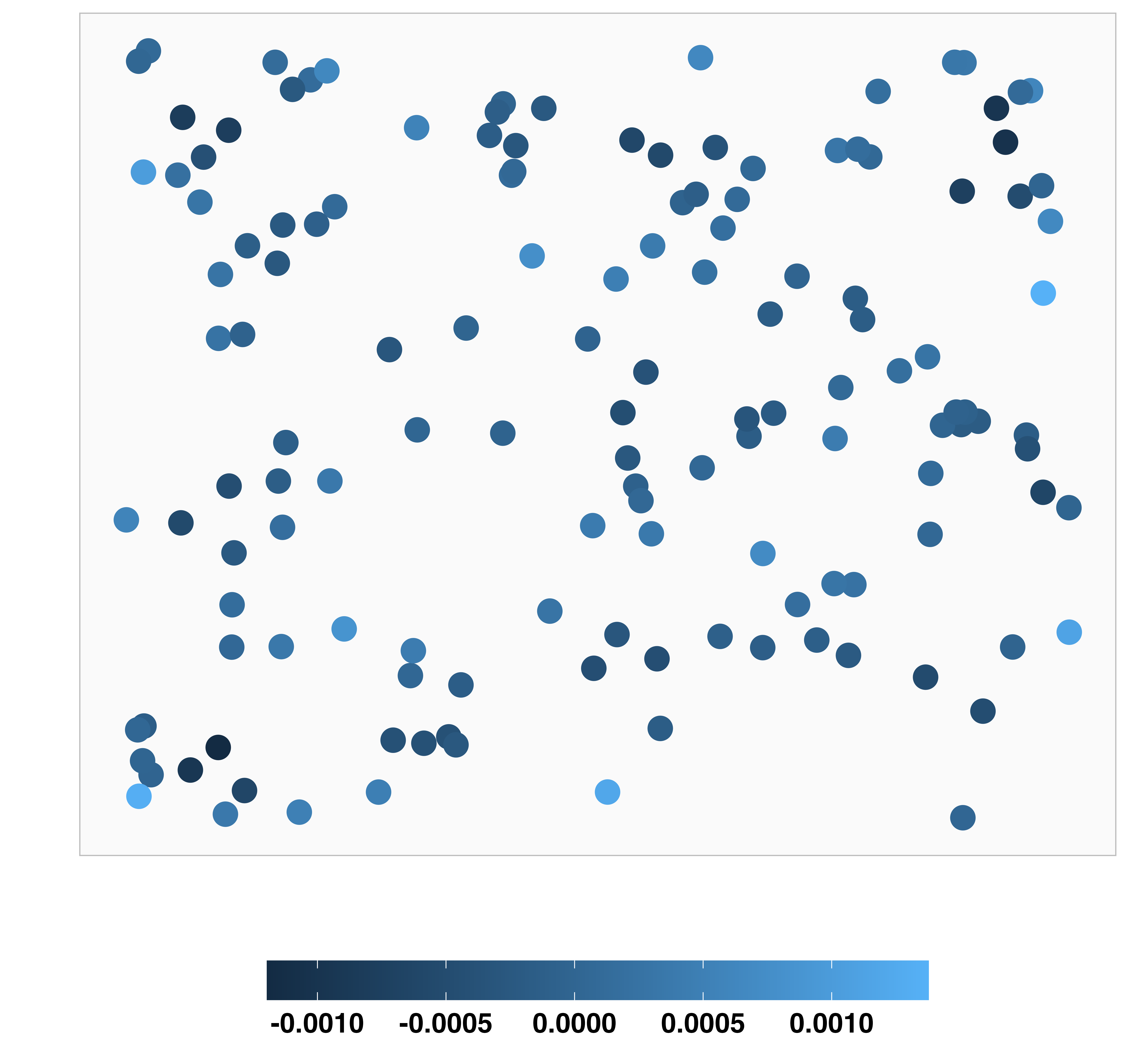}
\captionof*{figure}{c) Illustration of $\rb$}
\end{minipage} \hfill
\begin{minipage}{0.45\textwidth}
\includegraphics[width=\linewidth]{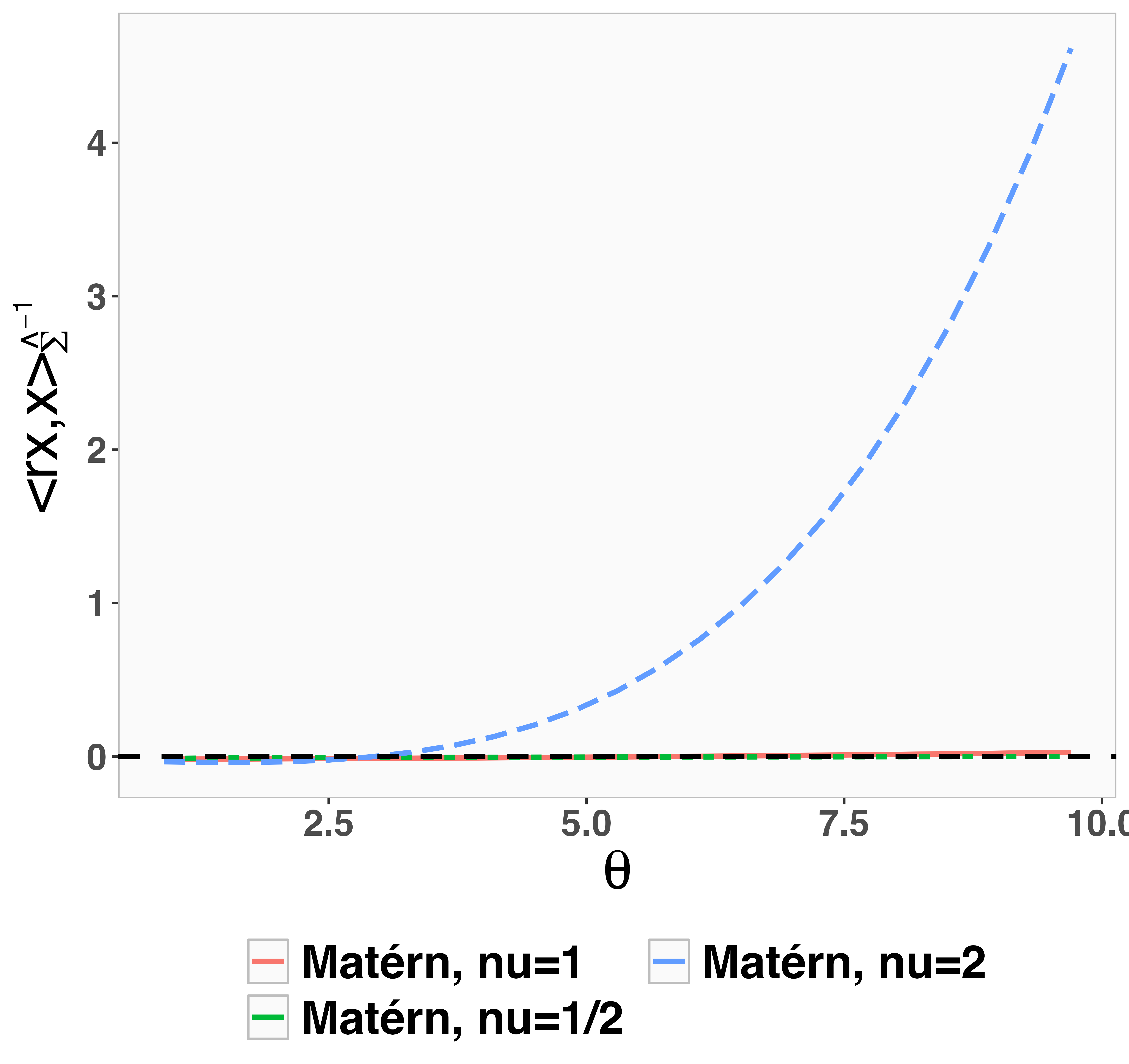}
\captionof*{figure}{ d)$\sigmainvdot{\rb}{\x}$}
\end{minipage}
\captionof{figure}{Illustrations of components of \cref{thm:spatialplus_bias}. All plots were made with the \texttt{ggplot2} package \citet{ggplot2}. }
\label{fig:rx_bias_illustration}
\end{center}

\section{Simulation Studies} \label{sec:simstudies}
In the following simulation studies, we consider settings that have been identified in the literature as times when spatial confounding can distort inference for a regression coefficient. Each simulation study is designed to explore whether insights from the data generation spatial confounding explored in \cref{subsubsec:datagen} or the analysis model spatial confounding explored in \cref{subsubsec:analysis} have any relevance to the patterns of the observed errors for estimates of regression coefficients.  We also get an idea of the variability of the bias rather than just the expected bias.

The results of this paper have primarily focused on spatial models for geostatistical data that involve positive-definite covariance structures. However, the intrinsic conditional autoregressive (ICAR) model \citep{besag1991bayesian} plays an important role in the spatial confounding literature, particularly from a Bayesian paradigm. It was the model first considered in \citet{Hodges_fixedeffects} and \citet{Reich} in the modern introduction to the phenomena of spatial confounding. As referenced previously, the Spatial+ methodology was originally developed for the thin spline plate setting, but the authors stated that the methodology should extend to the ICAR model.  Thus, in these simulation studies, we consider both geostatistical data fit to the class of models considered in our results (referred to as the ``Geostatistical data setting'') as well as areal data fit to an ICAR model (referred to as the ``Areal data setting'' because the ICAR model employs a GMRF).

Because we have not previously defined the ICAR model, we take a moment to do so here. The ICAR model incorporates spatial dependence for areal data with the introduction of an underlying, undirected graph $G = (V, E)$. Non-overlapping spatial regions that partition the study area are represented by vertices, $V = \{ 1, \ldots, n\}$, and edges $E$ defined so that each pair $(i,j)$ represents the proximity between region $i$ and region $j$. We represent $G$ by its $n \times n$ binary adjacency matrix $\bm{A}$ with entries defined such that $\textrm{diag}(\bm{A}) = 0$ and $\bm{A}_{i,j} = \mathbbm{1}_{(i,j) \in E, i \neq j}$. The ICAR model could be considered a generalization of the spatial analysis model of the form \eqref{eq:genericSpatial}, by stating that the spatial random effect has a distribution proportional to a multivariate normal distribution with mean $\bm{0}$ and precision matrix $\tau^2 \left( \bm{I} \bm{A} \bm{1} - \bm{A} \right) = \tau^2 \bm{Q}$, where $\tau^2$ controls the rate decay for the spatial dependence and $\bm{Q}$ is the graph Laplacian.

This precision matrix is not of full rank, so we use a Bayesian analysis to fit the spatial model.  We note there is a close connection between certain types of Bayesian analyses in this setting and modeling spatial random effects through the use of a smoothing penalty, as is done in the thin plate spline setting \citep{kimeldorf1970spline,Rue,dupont2022spatial+}.  However, the two approaches are not the same; for example, a Bayesian model does not provide parameter ``estimates,'' but rather provides posterior distributions.  In this case, to make comparison simple, we use the posterior means for our parameter inferences.  Because the graphs we consider are connected, there is an implicit intercept present in the ICAR model \citep{Pac_ICAR}. Therefore, we omit an intercept from our spatial analysis models. For a Bayesian analysis, $\sigma^2$ and $\tau^2$ require priors. Here, we give them Inverse-Gamma distributions with scale and rate 0.01 each. Finally, to make the non-spatial model comparable, we also use a Bayesian analysis, giving the $\sigma^2$ parameter an Inverse-Gamma prior with the same hyperparameters as the spatial model. All models are fit using Markov chain Monte Carlo (MCMC) algorithms with Gibbs updates. All MCMC's are run for 80,000 iterations with a 20,000 burn-in.

\subsection{Geostatistical Data Setting} \label{sim1:slmm}
In this subsection, we simulate data to replicate the setting explored in \cref{subsubsec:datagen}. The data are all generated from a model of the form \eqref{eq:model_0} as follows:
\[ \y = 0.3 +   3\x + \z + \bm{\epsilon}, \]
where $\bm{\epsilon}$ are independently simulated from a normal  distribution with mean 0 and variance 0.1. 

The 400 locations of the data are randomly generated on $[0,10] \times [0,10]$ window one time, and these locations are then held fixed. The realizations $\x$ and $\z$ are simulated from mean zero spatial processes, denoted respectively $\X$ and $\Z$, with spatial covariance structures defined by $\Rcovst{} = \exp \{ \frac{-d}{\theta} \}$ for Euclidean distance $d$ (i.e., the exponential field).

We define  $\X = {\X}_c + {\X}_u$ and $\Z$ as follows:
$\textrm{Cov} \left(  \X \right) =  \sigma_c^2\Rtwo{c}  + \sigma^2_u\Rtwo{u} $, $\textrm{Cov} \left(  \Z \right) = \sigma_z^2\Rtwo{c}  $, and 
 $\textrm{Cov} \left( \X,  \Z \right) = \rho\sigma_z\sigma_c  \Rtwo{c}$, and set $\sigma^2_c = \sigma^2_z = \sigma^2_u = 0.1$.
 We generate 100 datasets for each $\left( \theta_c, \theta_u, \rho \right) \in \{1,5,10\} \times \{1,5,10\} \times \{-0.9,-0.6,-0.3,0,0.3, 0.6, 0.9\}$. For each dataset, we fit five models: a non-spatial analysis model of the form \eqref{eq:OLSmodel} (``OLS''),  a spatial analysis model of the form \eqref{eq:genericSpatial} estimated using restricted maximum likelihood (``S''),  a spatial analysis model of the form \eqref{eq:genericSpatial} estimated using penalized thin plate splines (``PS''), and two adjusted models of the form \eqref{eq:genericadjSpatial} using the Spatial+ (``S+'') and gSEM (``gSEM'') approaches both making use of penalized thin plate splines using the \texttt{mgcv} R package \citep{woodtps,wood2015package,woodbook,R2024} to estimate the relevant residuals and $\beta_x$.   For the first spatial model, ``S,''  $g()$ is assumed to have spatial structure defined by $\Rcovst{} = \sigma_{s}^2 \exp \{ \frac{-d}{\theta} \}$, with unknown $\theta$ and $\sigma_{s}^2$, and $\epsilon_i$ has a mean 0 and variance $\sigma_{\epsilon}^2$. 

 Under the bias explored from a data generation perspective in \cref{subsubsec:datagen}, because we are using the exponential covariance structure ($\nu=0.5$), the average observed error, $\hat{\beta}_x - \beta_x$, should reflect the results found in \cref{fig:data_illustrationa}(b).  Specifically, that bias will be reduced for the spatial models relative to the non-spatial model when $\theta_u < 2$ and bias may be increased for the spatial model relative to the non-spatial model when $\theta_c < 2$.  Indeed, our simulation results for observed error also show this pattern as discussed in the following paragraphs. The results also provide insights about the range of the observed errors and the interaction between errors expected from the data generating process and errors introduced from the analysis model.  
 
\cref{fig:bias_analysismodel} provides boxplots of the absolute value of the error, $|\hat{\beta}_x - \beta_x|$, for the 100 generated data sets for each of the five analysis models for different values of $(\theta_c, \theta_u, \rho)$.  A figure showing boxplots for the errors for all combinations of $(\theta_c, \theta_u, \rho)$ considered is provided in \cref{app:sim}. 
 
The absolute value of the error for all models increases as $ \vert \rho \vert \rightarrow 1$.  This matches \eqref{olsstochbias} and \eqref{glsstochbias} where the expected bias from a data generating perspective is a function of $\rho$ and will be equal to $0$ when $\rho=0$.  What these equations did not show was that even when the expected bias for both the OLS and spatial models are 0, the variance of the observed errors will be different. Consider \cref{fig:bias_analysismodel}(a), where $\theta_c=\theta_u=10$ and $\rho = 0$.  In this scenario, we see that the OLS model has a much larger range of observed error values than either of the spatial models (S and PS) and the adjusted spatial models (S+ and gSEM).  We also see that both versions of the spatial models have smaller ranges of errors than the adjusted spatial models.  This pattern held for all range parameters considered when $\rho=0$.

\cref{fig:bias_analysismodel}(b) shows the absolute errors for the case when $\theta_c=1$, $\theta_u=10$ and $\rho=-0.9$.  Noticeably, the non-spatial model outperforms both the spatial and adjusted spatial models in terms of lower absolute error values.  This is exactly what we expected given the results in \cref{subsubsec:datagen}.  What is also noteworthy, albeit  expected as discussed in \cref{subsubsec:analysis}, is that the different spatial models (S and PS) have different amounts of error.  In this case, the spatial model estimated via REML slightly outperforms the spatial model fit with penalized thin plate splines.  When we consider the adjusted spatial models, these two models perform comparably to each other, but slightly worse than the spatial models.

\begin{center}
\begin{minipage}{0.45\textwidth}
\includegraphics[width=\linewidth]{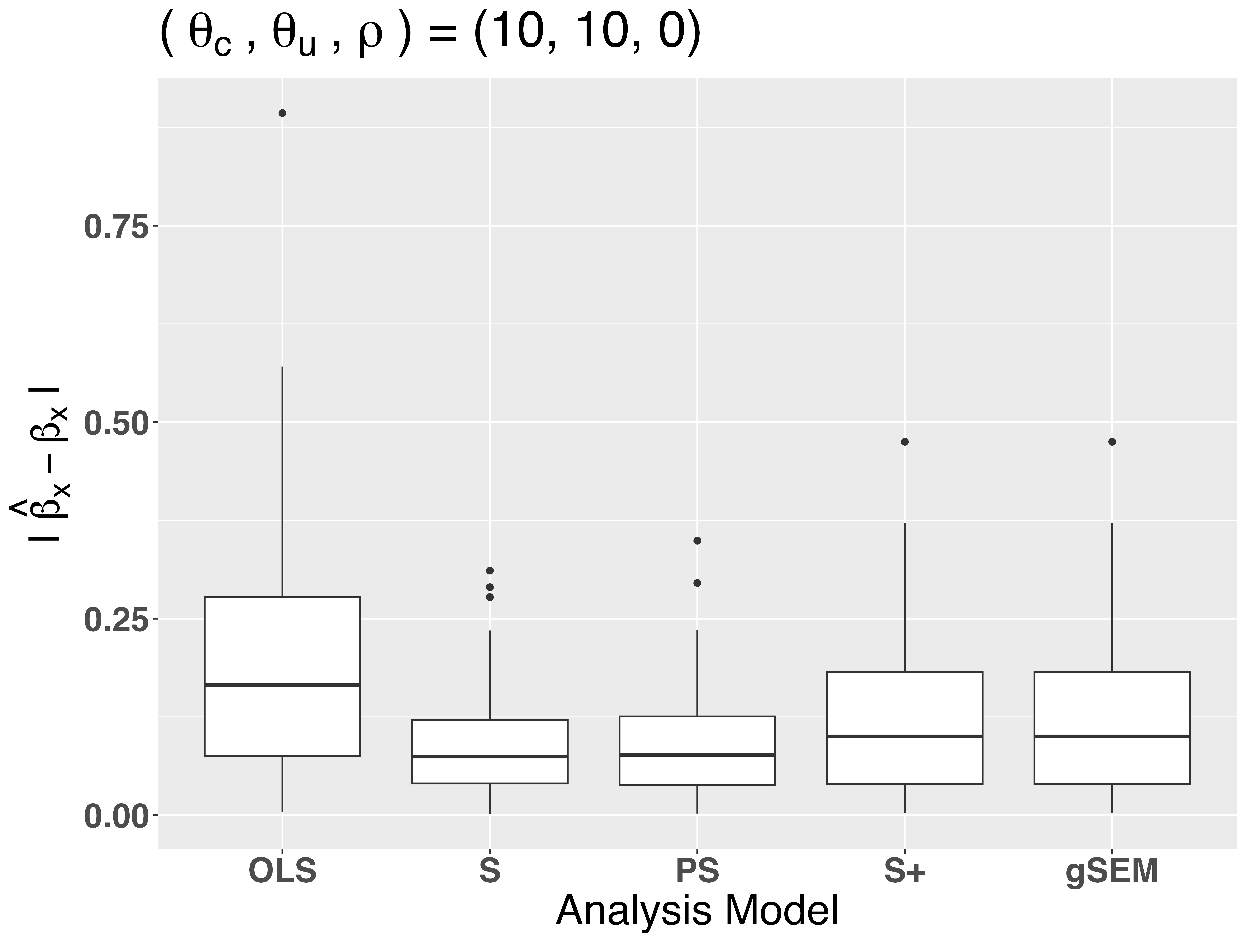}
\centering (a)
\end{minipage}\hfill
\begin{minipage}{0.45\textwidth}
\includegraphics[width=\linewidth]{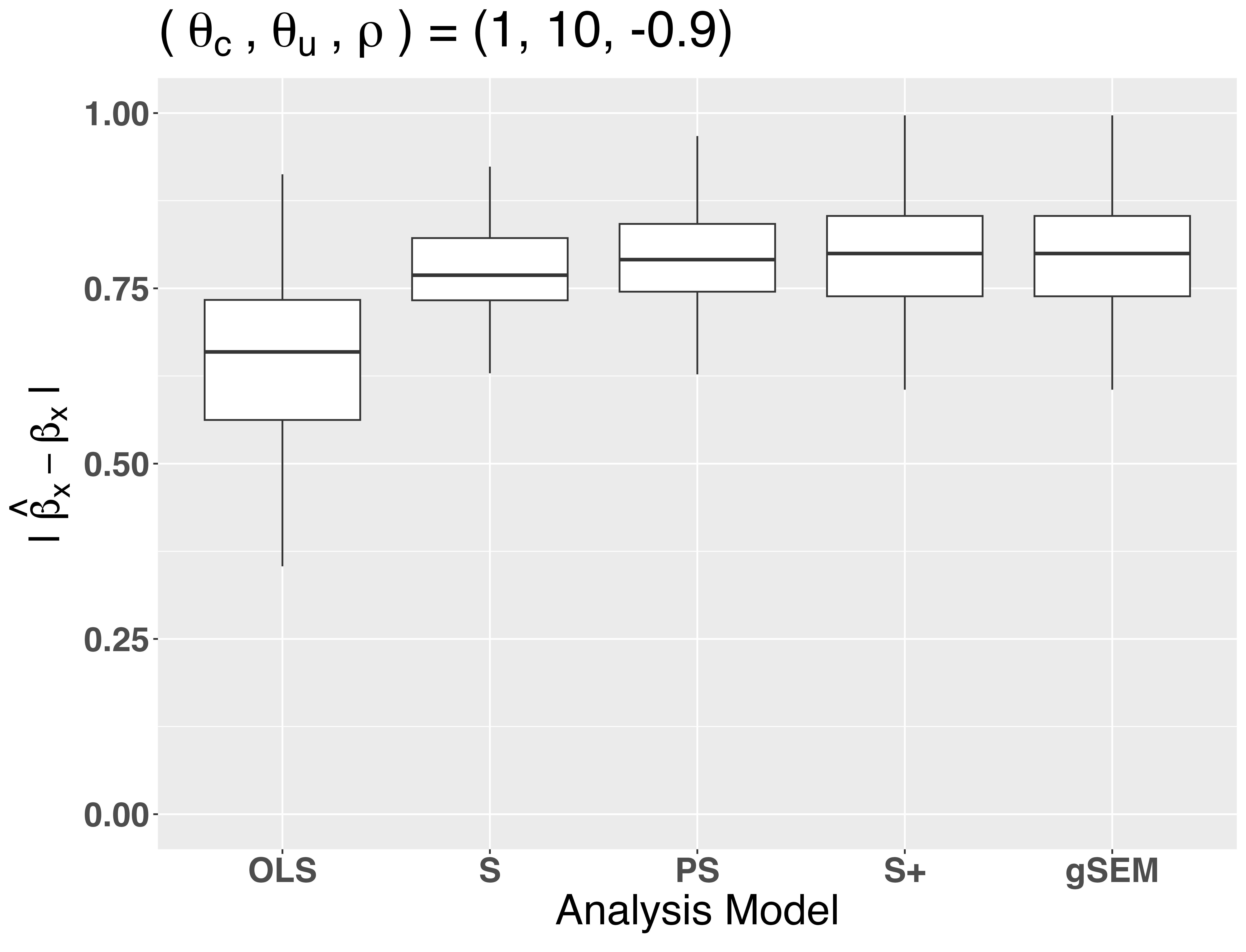}
\centering (b)
\end{minipage}\hfill
\begin{minipage}{0.45\textwidth}
\includegraphics[width=\linewidth]{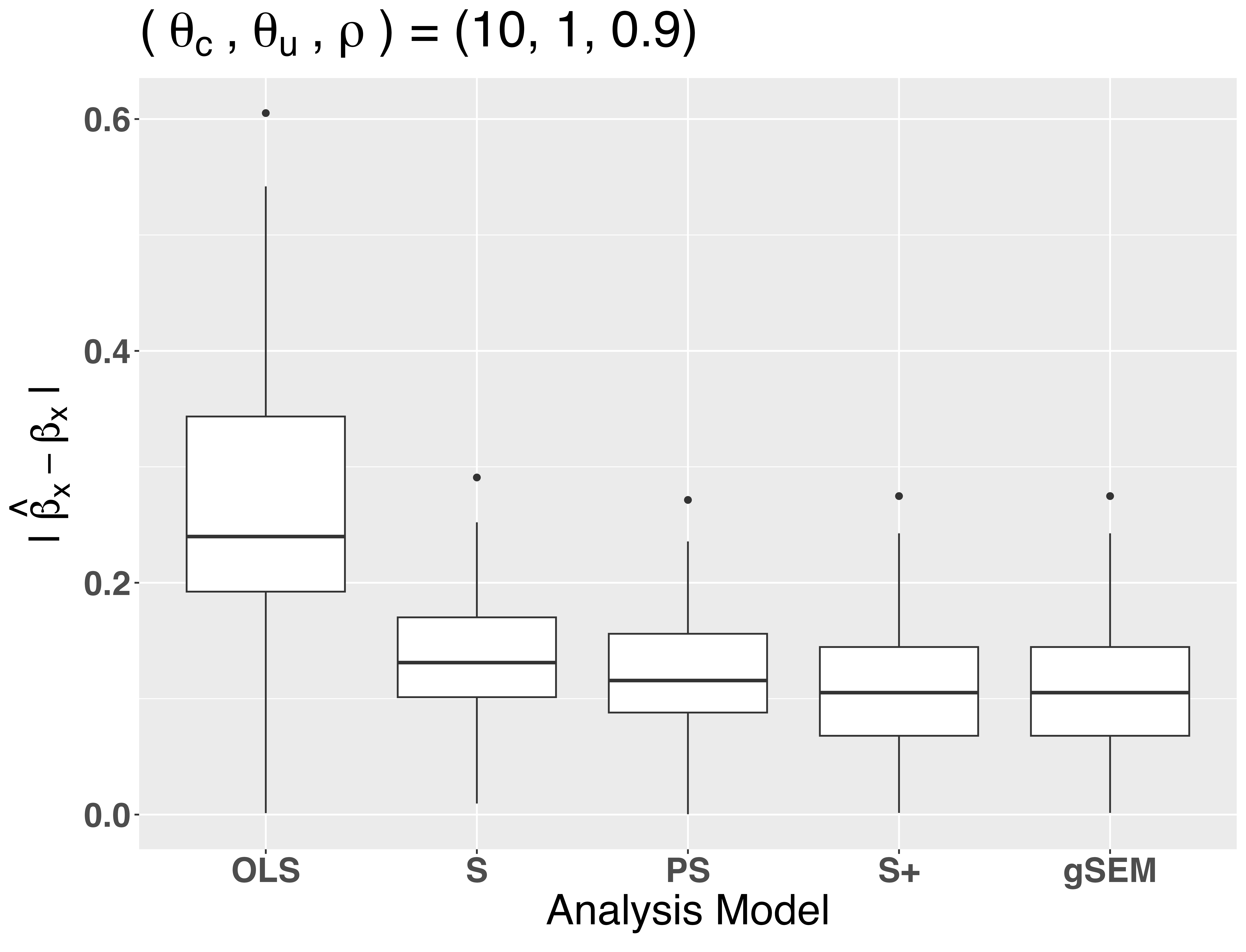}
\centering (c)
\end{minipage}\hfill
\begin{minipage}{0.45\textwidth}
\includegraphics[width=\linewidth]{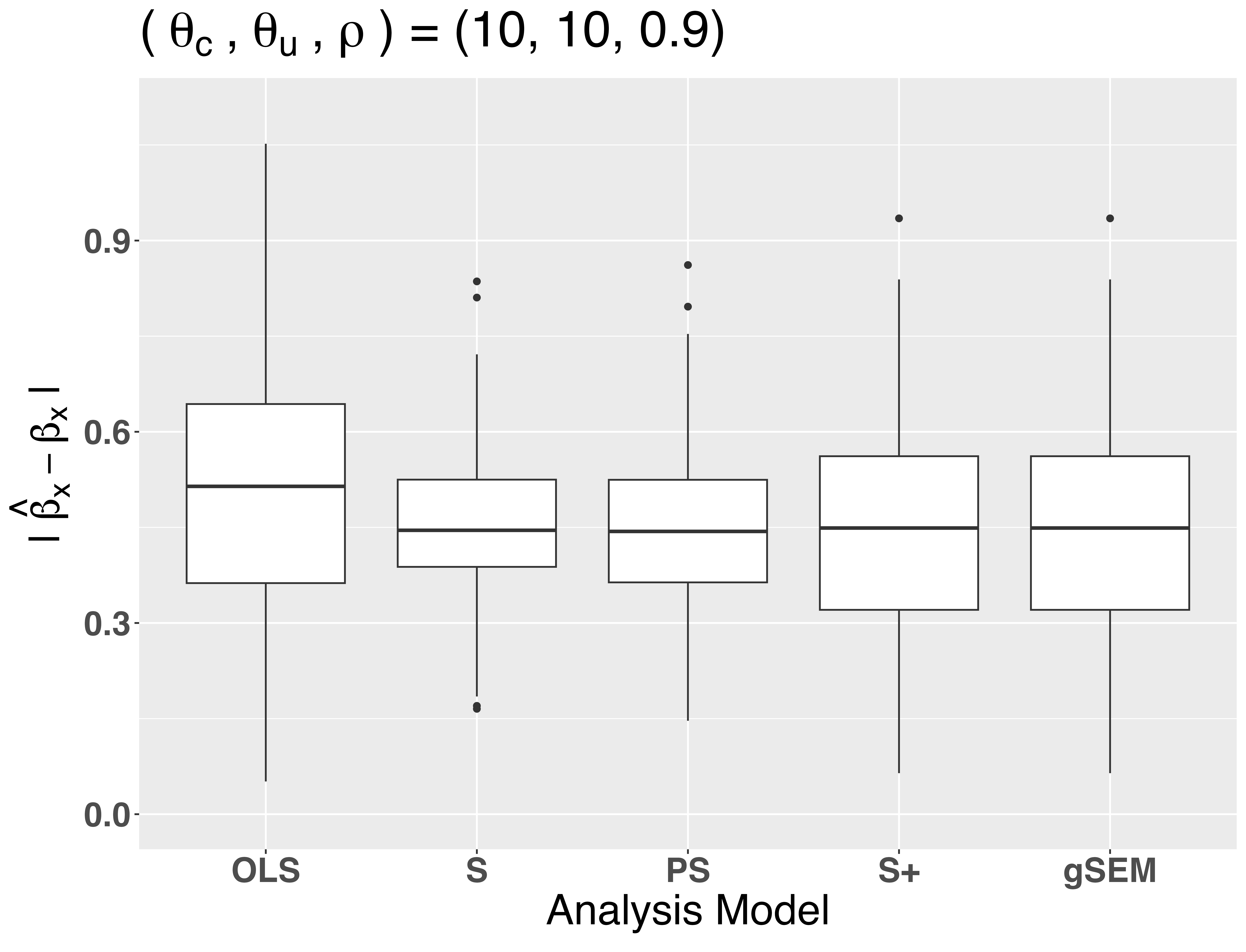}
\centering (d)
\end{minipage}\hfill
\captionof{figure}{Boxplots of the absolute errors, $|\hat{\beta_x} -\beta_x|$, for different data generating covariance parameter values. Note that the y-axes are different for each plot so that the errors for each model are more easily distinguishable within the individual plots.  Plots made with \texttt{ggplot2} \citet{ggplot2}.}
\label{fig:bias_analysismodel}
\end{center}

\cref{fig:bias_analysismodel}(c) shows the ``ideal'' result, in that the models meant to account for data-generating source of spatial dependence and improve inference do improve inference on $\beta_x$.  In fact, this data generating scenario is very similar to that used in \cite{dupont2022spatial+}: high correlation between $\x$ and $\z$ with $\theta_c$ large and $\theta_u$ small.  And in this case, the absolute error for both adjusted models is better than both spatial models and the non-spatial model. This figure emphasizes how models meant to accommodate data generating source of spatial confounding can do so when the data generating assumptions are met. However, when the data are generated even under a single different scenario (e.g., a different value of $\theta_u$ as in \cref{fig:bias_analysismodel}(d) ), these models can increase bias. Also notable is the range of the errors.  In \cref{fig:bias_analysismodel}(c), the ranges are similar for the spatial and adjusted spatial models, while the range for the non-spatial model errors is almost double that of the others.  Note that from a data generating source of confounding perspective, we expect the spatial model bias to be lower than the non-spatial model setting as shown in \cref{fig:data_illustrationa}(b) (i.e., $\theta_u < 2$ when $\nu=0.5$).

Finally, \cref{fig:bias_analysismodel}(d) illustrates how the errors compared across models when $\theta_c=\theta_u$ (here $=10$) and $\rho =0.9$.  While the average absolute error is approximately the same (slightly larger for the non-spatial model than the others), the range of the errors for the spatial models is smaller than for the non-spatial model and both of the adjusted spatial models.  

This simulation drives home the finding in \cref{subsubsec:datagen}: the amount of bias from estimating $\beta_x$ will depend on the data generating process.  However, it also highlights that the bias from the analysis model sources of confounding depends on the underlying assumptions about the data. There is not a ``one size fits all'' modeling approach.

\subsection{Areal Data Setting} \label{sim1:icar}
In the second setting, we work with areal data on an $11 \times 11$ grid on the unit square. Recall that work in  analysis model spatial confounding suggest that a covariate which is collinear or correlated with low-frequency eigenvectors of the precision matrix of the spatial random effect could induce bias in the estimation of $\beta_x$. This is thought to be true regardless of whether there is a ``missing'' spatially dependent covariate. Therefore, in this simulation, we attempt to explore whether that is the case by simulating datasets with both a spatially-smooth covariate and a covariate without spatial structure.  We also compare these results to the areal data setting when the spatially dependent covariate is and is not present in the data generating model.  

For all simulated datasets, 
\begin{align*}
\x &= 0.5 \z + \bm{\epsilon}_{x},\\
\y & = 3\x + \beta_z \z + \bm{\epsilon},
\end{align*}
where $\bm{\epsilon}_{x}$ and $\bm{\epsilon}$ are independently generated from a normal distribution with mean 0 and standard deviations 0.01 and 0.15, respectively.  

We consider two possible choices of $\z$: one in which $\z$ is spatially smooth from an analysis model spatial confounding perspective and one in which it is not. For the latter category, we simply generate $\z$ from a normal distribution with mean 0 and standard deviation 0.09, as depicted in the left plot of \cref{fig:illustrations_covariate}.  We hold this vector fixed and simulate values of $\x$ and $\y$ 100 times.  To generate the spatially-smooth covariate, we use the eigenvectors of the graph Laplacian $\bm{Q}$.  For the ICAR model, there is not a variance-covariance matrix, but rather the singular precision matrix.  However, we can treat this as the pseudo-inverse of a variance-covariance matrix \citep{Pac_ICAR}.  In this case, then, if $\z$ is strongly correlated with a low-frequency eigenvector of $\bm{Q}$, the spatial analysis model may sometimes perform more poorly than the non-spatial model as discussed in \cref{subsubsec:analysis}.   Thus, we let $\z$  be the eigenvector of $\bm{Q}$ associated with smallest, non-zero eigenvalue, depicted in the right plot of \cref{fig:illustrations_covariate}. We also hold this vector fixed for 100 simulated datasets.

We also consider two choices for $\beta_z$: $\beta_z = 0$ and $\beta_z = -1$.  When $\beta_z = 0$, we are explicitly leaving out any residual spatial dependence from the data generation model to better explore the impact of a covariate alone.  This will better enable us to examine bias from the perspective of the analysis model source of confounding.  When $\beta_z = -1$, we are including confounding from a data generating perspective.  This is the value of $\beta_z$ used in \cite{Thaden} and \cite{dupont2022spatial+} where they showed that the adjusted spatial models outperform the spatial and non-spatial models.

For each of the datasets, we consider the same five analysis models considered in \cref{sim1:slmm}, but with two adjustments. Here, the spatial analysis model is the ICAR model and we use a Bayesian approach for both the spatial and the non-spatial analysis models, as described in the introduction of this section.  For the locations needed in the penalized spline and adjusted spatial models, we use the centroids of the $11\times 11$ grid.  We note here that thin plate splines are meant for geostatistical data, where data are observed over a continuous random field.  As noted earlier, areal data and geostatistical data are quite different and their dependence structures are modeled in different manners.  Areal data is observed over a discrete space and relies on the specification of a neighborhood matrix to define an autoregressive dependence structure (e.g., GMRF).  Geostatistical data is observed over a continuous space and relies on the specification of a distance matrix to define a continuous dependence structure.  That said, it has been shown that the dependence structure of one form can approximate the dependence structure of another quite well for a fixed set of locations \citep{rue:2002}.  Because of this, and to stay true to the Spatial+ model as proposed by \cite{dupont2022spatial+}, we treat the data as a continuous random field using fixed centroids of the grid as the locations for the three models fit using splines (PS, S+, and gSEM).

\begin{center}
\begin{minipage}{0.45\textwidth}
\includegraphics[width=\linewidth]{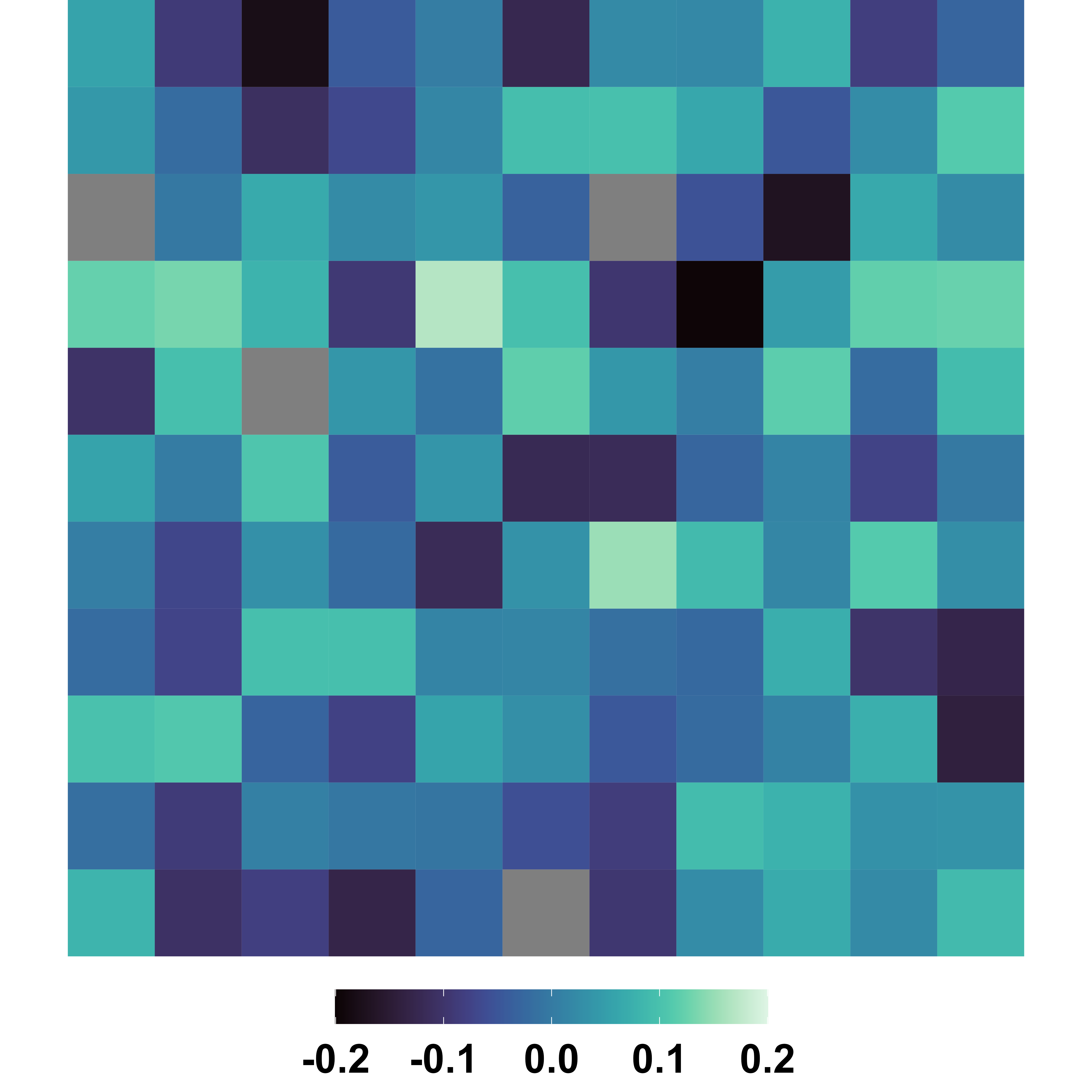}
\centering (a) Random
\end{minipage}\hfill
\begin{minipage}{0.45\textwidth}
\includegraphics[width=\linewidth]{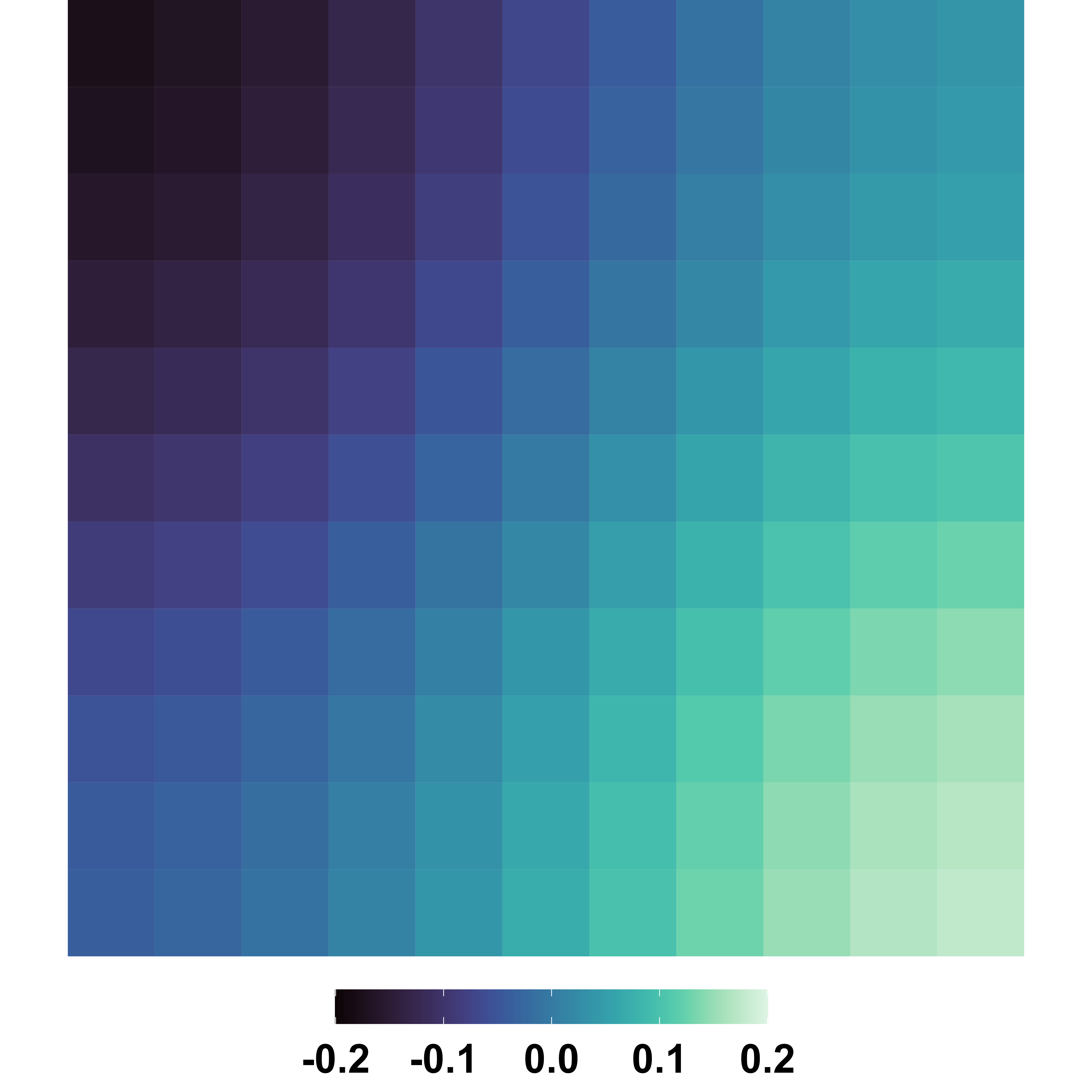}
\centering (b) Spatially-Smooth
\end{minipage}
\captionof{figure}{Visualization of (a) $\z$ generated from a normal distribution (``random''), (b) $\z$ defined as the low-frequency eigenvector of $\bm{Q}$ (``spatially smooth''). Plots made with the \texttt{ggplot2} package \citep{ggplot2}. }
\label{fig:illustrations_covariate}
\end{center}

 \cref{fig:am_bias} shows the observed absolute error for the five models and four simulation scenarios.  Plots (a) and (c) show the bias when $\z$ is random and is either included (a) or excluded (c) from the model generating $\y$.  First consider (a).  This scenario is akin to an ``ommitted variable'' where there is no spatial dependence.  In this case, the non-spatial model and two spatial models perform similarly, while the adjusted spatial models, although have larger ranges in error, can actually have less error.  Spatial+ actually has the smallest error for 58\% of the data sets.  The opposite story is true in (c), where the adjusted models tend to perform worse.  In this case, the non-spatial model is the most appropriate model given the true data generating scenario and its error tends to be the smallest among the models.  

 Moving to cases when $\z$ is generated to be spatially smooth in (b) and (d).  There are three interesting findings here.  First, when all of the spatial dependence is in the covariate (i.e., $\beta_z=0$ so that there is no additional spatial dependnece added to $\y$), the spatial model, ``S,'' actually sometimes performs \emph{worse} than the non-spatial model.  This provides evidence for the methods motivated by alleviating the analysis model sources of confounding.  However, we note that this is an extreme case, where $\x$ is very highly correlated (e.g., $\approx 0.98$) with the smallest non-zero eigenvector of the precision matrix. This is essentially a worst-case scenario, and the spatial analysis model and non-spatial analysis models still yielded similar inferences.  

 Second, it's interesting to note the differences in the two versions of the spatial models: ``S'' and ``PS.''  In the case just discussed, the penalized spline spatial model performs quite a bit worse than the Bayesian ICAR spatial model.  In plot (b), however, the penalized spline model appears to be the best fitting model and fits much better than the Bayesian ICAR spatial model.  This drives home the point that how the models are fit is as important as the model itself.  Two different spatial models can lead to different estimates and errors.  

 Third, the adjusted models perform very well in case (b).  From a data generating perspective, this is expected since this is the data generating scenario for which the methods were derived: smooth $\z$ and strong correlation between $\x$ and $\z$.  However, the adjusted models in case (d) perform quite poorly.  Recall that for the adjusted model bias, both $\beta_x$ and $\beta_z$ contribute to the total bias as shown in \eqref{eq:adj_bias}, in contrast to the spatial and non-spatial model biases, where there is no bias from $\beta_x$.  Thus, in this case, while the spatial and non-spatial models will have zero expected bias, the adjusted models will have some bias due to $\beta_x$ ($A_2^*(\rb, \x)$) and it will not be 0 since $\rb$ and $\x$ will not be similar with respect to $\Sigmahinv$.

 These simulations drive home the point that \emph{both} the data generating and analysis model sources of confounding can contribute to the bias of $\hat{\beta}_x$.  Importantly, if the data generating assumptions for which a model was developed are not met, the analysis model can have drastically worse fit than anticipated.

\begin{center}
\begin{minipage}{0.45\textwidth}
\includegraphics[width=\linewidth]{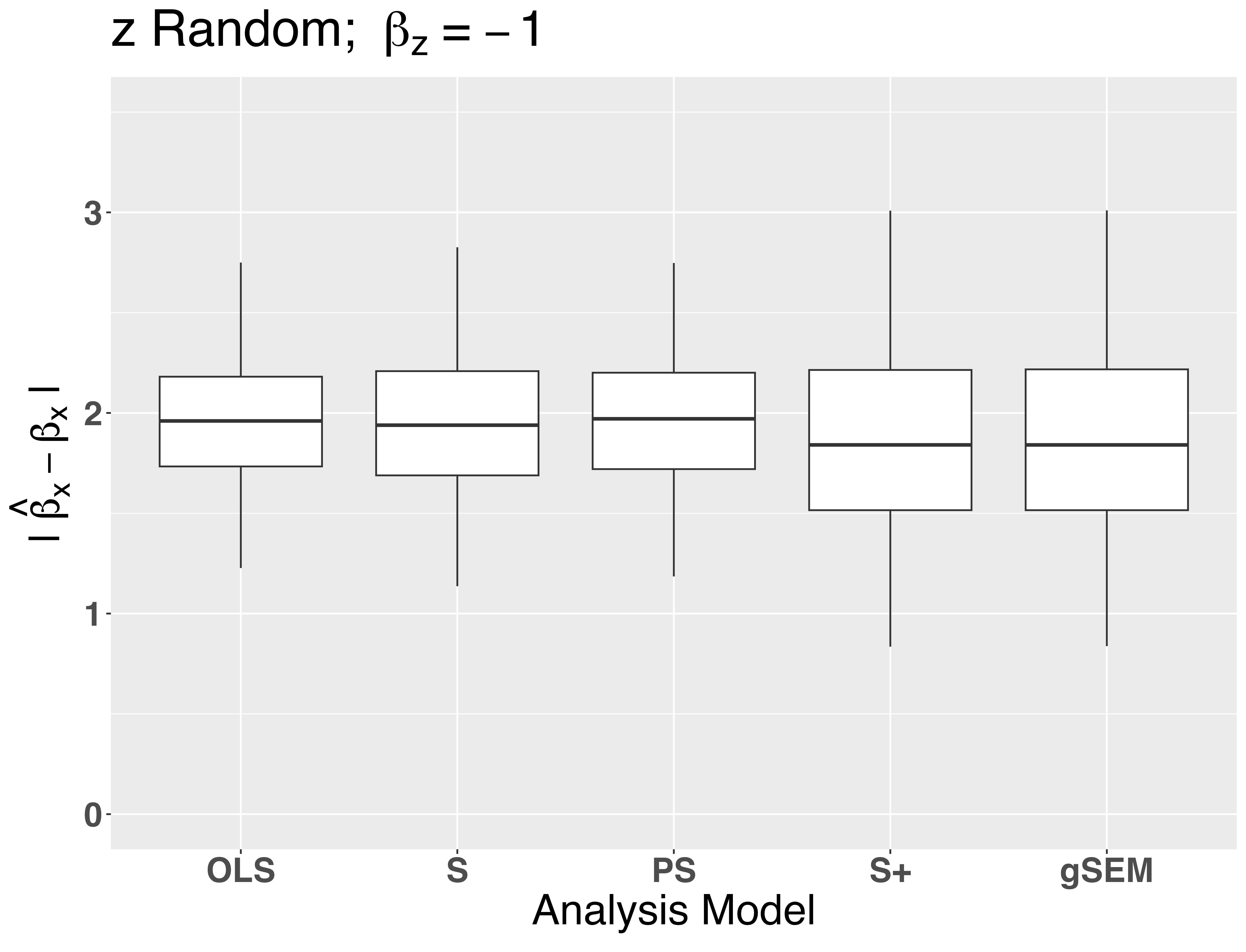}
\centering (a) 
\end{minipage}\hfill
\begin{minipage}{0.45\textwidth}
\includegraphics[width=\linewidth]{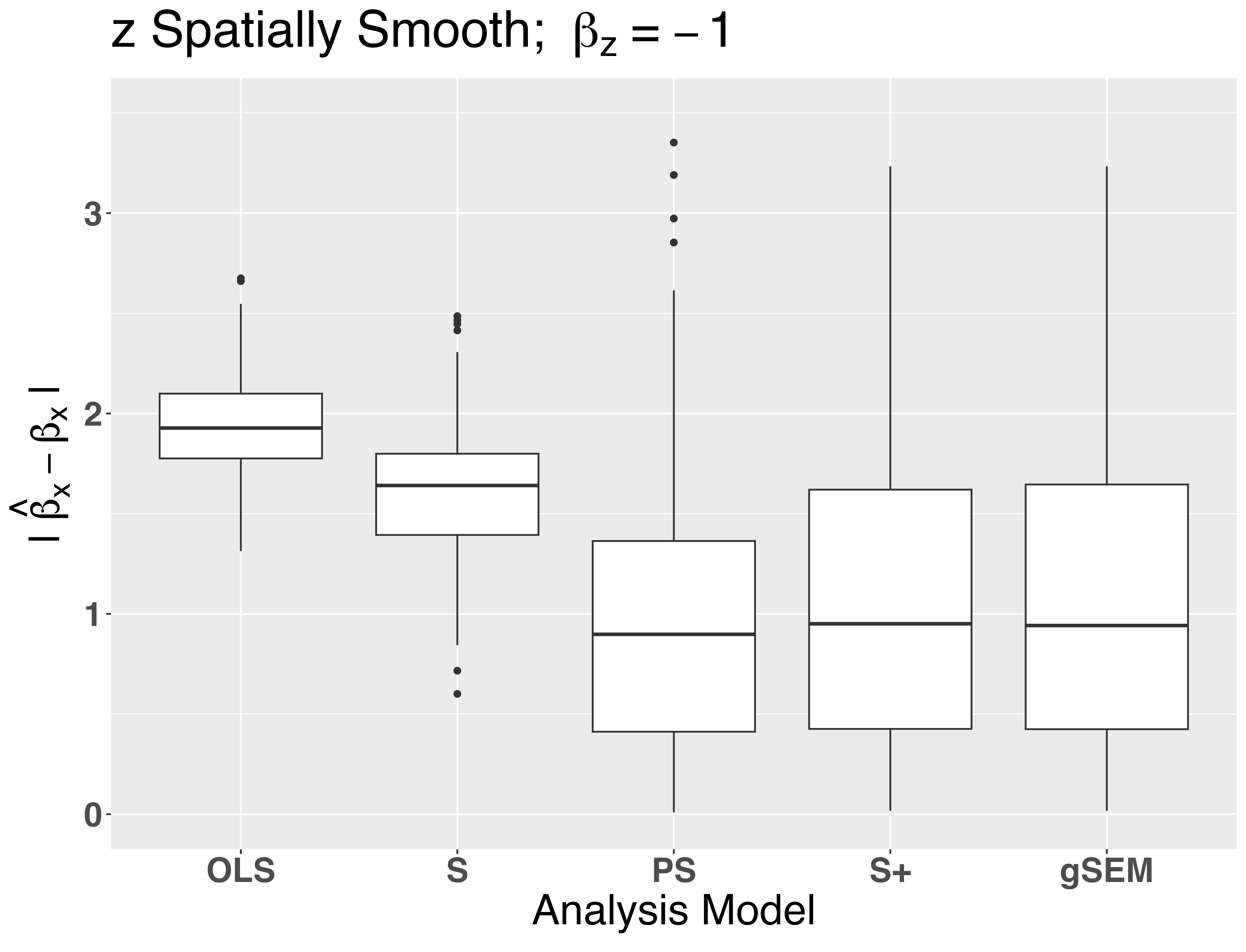}
\centering (b) 
\end{minipage}
\begin{minipage}{0.45\textwidth}
\includegraphics[width=\linewidth]{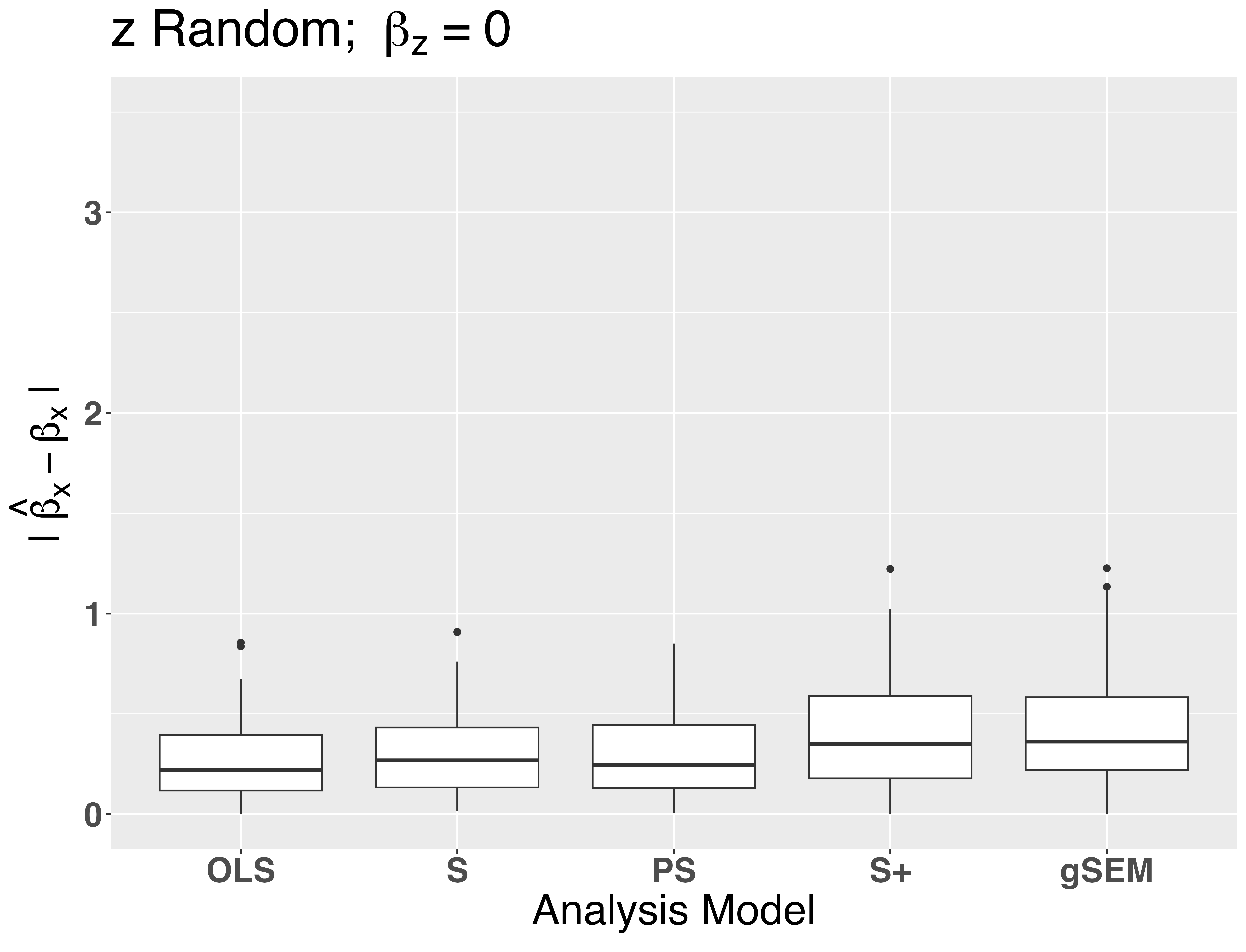}
\centering (c) 
\end{minipage}\hfill
\begin{minipage}{0.45\textwidth}
\includegraphics[width=\linewidth]{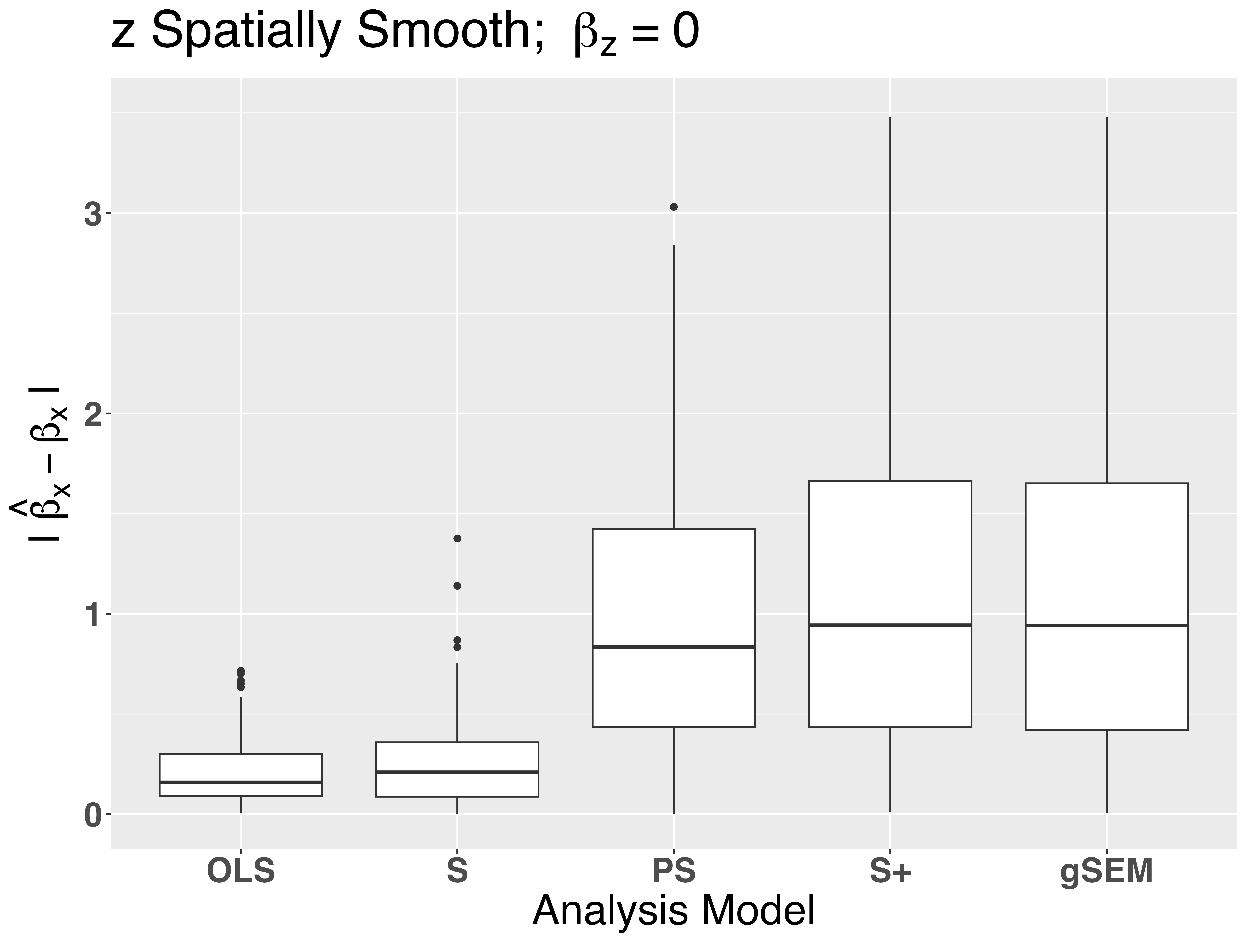}
\centering (d) 
\end{minipage}
\captionof{figure}{Absolute value of the observed bias for areal data simulations. (a) $\z$ is random and $\beta_z=-1$; (b) $\z$ is spatially smooth and $\beta_z=-1$; (c) $\z$ is random and $\beta_z=0$; (d) $\z$ is spatially smooth and $\beta_z=0$. Plots made with the \texttt{ggplot2} package \citep{ggplot2}. }
\label{fig:am_bias}
\end{center}

\section{Discussion} \label{sec:conclusion}

In this paper, we have synthesized the broad, and often muddled, literature on spatial confounding. We have introduced two broad focuses in the spatial confounding literature: the analysis model focus and the data generation focus. Using the spatial linear mixed model, we have shown how papers focused on the former category often conceptualize the problem of spatial confounding as originating from the relationship between an observed covariate $\x$ and the estimated precision matrix $\Sigmahinv$ of a spatial random effect. We then showed how papers focused on the latter category typically identify the problem of spatial confounding as originating from the relationship between an observed covariate ($\x$) and a correlated, unobserved covariate ($\z$). 

Our results highlight two important conclusions: 1) the original conceptualization of spatial confounding as problematic may not have been entirely correct, and 2) the analysis model and data generation perspectives of spatial confounding can lead to directly contradictory conclusions about whether spatial confounding exists and whether it adversely impacts inference on regression coefficients. With respect to the first point, the modern conceptualization of spatial confounding arose in work by \citet{Reich} and \citet{Hodges_fixedeffects}. In our proposed framework, these papers focused on an analysis model type of spatial confounding. In the context of an ICAR model, they argued that whenever $\x$ was collinear or highly correlated with a low-frequency eigenvector of the graph Laplacian $\bm{Q}$, the regression coefficients would be biased (relative to the regression coefficients obtained from a non-spatial model). %, in general, collinearity between $\x$ and  low-frequency eigenvectors of the graph Laplacian helps to \emph{reduce} bias in regression coefficients. 
Our results suggest that it is only in relatively extreme cases, where $\x$ is ``flat'' and there is no spatially smooth residual dependence, that bias for regression coefficients can increase. In our simulation study, we produced such a setting by generating $\x$ to be almost perfectly correlated with a low frequency eigenvector of the graph Laplacian. Even in this extreme scenario, however, the bias seen in a spatial analysis model was not that much different from the bias seen in a non-spatial analysis model. 

Turning our attention to the second point, the data generation perspective of spatial confounding often relies on very specific assumptions about the processes that generated $\x$ and $\z$ or on very specific assumptions about the relationship between these variables (i.e., $\x$ is a combination of $\z$ and some Gaussian noise). Our results suggested that many of the scenarios identified as problematic from a data generation perspective are not problematic from an analysis model perspective. % For example, a number of works focused on data generation spatial confounding consider it problematic for $\x$ and $\z$ to be collinear. Our results suggest that this is not necessarily problematic if either of these variables are spatially-smooth (with respect to the analysis model used). 
This is potentially troublesome because many of these papers propose methods to ``alleviate'' spatial confounding based on the perceived problem (the relationship between $\x$ and $\z$).  

Taken together, the results and simulation studies in this paper offer support for the conventional wisdom of spatial statistics: accounting for residual spatial dependence tends to improve inference on regression coefficients. However, spatial analysis models are not interchangeable: the analysis model \textit{and the method used to fit it} matter.  As research continues into the field of spatial confounding, we offer this advice: 
\begin{enumerate}
    \item Considering spatial confounding from both an analysis model and data generation perspective may elucidate future paths for identifying when specific data or analysis models  will create or exacerbate bias.
    \item Researchers should carefully consider whether both the data generating and analysis model assumptions are met for their specific data and analysis goals.  
\end{enumerate}

This work did not focus on the variability or uncertainty of the parameter estimates.  The simulation studies hinted that this is extremely relevant to differences in model biases and work to better quantify and understand these differences is needed.

Finally, we caution against using these findings too broadly.  We considered specific models, specific data settings, and true and known smoothing in the dependence structures. As demonstrated in this study, applying research findings beyond their intended context can result in unreliable and questionable outcomes.

\pagebreak
\setcitestyle{numbers} 
\bibliographystyle{apalike} % or try abbrvnat or unsrtnat
\bibliography{Bib} %

\newpage

\appendix

\section{Notation for metrics and norms} \label{appa_diff_geometry}

%\subsection{Notation for metrics and norms}

\begin{definition}[Standard Euclidean Inner Product and Norm] \label{euclidean_metric}
We use the notation $\euclmetric{\cdot}{\cdot}$ to denote the standard Euclidean inner product on the vector space of $\mathbb{R}^n$. The notation $\euclnorm{\cdot}$ is then used to refer to the norm inducted by this metric. Specifically, for $\a,\b \in \mathbb{R}^n$:
\begin{eqnarray*}
\euclmetric{\a}{\b} &=& {\a}^T \b \\
\euclnorm{\a}  &=& \sqrt{{\a}^T  \a } 
\end{eqnarray*}
\end{definition}

\begin{notation}[Angles with respect to the Standard Euclidean Inner Product] \label{euclidean_angle}
Given $a,b \in \mathbb{R}^n$, we use $\euclangle{a}{b}$ to refer to the angle between these two vectors with respect to the standard Euclidean norm. Specifically: 
\begin{eqnarray*}
\euclangle{\a}{\b} = \arccos \left( \frac{\euclmetric{\a}{\b}}{ \euclnorm{\a}\euclnorm{\b}} \right) 
\end{eqnarray*}
\end{notation}

\begin{notation}[Spectral Decomposition of $\Sigmahinv$] \label{spectral_decomp}
Let $\Sigmahinv$ be a $n\times n$ real, symmetric, positive definite matrix.

We define $\U \D \U = \Sigmahinv$ to be the spectral decomposition of $\Sigmahinv$ with $\D$ a diagonal matrix with diagonal $d_1 \geq \ldots \geq d_n > 0$.
\end{notation}

\begin{notation}[Angles between Vector and Eigenvectors of $\Sigmahinv$] \label{coef_notation}
Let $\Sigmahinv$ be a $n\times n$ real, symmetric, positive definite matrix, and $\v$ be an arbitrary vector in $\mathbb{R}^n$.

Let $\U \D \U = \Sigmahinv$ be the spectral decomposition of $\Sigmahinv$ as defined in \cref{spectral_decomp}. In this paper, we use the notation  $\bm{\theta}_{\v,\U}$ to define a $n\times 1$ vector whose $i$th element is the angle $\euclangle{\v}{\u_i}$ (with respect to the Euclidean norm as in \cref{euclidean_angle}) between $\v$ and the $i$th column $\u_i$ of $\U$. 
\end{notation}

\begin{definition}[Precision Matrix Induced Inner Product and Norm] \label{precmatrix_metric}
Given a $n \times n$ real, symmetric, positive definite matrix, $\Sigmahinv$ we use the notation $\sigmainvdot{\cdot}{\cdot}$ to denote the inner product defined by the matrix on the vector space of $\mathbb{R}^n$. The notation $\sigmainvnorm{\cdot}$ is then used to refer to the norm induced by this inner product. More precisely, for $\a,\b \in \mathbb{R}^n$:
\begin{eqnarray*}
\sigmainvdot{\a}{\b} &=& \a^T \Sigmahinv \b \\
\sigmainvnorm{\a}  &=& \sqrt{\a^T \Sigmahinv \a }
\end{eqnarray*}
\end{definition}

\begin{notation}[Angles with respect to the Precision Matrix Induced Inner Product] \label{precmatrix_angle}
Given $\a,\b \in \mathbb{R}^n$, we use $\sigmainvangle{\a}{\b}$ to refer to the angle between them with respect to the standard Euclidean norm. Technically, it would be more appropriate to use $\sigmainvangle{\a}{\b}^{\Sigmahinv}$. However, we drop the dependency on $\Sigmahinv$ unless it is required for ease of reading. Specifically: 
\begin{eqnarray*}
\sigmainvangle{\a}{\b} = \arccos \left( \frac{\sigmainvdot{\a}{\b}}{ \sigmainvnorm{\a}\sigmainvnorm{\b}} \right) 
\end{eqnarray*}
\end{notation}

\section{Proofs and Derivations}

\subsection{Derivation of \cref{stoch_ols_bias} Non-Spatial Bias}  \label{app:stoch_ols_bias}
\begin{eqnarray*}
	\textrm{E} \left(  \hat{\bm{\beta}}_{NS}  | \Xone \right)&=&  \left( {\Xone}^T \Xone \right)^{-1} {\Xone}^T \left( \beta_0 \boldone + \beta_x \X + \beta_z \textrm{E} \left( \Z | \X \right) \right) \\ 
	&=&  \bm{\beta} +  \beta_z \left( {\Xone}^T \Xone \right)^{-1} {\Xone}^T \left( \mu_z \boldone + \rho \sigma_c \sigma_z \Rtwo{c}^T \left(\sigma_c^2 \Rtwo{c}  + \sigma_u^2 \Rtwo{u} \right)^{-1} \right. \\ & &  \left. \left( \X - \mu_x \boldone \right) \right)\\
	&=&  \bm{\beta} +  \beta_z \left( {\Xone}^T \Xone \right)^{-1} {\Xone}^T \left[ \mu_z \boldone + \rho \sigma_c \sigma_z  \left(\sigma_c^2 \bm{I}  + \sigma_u^2 \Rtwo{u} \Rtwo{c}^{-1} \right)^{-1} \right. \\ & & \left. \left( \X - \mu_x \boldone \right) \right]\\ 
	&=&  \bm{\beta} +  \beta_z  \left[ \mu_z \left( {\Xone}^T \Xone \right)^{-1} {\Xone}^T \boldone + \rho \sigma_c \sigma_z \left( {\Xone}^T \Xone \right)^{-1} {\Xone}^T \left(\sigma_c^2 \bm{I}  + \sigma_u^2 \Rtwo{u} \Rtwo{c}^{-1} \right)^{-1} \right. \\ & & \left. \left( \X - \mu_x \boldone \right) \right]\\ 
	&=&  \bm{\beta} +  \beta_z  \left[ \mu_z \begin{bmatrix}
		1 \\
		0 \\
	\end{bmatrix} + \rho \sigma_c \sigma_z \left( {\Xone}^T \Xone \right)^{-1} {\Xone}^T \left(\sigma_c^2 \bm{I}  + \sigma_u^2 \Rtwo{u} \Rtwo{c}^{-1} \right)^{-1} \right. \\ & & \left. \left( \X - \mu_x \boldone \right) \right]\\ 
	&=&  \bm{\beta} +  \beta_z  \left[ \mu_z \begin{bmatrix}
		1 \\
		0 \\
	\end{bmatrix} + \rho \frac{\sigma_z}{\sigma_c} \left( {\Xone}^T \Xone \right)^{-1} {\Xone}^T \bm{K} \left( \X - \mu_x \boldone \right) \right]\\ 
%	&=&  \bm{\beta} +  \beta_z  \left[ \mu_z \begin{bmatrix}
%		1 \\
%		0 \\
%	\end{bmatrix} + \rho \sigma_c \sigma_z 
%\left( \begin{bmatrix}
%		{\boldone}^T  \boldone &  {\boldone}^T \X\\
%		{\X}^T   \boldone & {\X}^T   \X \\
%	\end{bmatrix} \right)^{-1} {\Xone}^T \left(\sigma_c^2 \bm{I}  + %\sigma_u^2 \Rtwo{u} \Rtwo{c}^{-1} \right)^{-1} \right. \\ & & \left. %\left( \X - \mu_x \boldone \right) \right]\\
%	&=&  \bm{\beta} +  \beta_z  \left[ \mu_z \begin{bmatrix}
%		1 \\
%		0 \\
%	\end{bmatrix} +  \rho  \frac{\sigma_z}{\sigma_c}
%\frac{1}{\euclnorm{\boldone}^2 \euclnorm{\X}^2 -  \left[ %\euclmetric{\X}{\boldone}\right]^2  }   \left( \begin{bmatrix}
%		{\X}^T   \X 	{\boldone}^T    - {\boldone}^T  \X	{\X}^T \\
%		 {\boldone}^T  \boldone  {\X}^T   -	{\X}^T   \boldone 	%{\boldone}^T  \\
%	\end{bmatrix} \right) \bm{K}  \left( \X - \mu_x \boldone \right) \right]\\
\end{eqnarray*}
where $\bm{K}= p_c \left(p_c \bm{I}  + (1-p_c) \Rtwo{u} \Rtwo{c}^{-1} \right)^{-1} $ and $p_c=\frac{\sigma_c^2}{\sigma_c^2 + \sigma_u^2}$. We now restrict our attention to the second element:
\begin{eqnarray*}
\textrm{E} \left(  \hat{\beta}_x^{NS}  | \Xone \right)&=& \beta_x + \beta_z \rho  \frac{\sigma_z}{\sigma_c} \left[ \left( {\Xone}^T \Xone \right)^{-1} {\Xone}^T \bm{K} \left( \X - \mu_x \boldone \right) \right]_2
\end{eqnarray*}

\subsection{Derivation of \cref{stoch_ols_bias} Spatial Bias} \label{app:stoch_gls_bias}
\begin{eqnarray*}
	\textrm{E} \left(  \hat{\bm{\beta}}^{S}  | \Xone \right)&=&  \left( {\Xone}^T \bm{\Sigma}^{-1} \Xone \right)^{-1} {\Xone}^T \bm{\Sigma}^{-1} \left( \beta_0 \boldone + \beta_x \X + \beta_z \textrm{E} \left( \Z | \X \right) \right) \\ 
	&=&  \bm{\beta} +  \left( {\Xone}^T \bm{\Sigma}^{-1} \Xone \right)^{-1} {\Xone}^T \bm{\Sigma}^{-1}  \beta_z \left( \mu_z \boldone + \rho \sigma_c \sigma_z \Rtwo{c}^T \left(\sigma_c^2 \Rtwo{c}  + \sigma_u^2 \Rtwo{u} \right)^{-1} \right. \\ & &  \left. \left( \X - \mu_x \boldone \right) \right)\\ 
%	&=&  \bm{\beta} +  \left( {\Xone}^T \bm{\Sigma}^{-1} \Xone \right)^{-1} {\Xone}^T \bm{\Sigma}^{-1}  \beta_z \left( \mu_z \boldone + \rho \sigma_c \sigma_z \Rtwo{c}^T \left(\sigma_c^2 \Rtwo{c}  + \sigma_u^2 \Rtwo{u} \right)^{-1} \right. \\ & & \left. \left( \X - \mu_x \boldone \right) \right)\\ 
	&=&  \bm{\beta} +  \left( {\Xone}^T \bm{\Sigma}^{-1} \Xone \right)^{-1} {\Xone}^T \bm{\Sigma}^{-1}  \beta_z \left[ \mu_z \boldone + \rho \sigma_c \sigma_z  \left(\sigma_c^2 \bm{I}  + \sigma_u^2 \Rtwo{u} \Rtwo{c}^{-1} \right)^{-1} \right. \\ & & \left. \left( \X - \mu_x \boldone \right) \right]\\ 
	&=&  \bm{\beta} +   \beta_z \left[ \mu_z \begin{bmatrix}
		1 \\
		0 \\
	\end{bmatrix} + \rho \sigma_c \sigma_z  \left( {\Xone}^T \bm{\Sigma}^{-1} \Xone \right)^{-1} {\Xone}^T \bm{\Sigma}^{-1} \left(\sigma_c^2 \bm{I}  + \sigma_u^2 \Rtwo{u} \Rtwo{c}^{-1} \right)^{-1} \right. \\ & & \left. \left( \X - \mu_x \boldone \right) \right]\\ 
	&=&  \bm{\beta} +   \beta_z \left[ \mu_z \begin{bmatrix}
		1 \\
		0 \\
	\end{bmatrix} + \rho \frac{\sigma_z}{\sigma_c}  \left( {\Xone}^T \bm{\Sigma}^{-1} \Xone \right)^{-1} {\Xone}^T \bm{\Sigma}^{-1} \bm{K} \left( \X - \mu_x \boldone \right) \right]\\ 
%	&=&  \bm{\beta} +   \beta_z \left[ \mu_z \begin{bmatrix}
%		1 \\
%		0 \\
%	\end{bmatrix} +  \rho  \frac{\sigma_z}{\sigma_c}
%\frac{1}{\sigmainvnorm{\boldone}^2 \sigmainvnorm{\X}^2 -  \left[ %\sigmainvdot{\X}{\boldone}\right]^2  }  \right. \\ & & \left.  \left( %\begin{bmatrix}
%	\sigmainvnorm{\X}^2 {\boldone}^T \bm{\Sigma}^{-1}   - %\sigmainvdot{\X}{\boldone}{\X}^T \bm{\Sigma}^{-1} \\
%	\sigmainvnorm{\boldone}^2  {\X}^T \bm{\Sigma}^{-1}  -	%\sigmainvdot{\X}{\boldone} 	{\boldone}^T \bm{\Sigma}^{-1} \\
%	\end{bmatrix} \right) \bm{K}  \left( \X - \mu_x \boldone \right) %\right]\\
\end{eqnarray*}
where, $\bm{K}= p_c \left(p_c \bm{I}  + (1-p_c) \Rtwo{u} \Rtwo{c}^{-1} \right)^{-1} $, $p_c=\frac{\sigma_c^2}{\sigma_c^2 + \sigma_u^2}$ , and $\bm{\Sigma} = \beta_z^2 \sigma_z^2 \Rtwo{c} + \sigma^2 \bm{I}$. This calculation of the variance is done conditional on $\X$ to mirror the results in \citet{Pac_spatialconf}.

We restrict our attention to the second element, and note: 
\begin{eqnarray*}
\textrm{E} \left(  \hat{\beta}_X^{S}  | \Xone \right) &=& \beta_x +   \beta_z \rho \frac{\sigma_z}{\sigma_c} \left[ \left( {\Xone}^T \bm{\Sigma}^{-1} \Xone \right)^{-1} {\Xone}^T \bm{\Sigma}^{-1} \bm{K} \left( \X - \mu_x \boldone \right) \right]_2
\end{eqnarray*}

\subsection{Derivation of \cref{eq:obsolsfixed}} \label{app:obsolsfixed} 
\begin{eqnarray*}
	\textrm{E} \left(  \hat{\bm{\beta}}^{NS}  \right)&=&  \left( {\xone}^T \xone \right)^{-1} {\xone}^T  \left( \beta_0 \boldone + \beta_x \x + \beta_z  \z \right) \\ 
	&=& 	\bm{\beta}  +  \beta_z  \left( {\xone}^T \xone \right)^{-1} {\xone}^T  \z \\
	&=&  	\bm{\beta}  + \beta_z \left( \begin{bmatrix}
		{\boldone}^T  \boldone &  {\boldone}^T \x\\
		{\x}^T   \boldone & {\x}^T   \x \\
	\end{bmatrix} \right)^{-1} {\xone}^T \z  \\ 
	&=& 	\bm{\beta}  +  \frac{\beta_z}{	{\boldone}^T  \boldone {\x}^T   \x -   {\boldone}^T \x	{\x}^T  \boldone  }   \left( \begin{bmatrix}
		{\x}^T   \x 	{\boldone}^T  \z  - {\boldone}^T   \x 	{\x}^T \z\\
		 {\boldone}^T  \boldone  {\x}^T  \z -	{\x}^T   \boldone 	{\boldone}^T  \z\\
	\end{bmatrix} \right) \\ 
	&=& \bm{\beta}  +  \frac{\beta_z}{	\euclnorm{\boldone}^2 \euclnorm{\x}^2 -   \left[ \euclmetric{\x}{\boldone} \right]^2  }  \begin{bmatrix}
		\euclnorm{\x}^2 \euclmetric{\z}{\boldone} - \euclmetric{\x}{\boldone} 	\euclmetric{\x}{\z}\\
		 \euclnorm{\boldone}^2  \euclmetric{\x}{\z} -	\euclmetric{\x}{\boldone} \euclmetric{\z}{\boldone}\\
	\end{bmatrix} . \\ 
\end{eqnarray*}

Restricting our attention to the second element:
\begin{eqnarray*}
 \textrm{E} \left(  \hat{\bm{\beta}}_x^{NS}  \right) &=& \beta_x +
 \frac{\beta_z}{	\euclnorm{\boldone}^2 \euclnorm{\x}^2 -   \left[ \euclmetric{\x}{\boldone}	{\x}^T \right]^2  } \left( \euclnorm{\boldone}^2  \euclmetric{\x}{\z} -	\euclmetric{\x}{\boldone} \euclmetric{\z}{\boldone} \right)
\end{eqnarray*}

Thus, $\bias{\hat{\bm{\beta}}_x^{NS}} = \frac{\beta_z}{	\euclnorm{\boldone}^2 \euclnorm{\x}^2 -   \left[ \euclmetric{\x}{\boldone} \right]^2  } \left( \euclnorm{\boldone}^2  \euclmetric{\x}{\z} -	\euclmetric{\x}{\boldone} \euclmetric{\z}{\boldone} \right)$

\subsection{Proof of \cref{eq:obsglsfixed}} \label{app:obsglsfixed} 
\begin{eqnarray*}
	\textrm{E} \left(  \hat{\bm{\beta}}^g  \right)&=&  \left( {\xone}^T \Sigmahinv \xone \right)^{-1} {\xone}^T \Sigmahinv \left( \beta_0 \boldone + \beta_x \x + \beta_z  \z \right) \\ 
	&=& 	\bm{\beta}  +  \beta_z  \left( {\xone}^T \Sigmahinv \xone \right)^{-1} {\xone}^T \Sigmahinv \z \\
	&=&  	\bm{\beta}  + \beta_z \left( \begin{bmatrix}
		{\boldone}^T \Sigmahinv  \boldone &  {\boldone}^T \Sigmahinv  \x\\
		{\x}^T \Sigmahinv \boldone & {\x}^T \Sigmahinv  \x \\
	\end{bmatrix} \right)^{-1} {\xone}^T \Sigmahinv \z  \\ 
	&=& 	\bm{\beta}  + \frac{\beta_z}{	{\boldone}^T \Sigmahinv  \boldone {\x}^T \Sigmahinv  \x -   {\boldone}^T \Sigmahinv  \x 	{\x}^T \Sigmahinv  \boldone  }   \left( \begin{bmatrix}
		{\x}^T \Sigmahinv  \x & - {\boldone}^T \Sigmahinv  \x\\
		-	{\x}^T \Sigmahinv \boldone & {\boldone}^T \Sigmahinv \boldone \\
	\end{bmatrix} \right) {\xone}^T \Sigmahinv \z  \\ 
	&=& \bm{\beta} + \frac{\beta_z}{	{\boldone}^T \Sigmahinv  \boldone {\x}^T \Sigmahinv  \x -   {\boldone}^T \Sigmahinv  \x 	{\x}^T \Sigmahinv  \boldone  }   \begin{bmatrix}
		{\x}^T \Sigmahinv  \x & - {\boldone}^T \Sigmahinv \x\\
		-	{\x}^T \Sigmahinv  \boldone & {\boldone}^T \Sigmahinv \boldone \\
	\end{bmatrix}  \begin{bmatrix}
		{\boldone}^T \Sigmahinv \z   \\
		{\x}^T \Sigmahinv \z   \\
	\end{bmatrix}   \\ 
	&=& 	\bm{\beta}  +  \frac{\beta_z}{	{\boldone}^T \Sigmahinv  \boldone {\x}^T \Sigmahinv  \x -   {\boldone}^T \Sigmahinv  \x	{\x}^T \Sigmahinv  \boldone  }   \begin{bmatrix}
		{\x}^T \Sigmahinv  \x 	{\boldone}^T \Sigmahinv \z  - {\boldone}^T \Sigmahinv  \x 	{\x}^T \Sigmahinv \z\\
	  {\boldone}^T \Sigmahinv \boldone  	{\x}^T \Sigmahinv \z 	-	{\x}^T \Sigmahinv  \boldone 	{\boldone}^T \Sigmahinv \z \\
	\end{bmatrix} \\ 
	&=& 	\bm{\beta}  +  \frac{\beta_z}{ \sigmainvnorm{\boldone}^2 \sigmainvnorm{\x}^2 -  \left[ \sigmainvdot{\x}{\boldone} \right]^2  }   \begin{bmatrix}
		\sigmainvnorm{\x}^2 \sigmainvdot{\z}{\boldone}  - \sigmainvdot{\x}{\boldone} \sigmainvdot{\x}{\z}\\
	  \sigmainvnorm{\boldone}^2 \sigmainvdot{\x}{\z} - 	\sigmainvdot{\x}{\boldone}	\sigmainvdot{\z}{\boldone} \\
	\end{bmatrix}.
\end{eqnarray*}

Restricting our attention to the second element: 
\begin{eqnarray*}
\textrm{E} \left(  \hat{\bm{\beta}}_x^g  \right) = 	\beta_x  +  \frac{\beta_z}{ \sigmainvnorm{\boldone}^2 \sigmainvnorm{\x}^2 -  \left[ \sigmainvdot{\x}{\boldone} \right]^2  } \left( 
	  \sigmainvnorm{\boldone}^2 \sigmainvdot{\x}{\z} - 	\sigmainvdot{\x}{\boldone}	\sigmainvdot{\z}{\boldone} \right).
\end{eqnarray*}

\subsection{Derivation of \cref{eq:obsadjglsfixed}} \label{app:obsadjglsfixed}

We assume that $\rb$ are known residuals from spatial model fit with $\x$ as the response variable.  We assume the model used to obtain these residuals gave an estimate of $\hat{\x} = \hat{\bm S}_x \x$ so that $\rb = \left[\bm{I} - \hat{\bm S}_x\right]\x$ for some hat matrix $\hat{\bm S}_x$.  We consider the special case when the spatial model for $\x$ is fit with only an intercept in \cref{subapp:nulladj}. For the following, we denote $\rone = \left[ \boldone ~ \rb \right]$.
\begin{eqnarray}
	\textrm{E} \left(  \hat{\bm{\beta}}^{S+}  \right)&=&  \left( {\rone}^T \bm{\Sigmah}^{-1} \rone \right)^{-1} {\rone}^T \bm{\Sigmah}^{-1} \left( \beta_0 \boldone + \beta_x \x + \beta_z  \z \right) \nonumber \\ 
	&=&  \left( {\rone}^T \bm{\Sigmah}^{-1} \rone \right)^{-1} {\rone}^T \bm{\Sigmah}^{-1} \left( \beta_0 \boldone + \beta_x \x  \right)  +  \left( {\rone}^T \bm{\Sigmah}^{-1} \rone \right)^{-1} {\rone}^T \bm{\Sigmah}^{-1} \left( \beta_z  \z \right) \nonumber \\ &=&
	\textrm{A}^{S+} \left( \rb, \x \right) + \textrm{B}^{S+} \left( \rb, \x \right) \label{eq:adj_bias}
\end{eqnarray}

The first term $\textrm{A}^{S+} \left( \rb, \x \right)$ is no longer simply $\bm{\beta}$ as it was in \cref{app:obsglsfixed}. Thus, we consider the terms $\textrm{A}^{S+} \left( \rb, \x \right)$ and $\textrm{B}^{S+} \left( \rb, \x \right)$ separately.

\paragraph{Simplifying $\textrm{A}^{S+} \left( \rb, \x \right)$} In this sub-section, we focus on the first term of \eqref{eq:adj_bias}.  
\begin{eqnarray*}
\textrm{A}^{S+} \left( \rb, \x \right) &=& \left( {\rone}^T \bm{\Sigmah}^{-1} \rone \right)^{-1} {\rone}^T \bm{\Sigmah}^{-1} \left( \beta_0 \boldone + \beta_x \x  \right)\\
&=&  \left[ \left( \begin{bmatrix}
 	{\boldone}^T \bm{\Sigmah}^{-1}  \boldone &  {\boldone}^T \bm{\Sigmah}^{-1}  \rb \\
 	{\rb}^T \bm{\Sigmah}^{-1}  \boldone & {\rb}^T \bm{\Sigma}^{-1}  \rb \\
 \end{bmatrix} \right)^{-1} {\rone}^T \bm{\Sigmah}^{-1} \boldone \beta_0 +  \right. \\  & & \left. \left( \begin{bmatrix}
 {\boldone}^T \bm{\Sigmah}^{-1}  \boldone &  {\boldone}^T \bm{\Sigmah}^{-1}  \rb \\
 {\rb}^T \bm{\Sigmah}^{-1}  \boldone & {\rb}^T \bm{\Sigma}^{-1}  \rb \\
\end{bmatrix} \right)^{-1} {\rone}^T \bm{\Sigmah}^{-1} \x \beta_x \right] \nonumber \\   
&=& \frac{1}{	{\boldone}^T \bm{\Sigmah}^{-1}  \boldone {\rb}^T \bm{\Sigmah}^{-1}  \rb -   {\boldone}^T \bm{\Sigmah}^{-1}  \rb	{\rb}^T \bm{\Sigmah}^{-1}  \boldone  }  \times \\ & & \left[  \begin{bmatrix}
	{\rb}^T \bm{\Sigmah}^{-1}  \rb 	{\boldone}^T \bm{\Sigmah}^{-1} \boldone - {\boldone}^T \bm{\Sigmah}^{-1}  \rb 	{\rb}^T \bm{\Sigmah}^{-1} \boldone \\
	{\boldone}^T \bm{\Sigmah} ^{-1} \boldone  	{\rb}^T \bm{\Sigmah}^{-1} \boldone 	-	{\rb}^T \bm{\Sigmah}^{-1}  \boldone 	{\boldone}^T \bm{\Sigmah}^{-1} \boldone  \\
\end{bmatrix}  \beta_0 +  \right.\nonumber \\
& & \left.   \begin{bmatrix}
	{\rb}^T \bm{\Sigmah}^{-1}  \rb 	{\boldone}^T \bm{\Sigmah}^{-1} \x - {\boldone}^T \bm{\Sigmah}^{-1}  \rb 	{\rb}^T \bm{\Sigmah}^{-1} \x \\
	{\boldone}^T \bm{\Sigmah} ^{-1} \boldone 	{\rb}^T \bm{\Sigmah}^{-1} \x 	-	{\rb}^T \bm{\Sigmah}^{-1}  \boldone 	{\boldone}^T \bm{\Sigmah}^{-1} \x \\ 
\end{bmatrix}  \beta_x \right] \nonumber \\
&=& \frac{1}{ \sigmainvnorm{\boldone}^2 \sigmainvnorm{\rb}^2 -  \left[ \sigmainvdot{\rb}{\boldone} \right]^2  }  \times \\ & & \left[  \begin{bmatrix}
	\sigmainvnorm{\boldone}^2 \sigmainvnorm{\rb}^2 -  \left[ \sigmainvdot{\rb}{\boldone} \right]^2 \\
	0  \\
\end{bmatrix}  \beta_0 +  \right.\nonumber \\
& & \left.   \begin{bmatrix}
	\sigmainvnorm{\rb}^2  \sigmainvdot{\x}{\boldone} - \sigmainvdot{\rb}{\boldone} \sigmainvdot{\rb}{\x} \\
	\sigmainvnorm{\boldone}^2 	\sigmainvdot{\rb}{\x} 	-	\sigmainvdot{\rb}{\boldone} \sigmainvdot{\x}{\boldone} \\ 
\end{bmatrix}  \beta_x \right] \label{adj_bias_a}
\end{eqnarray*}

Now, we restrict our attention to the second component: 
\begin{eqnarray} \label{adj_bias_a2}
\textrm{A}^{S+}_2 \left( \rb, \x \right) &=& \frac{ \beta_x \left( \sigmainvnorm{\boldone}^2 	\sigmainvdot{\rb}{\x} 	-	\sigmainvdot{\rb}{\boldone} \sigmainvdot{\x}{\boldone}\right) } {  \sigmainvnorm{\boldone}^2 \sigmainvnorm{\rb}^2 -  \left[ \sigmainvdot{\rb}{\boldone} \right]^2 }.
\end{eqnarray}

\paragraph{Simplifying $\textrm{B}^{S+} \left( \rb, \x \right)$}
In this sub-section, we focus on the second term of \eqref{eq:adj_bias}. We note that the term is equivalent to the second term of \cref{app:obsglsfixed} with $\rb$ replacing $\x$. Therefore, we can borrow the result there to note: 

\begin{eqnarray*}
\textrm{B}^{S+} \left( \rb, \x \right) &=& \left( {\rone}^T \bm{\Sigmah}^{-1} \rone \right)^{-1} {\rone}^T \bm{\Sigmah}^{-1} \left( \beta_z  \z \right)   \\
&=&   \frac{\beta_z}{ \sigmainvnorm{\boldone}^2 \sigmainvnorm{\rb}^2 -  \left[ \sigmainvdot{\rb}{\boldone} \right]^2  }   \begin{bmatrix}
		\sigmainvnorm{\rb}^2 \sigmainvdot{\z}{\boldone}  - \sigmainvdot{\rb}{\boldone} \sigmainvdot{\rb}{\z}\\
	  \sigmainvnorm{\boldone}^2 \sigmainvdot{\rb}{\z} - 	\sigmainvdot{\rb}{\boldone}	\sigmainvdot{\z}{\boldone} \\
	\end{bmatrix}. 
\end{eqnarray*}

Restricting our attention to the second component, we can further simplify as follows:
\begin{eqnarray} \label{adj_bias_b2}
\textrm{B}^{S+}_2 \left( \rb, \x \right) &=& \frac{ \beta_z \left( \sigmainvnorm{\boldone}^2 \sigmainvdot{\rb}{\z} - 	\sigmainvdot{\rb}{\boldone}	\sigmainvdot{\z}{\boldone} \right) } { \sigmainvnorm{\boldone}^2 \sigmainvnorm{\rb}^2 -  \left[ \sigmainvdot{\rb}{\boldone} \right]^2}.
\end{eqnarray}

\paragraph{Bias for $\beta_x$}
Combining our results from \eqref{adj_bias_a2} and \eqref{adj_bias_b2}:

\begin{eqnarray*}
 	\textrm{E} \left(  \hat{\bm{\beta}}^{S+}  \right) - \beta_x &=& A^{S+}_2 \left( \rb, \x \right) - \beta_x + B^{S+}_2 \left( \rb, \x \right) =  \\ 
 & & \frac{ \beta_x \left( \sigmainvnorm{\boldone}^2 	\sigmainvdot{\rb}{\x} 	-	\sigmainvdot{\rb}{\boldone} \sigmainvdot{\x}{\boldone}\right) } {  \sigmainvnorm{\boldone}^2 \sigmainvnorm{\rb}^2 -  \left[ \sigmainvdot{\rb}{\boldone} \right]^2 } - \beta_x + \\
 & & 	 \frac{ \beta_z \left( \sigmainvnorm{\boldone}^2 \sigmainvdot{\x}{\z} - 	\sigmainvdot{\x}{\boldone}	\sigmainvdot{\z}{\boldone} \right) } { \sigmainvnorm{\boldone}^2 \sigmainvnorm{\x}^2 -  \left[ \sigmainvdot{\x}{\boldone} \right]^2}\\
 & = & \beta_x \left(\frac{ \left( \sigmainvnorm{\boldone}^2 	\sigmainvdot{\rb}{\x} 	-	\sigmainvdot{\rb}{\boldone} \sigmainvdot{\x}{\boldone}\right) } {  \sigmainvnorm{\boldone}^2 \sigmainvnorm{\rb}^2 -  \left[ \sigmainvdot{\rb}{\boldone} \right]^2 } - 1 \right) + \\
    & & 	 \frac{ \beta_z \left( \sigmainvnorm{\boldone}^2 \sigmainvdot{\x}{\z} - 	\sigmainvdot{\x}{\boldone}	\sigmainvdot{\z}{\boldone} \right) } { \sigmainvnorm{\boldone}^2 \sigmainvnorm{\x}^2 -  \left[ \sigmainvdot{\x}{\boldone} \right]^2}\\
  &=&  A_2^{S+*}(\rb, \x) + B^{S+}_2(\rb, \x).
\end{eqnarray*}

\subsubsection{Special Case of \cref{eq:obsadjglsfixed}}\label{subapp:nulladj} We assume that $\rb$ are the known residuals from a model of the form \eqref{eq:genericSpatial} with $\x$ as the response and only an intercept. We note that this means $\rb = \left[ \bm{I} - \boldone \left( \boldone^T \sigmahatgen{\x} \boldone \right)^{-1} \boldone^T \sigmahatgen{\x} \right] \x$. 

To further simplify \eqref{eq:adj_bias}, we note that $\rb = [ \x - \alpha \boldone ]$ with  $\alpha=\frac{ \sigmainvdotx{\x}{\boldone}}{\sigmainvnormx{\boldone}^2}\geq 0$. 

The second component of $A^{S+}(\rb, \x)$ in \eqref{adj_bias_a2} can be simplified further:
\begin{eqnarray} \label{adj_bias_a}
\textrm{A}^{S+}_2 \left( \rb, \x \right) &=& \frac{ \beta_x \left( \sigmainvnorm{\boldone}^2 	\sigmainvdot{\rb}{\x} 	-	\sigmainvdot{\rb}{\boldone} \sigmainvdot{\x}{\boldone}\right) } {  \sigmainvnorm{\boldone}^2 \sigmainvnorm{\rb}^2 -  \left[ \sigmainvdot{\rb}{\boldone} \right]^2 } \nonumber \\ 
&=& \frac{ \beta_x \left( \sigmainvnorm{\boldone}^2 	\sigmainvdot{\rb}{\x} 	-	\sigmainvdot{\rb}{\boldone} \sigmainvdot{\x}{\boldone}\right) } {  \sigmainvnorm{\boldone}^2 \sigmainvnorm{\x}^2 -  \left[ \sigmainvdot{\x}{\boldone} \right]^2} \nonumber \\ 
&=& \frac{ \beta_x \left( \sigmainvnorm{\boldone}^2 \sigmainvnorm{\x}^2 -  \left[ \sigmainvdot{\x}{\boldone} \right]^2 \right) } {  \sigmainvnorm{\boldone}^2 \sigmainvnorm{\x}^2 -  \left[ \sigmainvdot{\x}{\boldone} \right]^2} \nonumber \\ 
&=& \beta_x.
\end{eqnarray}

The second component of $B^{S+}(\rb, \x)$ can be further simplified as follows:
\begin{eqnarray} \label{adj_bias_b}
\textrm{B}^{S+}_2 \left( \rb, \x \right) &=& \frac{ \beta_z \left( \sigmainvnorm{\boldone}^2 \sigmainvdot{\rb}{\z} - 	\sigmainvdot{\rb}{\boldone}	\sigmainvdot{\z}{\boldone} \right) } { \sigmainvnorm{\boldone}^2 \sigmainvnorm{\rb}^2 -  \left[ \sigmainvdot{\rb}{\boldone} \right]^2} \nonumber\\
&=& \frac{ \beta_z \left( \sigmainvnorm{\boldone}^2 \sigmainvdot{\rb}{\z} - 	\sigmainvdot{\rb}{\boldone}	\sigmainvdot{\z}{\boldone} \right) } { \sigmainvnorm{\boldone}^2 \sigmainvnorm{\x}^2 -  \left[ \sigmainvdot{\x}{\boldone} \right]^2} \nonumber \\
&=&  \frac{ \beta_z \left( \sigmainvnorm{\boldone}^2 \sigmainvdot{\x}{\z} - 	\sigmainvdot{\x}{\boldone}	\sigmainvdot{\z}{\boldone} \right) } { \sigmainvnorm{\boldone}^2 \sigmainvnorm{\x}^2 -  \left[ \sigmainvdot{\x}{\boldone} \right]^2} 
\end{eqnarray}

Thus, combining our results from \eqref{adj_bias_a} and \eqref{adj_bias_b}:

\begin{eqnarray*}
 	\textrm{E} \left(  \hat{\bm{\beta}}^{S+}  \right) - \beta_x = A^{S+}_2 \left( \rb, \x \right) - \beta_x + B^{S+}_2 \left( \rb, \x \right) =  \\   
 	 \frac{ \beta_z \left( \sigmainvnorm{\boldone}^2 \sigmainvdot{\x}{\z} - 	\sigmainvdot{\x}{\boldone}	\sigmainvdot{\z}{\boldone} \right) } { \sigmainvnorm{\boldone}^2 \sigmainvnorm{\x}^2 -  \left[ \sigmainvdot{\x}{\boldone} \right]^2}. 
\end{eqnarray*}

\subsection{Derivation of Bias for $\hat{\beta}_x^{gSEM}$} \label{app:obsgsemfixed}

We assume that $\rb$ and $\ry$ are known residuals from separate spatial models fit with $\x$ and $\y$ as the response variables, respectively. 
We assume the models used to obtain these residuals gave an estimates of $\hat{\x} = \hat{\bm S}_x \x$ and $\hat{\y} = \hat{\bm S}_y \y$ so that $\rb = \left[\bm{I} - \hat{\bm S}_x\right]\x$ and $\ry = \left[\bm{I} - \hat{\bm S}_y\right]\y$ for some hat matrices $\hat{\bm S}_x$ and $\hat{\bm S}_y$.  For the following, we denote $\rone = \left[ \boldone ~ \rb \right]$.

\begin{eqnarray}
	\textrm{E} \left(  \hat{\bm{\beta}}^{gSEM}  \right)&=&  \left( {\rone}^T \rone \right)^{-1} {\rone}^T \ry\\
  &=&\left( {\rone}^T \rone \right)^{-1} {\rone}^T  \left(\bm{I} - \hat{\bm{S}}_y\right) \left( \beta_0 \boldone + \beta_x \x + \beta_z  \z \right) \nonumber \\ 
	&=&  \left( {\rone}^T \rone \right)^{-1} {\rone}^T \left(\bm{I} - \bm{S}_y\right) \left( \beta_0 \boldone + \beta_x \x  \right)  +  \left( {\rone}^T \rone \right)^{-1} {\rone}^T \left(\bm{I} -\hat{\bm{S}}_y\right) \left( \beta_z  \z \right) \nonumber \\ &=&
	\textrm{A}^{gSEM} \left( \rb, \x \right) + \textrm{B}^{gSEM} \left( \rb, \x \right) .\label{eq:gsem_bias}
\end{eqnarray}

\paragraph{Simplifying $\textrm{A}^{gSEM} \left( \rb, \x \right)$} In this sub-section, we focus on the first term of \eqref{eq:gsem_bias}.  
\begin{eqnarray}
\textrm{A}^{gSEM} \left( \rb, \x \right) &=& \left( {\rone}^T \rone \right)^{-1} {\rone}^T \left(\bm{I} - \hat{\bm{S}}_y\right) \left( \beta_0 \boldone + \beta_x \x  \right) \label{gsem_bias_a} \\
&=&  \left[ \left( \begin{bmatrix}
 	{\boldone}^T  \boldone &  {\boldone}^T  \rb \\
 	{\rb}^T  \boldone & {\rb}^T  \rb \\
 \end{bmatrix} \right)^{-1} {\rone}^T \left(\bm{I} - \hat{\bm{S}}_y\right) \boldone \beta_0 \,+  \right.  \nonumber \\  & & \left. \left( \begin{bmatrix}
 {\boldone}^T  \boldone &  {\boldone}^T  \rb \\
 {\rb}^T   \boldone & {\rb}^T  \rb \\
\end{bmatrix} \right)^{-1} {\rone}^T \left(\bm{I} - \hat{\bm{S}}_y\right) \x \beta_x \right] \nonumber \\   
&=& \frac{1}{	{\boldone}^T  \boldone {\rb}^T   \rb -   {\boldone}^T   \rb	{\rb}^T   \boldone  }  \times \nonumber \\ & & \left[  \begin{bmatrix}
	{\rb}^T \rb 	{\boldone}^T \left(\bm{I} - \hat{\bm{S}}_y\right) \boldone - {\boldone}^T  \rb 	{\rb}^T \left(\bm{I} - \hat{\bm{S}}_y\right) \boldone \\
	{\boldone}^T  \boldone  	{\rb}^T \left(\bm{I} - \hat{\bm{S}}_y\right)\boldone 	-	{\rb}^T  \boldone 	{\boldone}^T \left(\bm{I} - \hat{\bm{S}}_y\right) \boldone  \\
\end{bmatrix}  \beta_0 +  \right.\nonumber \\
& & \left.   \begin{bmatrix}
	{\rb}^T  \rb 	{\boldone}^T \left(\bm{I} - \hat{\bm{S}}_y\right) \x - {\boldone}^T   \rb 	{\rb}^T \left(\bm{I} - \hat{\bm{S}}_y\right) \x \\
	{\boldone}^T \boldone 	{\rb}^T \left(\bm{I} - \hat{\bm{S}}_y\right) \x 	-	{\rb}^T   \boldone 	{\boldone}^T \left(\bm{I} - \hat{\bm{S}}_y\right) \x \\ 
\end{bmatrix}  \beta_x \right] \nonumber \\
&=& \frac{1}{ {||\boldone||}^2 {||\rb||}^2 -  \left[ \langle \rb, \boldone\rangle \right]^2  }  \times \nonumber \\ & & \left[  \begin{bmatrix}
	{||\boldone||_{(\bm{I}-\hat{\bm{S}}_y)}^2} {||\rb||}^2 -  \langle\boldone, \rb\rangle \ISdot{\rb}{\boldone}\\
	0  \\
\end{bmatrix}  \beta_0 +  \right.\nonumber \\
& & \left.   \begin{bmatrix}
	\euclnorm{\rb}^2  \ISdot{\x}{\boldone} - \euclmetric{\rb}{\boldone} \ISdot{\rb}{\x} \\
	\euclnorm{\boldone}^2 	\ISdot{\rb}{\x} 	-	\euclmetric{\rb}{\boldone} \ISdot{\x}{\boldone} \\ 
\end{bmatrix}  \beta_x \right] \nonumber
\end{eqnarray}

Now, we restrict our attention to the second component: 
\begin{eqnarray} \label{gsem_bias_a2}
\textrm{A}^{gSEM}_2 \left( \rb, \x \right) &=& \frac{ \beta_x \left( \euclnorm{\boldone}^2 	\ISdot{\rb}{\x} 	-	\euclmetric{\rb}{\boldone} \ISdot{\x}{\boldone}\right) } {  \euclnorm{\boldone}^2 \euclnorm{\rb}^2 -  \left[ \euclmetric{\rb}{\boldone} \right]^2 }.
\end{eqnarray}

\paragraph{Simplifying $\textrm{B}^{gSEM} \left( \rb, \x \right)$}
In this sub-section, we focus on the second term of \eqref{eq:gsem_bias}. Skipping over simplifications that are equivalent to simplifications in \eqref{gsem_bias_a}: 

\begin{eqnarray*}
\textrm{B}^{gSEM} \left( \rb, \x \right) &=& \left( {\rone}^T  \rone \right)^{-1} {\rone}^T (\bm{I} - \hat{\bm{S}}_y) \left( \beta_z  \z \right)   \\
&=&   \frac{\beta_z}{ \euclnorm{\boldone}^2 \euclnorm{\rb}^2 -  \left[ \euclmetric{\rb}{\boldone} \right]^2  }   \begin{bmatrix}
		\euclnorm{\rb}^2 \ISdot{\z}{\boldone}  - \euclmetric{\rb}{\boldone} \ISdot{\rb}{\z}\\
	  \euclnorm{\boldone}^2 \ISdot{\rb}{\z} - 	\euclmetric{\rb}{\boldone}	\ISdot{\z}{\boldone} \\
	\end{bmatrix}. 
\end{eqnarray*}

Note, that restricting our attention to the second component, we can further simplify as follows:
\begin{eqnarray} \label{gsem_bias_b2}
\textrm{B}^{gSEM}_2 \left( \rb, \x \right) &=& \frac{ \beta_z \left( \euclnorm{\boldone}^2 \ISdot{\rb}{\z} - 	\euclmetric{\rb}{\boldone}	\ISdot{\z}{\boldone} \right) } { \euclnorm{\boldone}^2 \euclnorm{\rb}^2 -  \left[ \euclmetric{\rb}{\boldone} \right]^2}.
\end{eqnarray}

\paragraph{Bias for $\beta_x$}
To obtain the bias for the gSEM estimate of $\beta_x$, we combine our results from \eqref{gsem_bias_a2} and \eqref{gsem_bias_b2} and subtract $\beta_x$.

\section{Additional Simulation Results}\label{app:sim}

\begin{center}
\begin{minipage}{\textwidth}
\includegraphics[width=\linewidth]{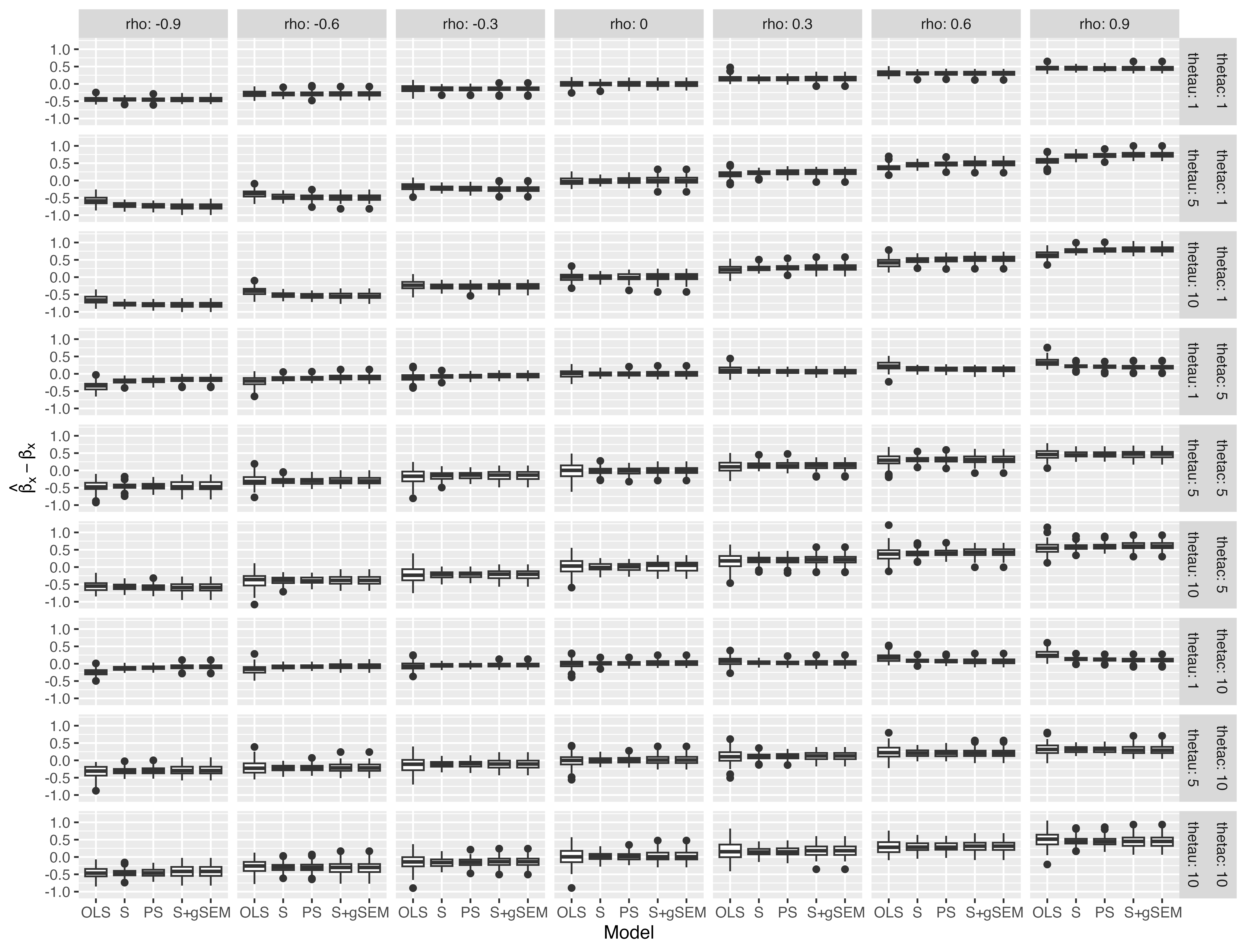}
\end{minipage}\hfill
\captionof{figure}{Boxplots of the estimate errors for different values of $(\theta_c, \theta_u, \rho)$ and five models. Plot made with the \texttt{ggplot2} package \citep{ggplot2}. }
\label{fig:bias_additional}
\end{center}

\end{document}